\newtheorem{theorem}{Theorem}[section]
\newtheorem{proposition}[theorem]{Proposition}
\newtheorem{lemma}[theorem]{Lemma}
\newtheorem{corollary}[theorem]{Corollary}
\newtheorem{definition}[theorem]{Definition}
\newcommand{\bE}{\ensuremath{\mathbf{E}}}
\begin{document}

\title[Parallel algorithms and concentration bounds for the LLL via witness DAGs]{Parallel algorithms and concentration bounds for the Lov\'{a}sz Local Lemma via witness DAGs}
\author[Bernhard Haeupler and David G. Harris]{
{\sc Bernhard Haeupler}$^{1}$
\and
{\sc David G.~Harris}$^{2}$
}

\setcounter{footnote}{0}

\addtocounter{footnote}{1}
\footnotetext{School of Computer Science,
Carnegie Mellon University. Research supported in part by NSF Awards CCF-1527110 and CCF-1618280. Email: \texttt{haeupler@cs.cmu.edu}.}

\addtocounter{footnote}{1}
\footnotetext{Department of Computer Science, University of Maryland, 
College Park, MD 20742. 
Research supported in part by NSF Awards CNS-1010789 and CCF-1422569.
Email: \texttt{davidgharris29@gmail.com}.}

\date{}
\maketitle

\begin{abstract}
The Lov\'{a}sz Local Lemma (LLL) is a cornerstone principle in the probabilistic method of combinatorics, and a seminal algorithm of Moser \& Tardos (2010) provides an efficient randomized algorithm to implement it. This can be parallelized
to give an algorithm that uses polynomially many processors and runs in $O(\log^3 n)$ time on an EREW PRAM, stemming from $O(\log n)$ adaptive computations of a maximal independent set (MIS). Chung et al. (2014) developed faster local and parallel algorithms, potentially running in time $O(\log^2 n)$, but these algorithms require more stringent conditions than the LLL.

We give a new parallel algorithm that works under essentially the same conditions as the original algorithm of Moser \& Tardos but uses only a single MIS computation, thus running in $O(\log^2 n)$ time on an EREW PRAM. This can be derandomized to give an NC algorithm running in time $O(\log^2 n)$ as well, speeding up a previous NC LLL algorithm of Chandrasekaran et al. (2013).

We also provide improved and tighter bounds on the run-times of the sequential and parallel resampling-based algorithms originally developed by Moser \& Tardos. These apply to any problem instance in which the tighter Shearer LLL criterion is satisfied.
\end{abstract}

\section{Introduction}
The Lov\'{a}sz Local Lemma (LLL), first introduced in \cite{lll-orig}, is a cornerstone principle in probability theory. In its simplest symmetric form, it states that if one has a probability space $\Omega$ and a set of $m$ ``bad'' events $\mathcal B$ in that space, and each such event has probability $P_{\Omega}(B) \leq p$; and each event depends on at most $d$ events (including itself), then under the criterion 
\begin{equation}
\label{a1Alll-cond}
e p d \leq 1
\end{equation}
there is a positive probability that no bad events occurs. If equation (\ref{a1Alll-cond}) holds, we say \emph{the symmetric LLL criterion is satisfied}.

Although the LLL applies to general probability spaces, and the notion of dependency for a general space can be complicated, most applications in combinatorics use a simpler setting in which the probability space $\Omega$ is determined by a series of discrete variables $X_1, \dots, X_n$, each of which is drawn independently with $P_{\Omega} (X_i = j) = p_{ij}$. Each bad event $B \in \mathcal B$ is a Boolean function of a subset of variables $S_B \subseteq [n]$. Then events $B, B'$ are dependent (denoted $B \sim B'$) if they share a common variable, i.e., $S_B \cap S_{B'} \neq \emptyset$; note that $B \sim B$. We say a set of bad events $I \subseteq \mathcal B$ is \emph{independent} if $B \not \sim B'$ for all distinct pairs $B, B' \in I$.  We say a variable assignment $X$ \emph{avoids} $\mathcal B$ if every $B \in \mathcal B$ is false on $X$.

There is a more general form of the LLL, known as the \emph{asymmetric LLL}, which can be stated as follows. Suppose that there is a weighting function $x: \mathcal B \rightarrow (0,1)$ with the following property:
\begin{equation}
\label{Alll-acond}
\forall B \in \mathcal B \qquad P_{\Omega}(B) \leq x(B) \prod_{\substack{A \sim B \\ A \neq B}} (1 - x(A))
\end{equation}
then there is a positive probability of avoiding all bad events. The symmetric LLL is a special case of this, derived by setting $x(B) = e p$. Both of these criteria are special cases of a yet more powerful criterion, known as the \emph{Shearer criterion}. This criterion requires a number of definitions to state; we discuss this further in Section~\ref{shearer-sec}.

The probability of avoiding all bad events, while non-zero, is usually exponentially small; so the LLL does not directly lead to efficient algorithms. Moser \& Tardos \cite{moser-tardos} introduced a remarkable randomized procedure, which we refer to as the \emph{Resampling Algorithm}, which gives polynomial-time algorithms for nearly all LLL applications:

\begin{algorithm}[H]
\centering
\begin{algorithmic}[1]
\State Draw all variables $X \sim \Omega$.
\While{some bad events are true}
\State Choose some true $B \in \mathcal B$ arbitrarily.
\State Resample the variables in $S_B$, independently from the distribution $\Omega$.
\EndWhile
\end{algorithmic}
\caption{The sequential Resampling Algorithm}
\end{algorithm}

This resampling algorithm terminates with probability one under the same condition as the probabilistic LLL, viz. satisfying the Shearer criterion. The expected number of resamplings is typically polynomial in the input parameters.

We note that this procedure can be useful even when the total number of bad events is exponentially large. At any stage of this algorithm, the expected number of bad events which are currently true (and thus need to be processed), is still polynomial. If we have a subroutine which lists the currently-true bad events in time $\text{poly}(n)$, then the overall run-time of this algorithm can still be polynomial in $n$. We refer to such a subroutine as a \emph{Bad-Event Checker}. These are typically very problem-specific; see \cite{hss} for more details.

\subsection{Parallel algorithms for the LLL}
Moser \& Tardos also gave a simple RNC algorithm for the LLL, shown below as Algorithm~\ref{mtparalg}. Unlike their sequential algorithm, this requires a small slack in the LLL criterion.
In the symmetric setting, this criterion is 
$$
e p d (1 + \epsilon) \leq 1
$$
and in the asymmetric setting, it is given by
$$
\forall B \in \mathcal B \qquad (1 + \epsilon) P_{\Omega}(B) \leq x(B) \prod_{\substack{A \sim B \\ A \neq B}} (1 - x(A))
$$
for some parameter $\epsilon \in (0,1/2)$.  We refer to these stronger criteria as \emph{$\epsilon$-slack.}
\vspace{-0.1in}
\begin{algorithm}[H]
\centering
\begin{algorithmic}[1]
\State Draw all variables $X \sim \Omega$.
\While{some bad events are true}
\State Choose a maximal independent set $I$ of bad events which are currently true.
\State Resample, in parallel, all the variables $\bigcup_{B \in I} S_B$ from the distribution $\Omega$.
\EndWhile
\end{algorithmic}
\caption{The Parallel Resampling Algorithm}
\label{mtparalg}
\end{algorithm}
\vspace{-0.1in}
Moser \& Tardos showed that this algorithm terminates after $O\bigl( \epsilon^{-1} \log (n \sum_{B \in \mathcal B} \frac{x(B)}{1-x(B)}) \bigr)$ rounds with high probability.\footnote{We say that an event occurs \emph{with high probability} (abbreviated whp), if it occurs with probability $\geq 1 - n^{-\Omega(1)}$.} In each round, there are two main computational tasks: one must execute a parallel Bad-Event Checker and one must find a maximal independent set (MIS) among the bad events which are currently true. 

Both of these tasks can be implemented in parallel models of computation. The most natural complexity parameter in these settings is the number of variables $n$, since the final output of the algorithm (i.e. a satisfying solution) will require at least $n$ bits. This paper will focus on the PRAM (Parallel Random Access Machine) model, in which we are allowed $\text{poly}(n)$ processors and $\text{polylog}(n)$ time. There are a number of variants of the PRAM model, which differ in (among other things) the ability and semantics of multiple processors writing simultaneously to the same memory cell. Two important cases are the CRCW model, in which multiple cells can simultaneously write (the same value) to a cell, and the EREW model, in which each memory cell can only be used by a single processor at a time. Nearly all ``housekeeping'' operations (sorting and searching lists, etc.) can also be implemented in $O(\log n)$ time using standard techniques in either model.

A Bad-Event Checker can typically be implemented in time $O(\log n)$.  The step of finding an MIS can potentially become a computational bottleneck. In \cite{luby-mis}, Luby introduced randomized algorithms for computing the MIS of a graph $G = (V,E)$ using $\text{poly}(|V|)$ processors; in the CRCW model of computation, this algorithm requires time $O(\log |V|)$ while in other models such as EREW it requires time $O(\log^2 |V|)$. Luby also discussed a deterministic algorithm using $O(\log^2 |V|)$ time (in either model). 

Applying Luby's MIS algorithm to the Resampling Algorithm yields an overall run-time of $O(\epsilon^{-1} \log^3( n \sum_{B \in \mathcal B} \frac{x(B)}{1-x(B)}))$ (on EREW) or $O(\epsilon^{-1} \log^2( n \sum_{B \in \mathcal B} \frac{x(B)}{1-x(B)}))$ (on CRCW) and the overall processor complexity is $\text{poly} (n, \sum_{B \in \mathcal B} \frac{x(B)}{1-x(B)}))$.\footnote{The weighting function $x(B)$ plays a somewhat mysterious role in the LLL, and it can be confusing to have it appear in the complexity bounds for Resampling Algorithm. In most (although not all) applications, the expression $\sum_{B \in \mathcal B} \frac{x(B)}{1-x(B)})$ can be bounded as $\text{poly}(n)$.}

The computation of an MIS is relatively costly. In \cite{pettie}, Chung et al. gave several alternative algorithms for the symmetric LLL which either avoid this step or reduce its cost. One algorithm, based on bad events choosing random priorities and resampling a bad event if it has earlier priority than its neighbors, runs in $O(\epsilon^{-1} \log m)$ distributed rounds. This can be converted to a PRAM algorithm using $O(\epsilon^{-1} \log^2 m)$ time (in EREW) and $O(\epsilon^{-1} \log m)$ time (in CRCW). Unfortunately, this algorithm requires a stronger criterion than the LLL: namely, in the symmetric setting, it requires that $e p d^2 \leq (1-\epsilon)$. In many applications of the LLL, particularly those based on Chernoff bounds for the sum of independent random variables, this stricter criterion leads to qualitatively similar results as the symmetric LLL.  In other cases, this  loses much critical precision leading to weaker results. In particular, their bound essentially corresponds to the state of the art \cite{moser} before the break-through results of Moser and Moser \& Tardos \cite{moser, moser-tardos}. 

Another parallel algorithm of Chung et al. requires only the standard symmetric LLL criterion and runs in  $O(\epsilon^{-1} (\log^2 d) (\log m))$ rounds, subsequently reduced to $O(\epsilon^{-1} (\log d) (\log m))$ rounds by \cite{mohsen}.  When $d$ is polynomial in $m$, these do not improve on the Moser-Tardos algorithm. More recent distributed algorithms for the LLL such as \cite{mohsen2} do not appear to lead to PRAM algorithms.

In \cite{moser-tardos}, a deterministic parallel (NC) algorithm for the LLL was given, under the assumption that $d = O(1)$. This was strengthened in \cite{det-lll} to allow arbitrary $d$ under a stronger LLL criterion $e p d^{1+\epsilon} \leq 1$, with a complexity of $O(\epsilon^{-1} \log^3 (mn))$ time and $(mn)^{O(1/\epsilon)}$ processors (in either CRCW or EREW). This can be extended to an asymmetric setting, but there are many more technical conditions on the precise form of $\mathcal B$.

\subsection{Overview of our results}
In Section~\ref{a1Asec2}, we introduce a new theoretical structure to analyze the behavior of the Resampling Algorithm, which we refer to as the \emph{witness DAG}. This provides an explanation or history for some or all of the resamplings that occur. This generalizes the notion of a witness tree, introduced by Moser \& Tardos in \cite{moser-tardos}, which only provides the history of a single resampling.  We use this tool to show stronger bounds on the Parallel Resampling Algorithm given by Moser \& Tardos:
\begin{theorem}
Suppose that the Shearer criterion is satisfied with $\epsilon$-slack. Then whp the Parallel Resampling Algorithm terminates after $O(\epsilon^{-1} \log n)$ rounds.

Suppose furthermore we have a Bad-Event Checker which uses polynomial processors and $T$ time. Then the total complexity of the Parallel Resampling Algorithm is  $\epsilon^{-1} n^{O(1)}$ processors, and $O(\frac{(\log n) (T + \log^2 n)}{\epsilon})$ time (in EREW model) or $O(\frac{(\log n) (T + \log n)}{\epsilon})$ time (in CRCW model).
\end{theorem}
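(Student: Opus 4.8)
The plan is to lean on the witness-DAG machinery of Section~\ref{a1Asec2} for the round bound, and then bolt the standard PRAM bookkeeping onto it. Writing $B_v$ for the bad event labeling a vertex $v$, I will use two facts about the witness DAG $G$ that records an execution of the Parallel Resampling Algorithm: (i) if the algorithm runs for at least $t$ rounds, then there is a resampling $v_t$ performed in round $t$ and $G$ contains a directed chain $v_1 \to v_2 \to \dots \to v_t$ ending at $v_t$ with $B_{v_j} \sim B_{v_{j+1}}$ for every $j$ --- the familiar reason being that an event resampled in round $r$ must be adjacent, in the dependency graph, to some event resampled in round $r-1$ (either by maximality of $I_{r-1}$, if it was already true then, or because a round-$(r-1)$ resample turned it true), reorganized through the DAG; and (ii) for any fixed sequence $(B_1,\dots,B_t)$ with $B_j \sim B_{j+1}$, the probability that $G$ contains a chain carrying these labels is at most $\prod_{j=1}^t P_{\Omega}(B_j)$, the chain-level version of the witness-tree lemma.

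The one genuinely new point is how to union-bound over chains. Indexing the top vertex $v_t$ by its bad event $B_{v_t}$ would reintroduce a factor $\sum_{B}\tfrac{x(B)}{1-x(B)}$, which is exponentially large when $\mathcal B$ is exponentially large; instead I index $v_t$ by a \emph{variable} $i \in S_{B_{v_t}}$, of which there are only $n$. Combining (i) and (ii),
\[
\Pr[\text{the algorithm runs} \ge t \text{ rounds}] \;\le\; \sum_{i=1}^{n}\ \sum_{\substack{(B_1,\dots,B_t)\,:\, i \in S_{B_t}\\ B_j \sim B_{j+1}\ \forall j}}\ \prod_{j=1}^{t} P_{\Omega}(B_j).
\]
For each fixed $i$ I bound the inner sum by peeling levels off the bottom in the usual way: under the $\epsilon$-slack Shearer criterion, the standard witness-structure summation estimates of Section~\ref{shearer-sec} make each additional level cost a multiplicative factor at most $1-\Omega(\epsilon)$, and the final summation over the bad events that all contain the fixed variable $i$ contributes only a constant, because those events form a clique in the dependency graph and the Shearer condition restricted to a clique $K$ forces $\sum_{B \in K} P_{\Omega}(B) < 1$ (equivalently, this is the asymmetric-LLL clique bound). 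This gives $\Pr[\text{runs} \ge t] \le n\,(1-\Omega(\epsilon))^{t}$, which is $n^{-\Omega(1)}$ once $t = O(\epsilon^{-1}\log n)$; summing the same geometric tail also yields expected round count $O(\epsilon^{-1}\log n)$.

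For the complexity statement I analyze a single round. Running the Bad-Event Checker costs $T$ time and polynomially many processors and returns the list $L$ of currently-true bad events, of size $n^{O(1)}$ (a poly-processor Checker cannot output more, and the same chain estimate bounds $\mathbf{E}[|L|]$); forming the conflict graph on $L$, with two events adjacent iff they share a variable, is routine housekeeping in $O(\log n)$ time; computing a maximal independent set $I$ of that graph with Luby's algorithm takes $O(\log^2 n)$ time on EREW and $O(\log n)$ time on CRCW since $|L| = n^{O(1)}$; and resampling $\bigcup_{B\in I} S_B$ is another $O(\log n)$ time with polynomially many processors. Thus one round runs in $O(T + \log^2 n)$ time on EREW or $O(T+\log n)$ time on CRCW with $n^{O(1)}$ processors; multiplying by the $O(\epsilon^{-1}\log n)$ round bound gives total time $O(\epsilon^{-1}(\log n)(T+\log^2 n))$ (EREW, in expectation over the number of rounds) and $O(\epsilon^{-1}(\log n)(T+\log n))$ (CRCW), with $\epsilon^{-1}n^{O(1)}$ processors.

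The crux --- and the place I expect the real work --- is decoupling the round bound from $|\mathcal B|$. The plain witness-tree argument pays a union bound over all bad events and so only gives $O(\epsilon^{-1}\log(n\sum_B \tfrac{x(B)}{1-x(B)}))$ rounds, degrading to $\mathrm{poly}(n)$ when $\mathcal B$ is exponential; getting to $O(\epsilon^{-1}\log n)$ needs both anchoring the witness chain at a variable and the clique bound for $\sum_{B : i\in S_B}P_{\Omega}(B)$. The subtlety most in need of care is verifying that the witness-DAG occurrence bound (ii) really does hold at the granularity of a single chain, and not merely for a complete witness tree rooted at a resampled event; once that is in hand, the rest is the geometric-series estimate and the catalogue of PRAM primitives.
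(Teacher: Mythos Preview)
Your high-level strategy---anchor the union bound at a variable rather than at a bad event, exploit the fact that the events containing a fixed variable form a dependency-clique, and extract geometric decay from the $\epsilon$-slack---is exactly the right one, and it is what the paper does. But the mechanism you propose breaks at claim (ii), which you yourself flag as the delicate point: it is \emph{false} that the probability of $\hat G$ containing a chain with labels $(B_1,\dots,B_t)$ is bounded by $\prod_j P_\Omega(B_j)$. The compatibility bound (Proposition~\ref{a1wprop}) applies only to structures that are \emph{prefixes} of $\hat G$, i.e.\ downward-closed subgraphs. A bare chain $v_1\to\cdots\to v_t$ is not downward-closed in general: $v_t$ may have predecessors in $\hat G$ off the chain, and those predecessors advance the resampling-table indices used at $v_t$. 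Concretely, take $A\sim B\sim C$ with $A\not\sim C$, say $A$ depends only on $X$, $C$ only on $Y$, and $B$ on both. An execution that resamples $A$ and $C$ in round~1 and $B$ in round~2 exhibits the chain $A\to B$, but the two-node WD on $\{A,B\}$ computes $X^{v_B}(Y)=R(Y,1)$ while the actual configuration at the time $B$ was resampled has $Y=R(Y,2)$; so $B$ need not hold on $X^{v_B}$. A short computation with explicit events shows the chain $A\to B$ can appear with probability strictly exceeding $P_\Omega(A)P_\Omega(B)$.

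There is a second, independent gap in the ``peeling'' step. Even granting (ii), summing $\prod_j P_\Omega(B_j)$ over all label-chains does not yield a per-level factor $1-\Omega(\epsilon)$ under the Shearer criterion, because $\sum_{B'\sim B}P_\Omega(B')$ is not bounded in general (a star center with many leaves already makes this sum arbitrarily large while Shearer still holds). The clique bound you invoke controls $\sum_{B:\,i\in S_B}P_\Omega(B)$, but the neighbours of a fixed $B$ do not form a clique.

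The paper repairs both issues by replacing the chain with the full single-sink WD $\hat G(v_t)$, which \emph{is} a prefix of $\hat G$ (Propositions~\ref{a1Agcompat}, \ref{a1xprop1}) and has at least $t$ nodes. The decay is then obtained not by level-by-level peeling but by a global reweighting: for $\rho<\epsilon$ one writes $\sum_{|\tau|\ge r}w(\tau)\le(1+\rho)^{-r}\sum_\tau a_\rho(\tau)$, and the clique estimate (Proposition~\ref{a1Aweight-bound1}, Corollary~\ref{a1Aweight-bound2}) gives $\sum_{B:\,i\in S_B}\sum_{\tau\in\Gamma(B)}a_\rho(\tau)\le\frac{1+\rho}{\epsilon-\rho}$, summed over the $n$ variables. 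Optimising $\rho$ yields Proposition~\ref{a1Abound-prop2} and the $O(\epsilon^{-1}\log n)$ round bound. Your per-round complexity accounting is fine and matches the paper's.
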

These bounds are independent of the LLL weighting function $x(B)$ and the number of bad events $m$. These simplify similar bounds shown in Kolipaka \& Szegedy \cite{kolipaka}, which show that Parallel Resampling Algorithm terminates, with constant probability, after $O(\epsilon^{-1} \log (n/\epsilon))$ rounds.\footnote{Note that Kolipaka \& Szegedy use $m$ for the number of variables and $n$ for the number of bad events, while we do the opposite. In this paper, we have translated all of their results into our notation. The reader should be careful to keep this in mind when reading their original paper.}

In Sections~\ref{a1Asec4} and \ref{a1Asec5}, we develop a new parallel algorithm for the LLL. 
The basic idea of this algorithm is to select a random resampling table and then precompute all possible resampling-paths compatible with it. Surprisingly, this larger collection, which in a sense represents all possible trajectories of the Resampling Algorithm, can still be computed relatively quickly (in approximately $O(\epsilon^{-1} \log^2 n)$ time). Next, we find a \emph{single} MIS of this larger collection, which determines the complete set of resamplings. It is this reduction from $\epsilon^{-1} \log n$ separate MIS algorithms to just one that is the key to our improved run-time. 

We will later analyze this parallel algorithm in terms of the Shearer criterion, but this requires many preliminary definitions. We give a simpler statement of our new algorithm for the symmetric LLL criterion:
\begin{theorem}
Suppose that we have a Bad-Event Checker using $O(\log mn)$ time and $\text{poly}(m,n)$ processors. Suppose that each bad event $B$ has $P_{\Omega}(B) \leq p$ and is dependent with at most $d$ bad events and that $e p d (1 + \epsilon) \leq 1$ 
for some $\epsilon > 0$. Then, there is an EREW PRAM algorithm to find a configuration avoiding $\mathcal B$ whp using $\tilde O(\epsilon^{-1} \log(m n) \log n)$ time and $\text{poly}(m,n)$ processors.\footnote{The $\tilde O$ notation hides polylogarithmic factors, i.e. $\tilde O(t) = t (\log t)^{O(1)}$.}
\end{theorem}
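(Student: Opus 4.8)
The plan is to run a single, \emph{offline} version of the Parallel Resampling Algorithm against a pre-sampled resampling table. First I would draw, for each variable $X_i$, an independent sequence $R_i = (R_i^1, R_i^2, \dots)$ from the marginal of $X_i$ under $\Omega$, with the convention that the $t$-th resampling of $X_i$ installs the value $R_i^t$; once the resamplings are fixed, variable $i$ takes the value at the position equal to the number of resamplings affecting it. Relative to such a table, a \emph{resampling path} is a sequence $B_1, \dots, B_k$ of bad events such that, reading the table in the order forced by resampling $S_{B_1}, \dots, S_{B_{j-1}}$, the event $B_j$ is true; these are exactly the linearizations of the witness-DAGs of Section~\ref{a1Asec2}, and every trajectory of the sequential Resampling Algorithm run on the fixed table is assembled out of them. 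The algorithm precomputes the entire collection of resampling paths compatible with the table, represented compactly as a polynomial-size DAG; builds the conflict graph in which two paths are adjacent when they cannot both be realized on the table; and then takes a \emph{single} maximal independent set, which picks out a family of mutually compatible paths that jointly pin down the complete list of resamplings and hence the output assignment.

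The first ingredient is a depth bound. Since $e p d (1+\epsilon) \le 1$ implies the Shearer criterion with $\Omega(\epsilon)$-slack, the bound on the Parallel Resampling Algorithm established earlier --- and, more precisely, the witness-DAG first-moment estimates behind it --- shows that whp every resampling path capable of affecting the output has length $O(\epsilon^{-1} \log n)$; equivalently, the offline run on the fixed table halts within $O(\epsilon^{-1}\log n)$ rounds. The same estimates, summed over potential witness-DAGs, keep the precomputed collection polynomially sized whp, so it fits on a $\mathrm{poly}(m,n)$-processor machine.

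Next I would carry out the precomputation in $O(\epsilon^{-1}\log n)$ stages. At stage $t$, each surviving path prefix carries its induced partial assignment; I would call the Bad-Event Checker (time $O(\log mn)$, $\mathrm{poly}(m,n)$ processors) on all of these in parallel to find which bad events are currently true, extend each prefix by every such event, and prune prefixes that are inconsistent or already dominated, maintaining cell-read orders by pointer jumping. Standard EREW housekeeping (sorting, duplicate removal) costs $O(\log mn)$ per stage, for a total of $O(\epsilon^{-1}\log n \log mn)$ time. Finally I would form the conflict graph on the precomputed paths and invoke Luby's MIS algorithm exactly once; this single MIS on a polynomially sized graph costs $O(\log^2 n)$ additional EREW time, which is subsumed in the total, and the $\tilde O$ absorbs the remaining polylogarithmic bookkeeping overhead. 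It is this reduction from $\Theta(\epsilon^{-1}\log n)$ adaptive MIS computations to one that buys the improved run-time.

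The heart of the argument, and the step I expect to be the main obstacle, is correctness: proving that a \emph{single} MIS of the precomputed collection reproduces a \emph{legal} execution of the sequential Resampling Algorithm on the fixed table, so that the Moser--Tardos correctness guarantee forces the output to avoid $\mathcal B$. One must show the selected paths are mutually ``reachable'' --- they can be interleaved into one consistent history on the table --- that maximality of the MIS leaves no triggered bad event uncovered (such an event would extend some selected path, violating maximality), and that the compact DAG faithfully encodes the exponentially many raw trajectories. This is precisely where the witness-DAG machinery of Section~\ref{a1Asec2} does the real work; with it in hand, the depth bound, the polynomial-size bound, and the complexity accounting above become routine.
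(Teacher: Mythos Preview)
Your high-level architecture matches the paper's: draw a resampling table, precompute the structures compatible with it, take a \emph{single} MIS on the consistency graph, and read off the final configuration. But there is a real gap in the object you choose to enumerate and in how you enumerate it.

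You work with \emph{resampling paths}, i.e.\ linear sequences $B_1,\dots,B_k$ (linearizations of witness DAGs), and your enumeration extends each prefix by one true bad event per stage. Two problems arise. First, the number of linearizations compatible with $R$ is not polynomially bounded in general; what is bounded (in expectation by the parameter $W'$, and by $me$ under the symmetric criterion via Proposition~\ref{a1Aw1corr}) is the number of single-sink \emph{witness DAGs} compatible with $R$. The paper therefore enumerates DAGs, not paths. Second, your ``extend by one event'' rule is only one of two operations the paper's Algorithm~\ref{enum-par-alg1} performs: the other is a pairwise \emph{merge} $G_1 \vee G_2$ of consistent collectible WDs. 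Without the merge step you cannot build a single-sink WD whose root sits above several incomparable sub-DAGs; Proposition~\ref{enum-fk-prop} uses exactly this decomposition $G = G(v_1) \vee G(v_2,\dots,v_s)$ to show the enumeration is complete. Your phrase ``represented compactly as a polynomial-size DAG'' gestures at the right fix but does not supply it.

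You are right that correctness is the heart of the matter, but the machinery it requires is developed in Section~\ref{a1Asec4} (pairwise consistency of WDs, the merge operator $\vee$, the characterization via common prefixes in Proposition~\ref{a1alt-char}), not just Section~\ref{a1Asec2}. The paper's correctness proof (Proposition~\ref{a1Aconj-proof}) takes the MIS $\mathcal I \subseteq \Gamma^R$, forms $G = \bigvee \mathcal I$, defines $X^*(i) = R(i, |G[i]|+1)$, and argues that any true bad event on $X^*$ would yield a new single-sink WD consistent with every member of $\mathcal I$, contradicting maximality. Your sketch of ``mutually reachable paths that can be interleaved into one consistent history'' is the right intuition, but making it precise is exactly what the $\vee$ operator and the prefix lemmas of Section~\ref{a1Asec4} are for.
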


In Section~\ref{a1Asec:det}, we derandomize this algorithm under a slightly more stringent LLL criterion. The full statement of the result is somewhat complex, but a summary is that if $e p d^{1+\epsilon} < 1$ then we obtain a deterministic EREW algorithm using $O(\epsilon^{-1} \log^2 (mn))$ time and $(mn)^{O(1/\epsilon)}$ processors. This is NC for constant $\epsilon > 0$.

The following table summarizes previous and new parallel run-time bounds for the LLL. For simplicity, we state the symmetric form of the LLL criterion, although many of these algorithms are compatible with asymmetric LLL criteria as well. The run-time bounds are simplified for readability, omitting terms which are negligible in typical applications.

\begin{center}
\begin{tabular}{|c||c|c|c|}
\hline
Model & LLL criterion & Reference & Run-time \\
\hline
\hline
\multicolumn{4}{|c|}{Previous results} \\
\hline
Randomized CRCW PRAM & $e p d (1 + \epsilon) \leq 1$ & \cite{moser-tardos} &  $\epsilon^{-1} \log^2 m$ \\
Randomized EREW PRAM & $e p d (1 + \epsilon) \leq 1$ & \cite{moser-tardos} &  $\epsilon^{-1} \log^3 m$ \\
Deterministic EREW PRAM & $e p d^{1+\epsilon} \leq 1$ & \cite{det-lll} &  $\epsilon^{-1} \log^3 m$ \\
Randomized CRCW PRAM & $e p d^2  \leq 1 - \epsilon$ & \cite{pettie} &  $\epsilon^{-1} \log m$ \\
Randomized EREW PRAM & $e p d^2 \leq 1 - \epsilon$ & \cite{pettie} &  $\epsilon^{-1} \log^2 m$ \\
\hline
\hline
\multicolumn{4}{|c|}{This paper} \\
\hline
Randomized CRCW PRAM & $e p d (1 + \epsilon) \leq 1$ & Resampling Algorithm &  $\epsilon^{-1} \log^2 n$ \\
Randomized EREW PRAM & $e p d (1 + \epsilon) \leq 1$ & Resampling Algorithm &  $\epsilon^{-1} \log^3 n$ \\
Randomized EREW PRAM & $e p d (1 + \epsilon) \leq 1$ & New algorithm  &  $\epsilon^{-1} \log^2 m$ \\
Deterministic EREW PRAM & $e p d^{1+\epsilon} \leq 1$ & New algorithm  &  $\epsilon^{-1} \log^2 m$ \\
\hline
\end{tabular}
\end{center}

Although the main focus of this paper is on parallel algorithms, our techniques also lead to a new  and stronger concentration result for the run-time of the sequential Resampling Algorithm. The full statement appears in Section~\ref{a1Asec3}; we provide a summary here:
\begin{theorem}
Suppose that the asymmetric LLL criterion is satisfied with $\epsilon$-slack. Then whp the Resampling Algorithm performs $O\bigl( (\sum_B \frac{x(B)}{1 - x(B)}) + \frac{\log^2 n}{\epsilon}\bigr)$ resamplings. Alternatively, suppose that the symmetric LLL criterion $e p d \leq 1$ is satisfied. Then whp the Resampling Algorithm performs $O(n + d \log^2 n)$ resamplings.
\end{theorem}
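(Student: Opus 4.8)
The plan is to control the total number of resamplings $T$ through the full witness DAG $G^{\star}$ of the execution, whose vertex set is exactly the multiset of resamplings, so that $T = |G^{\star}|$. By the witness-DAG appearance lemma of Section~\ref{a1Asec2}, any fixed witness DAG $G$ appears with probability at most $\prod_{v \in G} P_{\Omega}(B_v)$, and every downward-closed sub-collection of $G^{\star}$ is again a witness DAG; in particular, attaching to each resampling its witness tree recovers the Moser--Tardos estimate $\mathbf{E}[T] \le W$, where $W := \sum_{B}\frac{x(B)}{1-x(B)}$. This already accounts for the ``$O(W)$'' term in expectation, so the real content is the high-probability tail.

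For the tail I would decompose $G^{\star}$ into its weakly connected components $G_1,\dots,G_c$, so that $T = \sum_{i=1}^{c} |G_i|$, and use the $\epsilon$-slack in two ways. First, each component is a connected appearing witness DAG; writing $\prod_{v}P_{\Omega}(B_v) = (1+\epsilon)^{-|G|}\prod_{v}\bigl((1+\epsilon)P_{\Omega}(B_v)\bigr)$ and noting that $(1+\epsilon)P_{\Omega}$ satisfies the plain asymmetric criterion with the same weights $x$, a Moser--Tardos-style summation over connected witness DAGs (the DAG analogue of the witness-tree sum) bounds the expected number of connected appearing witness DAGs on $s$ vertices by $(1+\epsilon)^{-\Omega(s)}\,\mathrm{poly}(n)$; hence whp \emph{every} component has at most $s_{\max} = O(\epsilon^{-1}\log n)$ vertices (consistent with the $O(\epsilon^{-1}\log n)$ bound on the number of parallel rounds, which upper-bounds the depth of any component). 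Conditioned on this event, $T$ is a sum of contributions lying in $[1,s_{\max}]$ whose total conditional mean is $O(W)$ and whose accumulated conditional variance is $O(s_{\max} W)$; exposing the resampling table (equivalently, completing components in the order their sources are resampled) turns $T$ into a bounded-increment supermartingale, and a Freedman/Bernstein inequality gives
\[
\Pr\bigl[\,T \ge C(W + s_{\max}\log n)\,\bigr] \le n^{-\Omega(1)},
\]
using $\sqrt{s_{\max} W \log n} \le \max(W,\,s_{\max}\log n)$. Since $s_{\max}\log n = O(\epsilon^{-1}\log^{2} n)$, this is exactly $O\bigl(W + \epsilon^{-1}\log^{2} n\bigr)$.

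The symmetric statement then follows as a corollary. Under $e p d \le 1$ take the uniform weights $x(B) \equiv 1/d$, which satisfy the asymmetric criterion with $\epsilon$-slack for $\epsilon = \Theta(1/d)$, because $e(1-1/d)^{d-1} = 1 + \Theta(1/d)$. Any variable lies in at most $d$ bad events (those events would otherwise form a dependency clique of size exceeding $d$), so $m \le \sum_{B}|S_B| = \sum_{i}|\{B : i \in S_B\}| \le nd$ and hence $W = \frac{m}{d-1} = O(n)$. Substituting $W = O(n)$ and $\epsilon^{-1} = O(d)$ into the bound above yields $O(n + d\log^{2} n)$.

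I expect the main obstacle to be the concentration step: the component sizes $|G_i|$ are neither independent nor an obvious martingale-difference sequence, so one must choose the right filtration (revealing resampling-table entries, or components in source order), verify both the per-step boundedness by $s_{\max}$ and the $O(s_{\max}W)$ bound on the accumulated conditional variance, and only then invoke a Bernstein-type tail bound. A secondary technical point is establishing the first ingredient, namely that the total appearance-weight of connected witness DAGs on $s$ vertices decays geometrically in $s$ with only a polynomial prefactor --- i.e., that the Moser--Tardos witness-tree summation survives the passage to connected DAGs despite the larger number of possible DAG shapes.
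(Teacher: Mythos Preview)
Your plan diverges from the paper's proof and carries two real gaps, both of which you flag but neither of which is easily closed.

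First, the component-size bound. You need every connected component of $\hat G$ to have at most $O(\epsilon^{-1}\log n)$ vertices, via a ``Moser--Tardos-style summation over connected witness DAGs.'' The paper's machinery (Proposition~\ref{a1Abound-prop2}, Corollary~\ref{a1Acor1}) only bounds the size of \emph{single-sink} WDs; a connected component can have many sinks, and its vertex set is the union of the single-sink prefixes below them. There is no analogue of the witness-tree sum for connected multi-sink DAGs established here: the relevant upper bound (Proposition~\ref{a1Asurj}) gives total weight $\prod_i \mu(B_i)$ for a fixed sink-set $\{B_1,\dots,B_k\}$, and summing this over all independent sink-sets yields a partition-function quantity that is not polynomially controlled. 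Even if your claim is true, it would require an argument not present in the paper.

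Second, the concentration step. Distinct components of $\hat G$ need not depend on disjoint portions of the resampling table: two components separated by an un-resampled ``boundary'' event $B$ are coupled through the non-occurrence of $B$, which reads variables shared with both sides. So there is no evident filtration under which $T$ becomes a process with increments bounded by $s_{\max}$ and predictable variance $O(s_{\max}W)$; the independence heuristic behind that variance bound is exactly what fails.

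The paper bypasses both issues with a pure first-moment count, using neither martingales nor a component decomposition. It bounds the expected number of $s$-sink witness DAGs compatible with $R$ by $W^s/s!$ (Proposition~\ref{a1Asurj} summed over sink labels). Conversely, if the algorithm runs for $t$ steps, one can manufacture many distinct $s$-sink prefixes of $\hat G$ by choosing time indices $i_1>\dots>i_s$ with no $v_{i_j}$ lying in the prefix generated by the earlier choices; the single-sink size bound $h=O(\epsilon^{-1}\log n)$ from Corollary~\ref{a1Acor1} limits how many indices each choice can block, and a short combinatorial lemma (Proposition~\ref{a1Atech-prop1}) lower-bounds the number of valid tuples by roughly $\binom{t/2-(s-1)h}{s}$. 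Taking $s=\Theta(t/h)$ and comparing against $W^s/s!$ via Markov's inequality forces $t=O(W+\epsilon^{-1}\log^2 n)$. Your derivation of the symmetric corollary (set $x(B)=1/d$, obtain $\epsilon=\Theta(1/d)$ and $W\le m/(d-1)\le O(n)$) is correct and matches the paper.
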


Similar concentration bounds have been shown in \cite{kolipaka} and \cite{achlioptas}. The main technical innovation here is that prior concentration bounds have the form $O( \frac{ \sum_B \frac{x(B)}{1 - x(B)}  }{\epsilon} )$, whereas the new concentration bounds are largely independent of $\epsilon$ (as long as it is not too small).

\subsection{Stronger LLL criteria}
\label{shearer-sec}
The LLL criterion, in either its symmetric or asymmetric form, depends on only two parameters: the probabilities of the bad events, and their dependency structure. The symmetric LLL criterion $e p d \leq 1$ is a very simple criterion involving these parameters, but it is not the most powerful. In \cite{shearer}, Shearer gave the strongest possible criterion that can be stated in terms of these parameters alone. This criterion is somewhat cumbersome to state and difficult to work with technically, but it is useful theoretically because it subsumes many of the other simpler criteria. 

We note that the ``lopsided'' form of the LLL can be applied to this setting, in which bad events are atomic configurations of the variables (as in a $k$-SAT instance), and this can be stronger than the ordinary LLL.  As shown in \cite{harris2}, there are forms of lopsidependency in the Moser-Tardos setting which can even go beyond the Shearer criterion itself. However, the Parallel Resampling Algorithm does not work in this setting; alternate, slower, parallel algorithms which can take advantage of this lopsidependency phenomenon are given in \cite{harris2}, \cite{harris4}.  In this paper we are only concerned with the standard (not lopsided) LLL.

To state the Shearer criterion, it will be useful to suppose that the dependency structure of our bad events $\mathcal B$ is fixed, but the probabilities for the bad events have not been specified. We define the \emph{independent-set polynomial} $Q(I,p)$ as
$$
Q(I, p) = \sum_{\substack{I \subseteq J \subseteq \mathcal B\\\text{$J$ independent}}} (-1)^{|J|-|I|} \prod_{B \in J} p(B)
$$
for any $I \subseteq \mathcal B$. Note that $Q(I, p) = 0$ if $I$ is not an independent set. This quantity plays a key role in Shearer's criterion for the LLL \cite{shearer} and the behavior of the Resampling Algorithm. We say that the probabilities $p$ satisfy the Shearer criterion iff $Q(\emptyset, p) > 0$ and $Q(I, p) \geq 0$ for all independent sets $I \subseteq \mathcal B$.

\begin{proposition}[\cite{shearer}]
\label{a1Ashearer-prop}
Suppose that $p$ satisfies the Shearer criterion. Then any probability space with the given dependency structure and probabilities $P_{\Omega} = p$ has a positive probability that none of the bad events $B$ are true.

Suppose that $p$ do not satisfy the Shearer criterion. Then there is a probability space $\Omega$ with the given dependency structure and probabilities $P_{\Omega} = p$ for which, with probability one, at least one $B \in \mathcal B$ is true.

\end{proposition}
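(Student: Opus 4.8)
The plan is to treat the two implications separately; the first (Shearer's criterion $\Rightarrow$ positive probability of avoiding $\mathcal B$) is the substantive one, and I would in fact prove the sharper quantitative bound that every valid probability space satisfies $P_\Omega[\text{avoid }\mathcal B]\ge Q(\emptyset,p)$.

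For the first implication, fix an arbitrary probability space $\Omega$ realizing the given dependency structure with $P_\Omega=p$. For $S\subseteq\mathcal B$ write $q(S)=P_\Omega\big[\bigwedge_{B\in S}\overline{B}\big]$ and let $Q_S(\emptyset,p)$ be the independent-set polynomial at $\emptyset$ formed using only the events of $S$ with their induced dependency structure. I would first record two identities, both immediate from reorganizing the defining sum: the deletion recursion $Q_S(\emptyset,p)=Q_{S\setminus\{B\}}(\emptyset,p)-p(B)\,Q_{S\setminus(\{B\}\cup\Gamma(B))}(\emptyset,p)$, and, for independent $I$, the factorization $Q_S(I,p)=\big(\prod_{B\in I}p(B)\big)\,Q_{S\setminus(I\cup\Gamma(I))}(\emptyset,p)$. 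A short induction using these shows that Shearer's criterion is \emph{hereditary}: if it holds on $\mathcal B$ it holds on every $S\subseteq\mathcal B$, so in particular $Q_S(\emptyset,p)>0$ for all $S$. Next, since $B$ is mutually independent of $\{A:A\not\sim B\}$, dropping the conjuncts coming from neighbors of $B$ and then using that independence gives, for any $B\in S$,
\[
q(S)=q(S\setminus\{B\})-P_\Omega\Big[B\wedge\!\!\bigwedge_{A\in S\setminus\{B\}}\!\!\overline{A}\Big]\;\ge\;q(S\setminus\{B\})-p(B)\,q\big(S\setminus(\{B\}\cup\Gamma(B))\big).
\]
Now set $\phi(S)=q(S)/Q_S(\emptyset,p)$ (positive and well defined by heredity). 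Substituting the displayed inequality and the deletion recursion into the target inequality $\phi(S\setminus\{B\})\le\phi(S)$ and clearing denominators, it collapses after cancellation to $\phi\big(S\setminus(\{B\}\cup\Gamma(B))\big)\le\phi(S\setminus\{B\})$; since $S\setminus(\{B\}\cup\Gamma(B))\subseteq S\setminus\{B\}$ and both sets have size $<|S|$, this follows from the inductive hypothesis (iterating one-element removals). Hence $\phi$ is monotone under inclusion, and as $\phi(\emptyset)=1$ we get $P_\Omega[\text{avoid }\mathcal B]=q(\mathcal B)\ge Q(\emptyset,p)>0$.

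For the converse, suppose $p$ violates the criterion. Deleting events of probability $0$ (harmless), we may assume $p(B)>0$ for all $B$; the factorization identity then shows that, whether the violation is $Q(\emptyset,p)\le 0$ or $Q(I,p)<0$ for some independent $I$, some induced sub-instance $\mathcal B'\subseteq\mathcal B$ has $Q_{\mathcal B'}(\emptyset,p)\le 0$. Choose $\mathcal B'$ of minimum size; then every proper induced sub-instance of $\mathcal B'$ lies strictly inside the Shearer region, and hence (by the first part) has a strictly positive probability of avoiding all of its events. On the events of $\mathcal B'$ I would now build an explicit probability space realizing the restricted dependency structure — e.g.\ with a private uniform $[0,1]$ coordinate for each event plus one shared uniform $[0,1]$ coordinate for each dependency edge, each event realized as a union of boxes in the coordinates it touches — choosing the boxes so that $P[B]=p(B)$ while the events cover the entire space, so that $P[\text{avoid }\mathcal B']=0$. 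Making the events of $\mathcal B\setminus\mathcal B'$ an independent product with the correct marginals then extends this to a valid $\Omega$ in which, almost surely, some bad event holds.

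The first implication is essentially bookkeeping once the potential $\phi=q/Q$ is written down; the one delicate point there is the heredity lemma, whose induction must be arranged so it is never invoked on a non-smaller instance (splitting on whether the deleted event lies in $\Gamma(I)$ does this). The real obstacle is the converse: producing the extremal space on the minimal violating sub-instance, achieving marginals exactly $p(B)$ and full coverage simultaneously, is precisely the content of a somewhat intricate construction — this is the argument carried out in \cite{shearer}, and I would follow it.
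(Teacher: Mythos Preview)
The paper does not give its own proof of this proposition; it is stated as a citation to \cite{shearer} and used as a black box. So there is no ``paper's proof'' to compare against, only the question of whether your proposal is a valid reconstruction of Shearer's theorem.

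Your forward implication is correct and is essentially Shearer's original argument: the deletion recursion for $Q_S(\emptyset,p)$, the factorization $Q_S(I,p)=\bigl(\prod_{B\in I}p(B)\bigr)Q_{S\setminus(I\cup\Gamma(I))}(\emptyset,p)$, the heredity lemma, the inequality $q(S)\ge q(S\setminus\{B\})-p(B)\,q(S\setminus(\{B\}\cup\Gamma(B)))$, and the monotonicity of the potential $\phi(S)=q(S)/Q_S(\emptyset,p)$ together give $q(\mathcal B)\ge Q(\emptyset,p)>0$. The algebra in your inductive step is right: after clearing denominators the desired inequality $\phi(S)\ge\phi(S\setminus\{B\})$ reduces to $\phi(S\setminus\{B\})\ge\phi(S\setminus(\{B\}\cup\Gamma(B)))$, which is the hypothesis applied inside a strictly smaller set. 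You are also right to flag heredity as the one place needing care; the induction there must be set up so that one never appeals to an instance of the same size.

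For the converse you correctly identify the strategy (pass to a minimal violating sub-instance, on which all proper sub-instances satisfy the criterion, then build an extremal space achieving the exact marginals while covering the whole space) and you are honest that the actual construction is the delicate content of \cite{shearer}. That is an accurate assessment; your box-in-coordinates description is only a gesture at the construction, and filling it in so that coverage and the exact marginals are achieved simultaneously is precisely Shearer's work. Since the paper itself simply cites \cite{shearer} for this, deferring there is appropriate.
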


\begin{proposition}[\cite{shearer}]
\label{a1Ashearer-prop2}
Suppose that $p(B) \leq p'(B)$ for all $B \in \mathcal B$. 
Then, if $p'$ satisfies the Shearer criterion, so does $p$.
\end{proposition}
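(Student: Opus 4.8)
The plan is to deduce Proposition~\ref{a1Ashearer-prop2} from the probabilistic characterization of the Shearer criterion in Proposition~\ref{a1Ashearer-prop}, rather than manipulating the independent-set polynomial directly. Suppose $p(B)\le p'(B)$ for all $B\in\mathcal B$, that $p'$ satisfies the Shearer criterion, and — for contradiction — that $p$ does not. By the second half of Proposition~\ref{a1Ashearer-prop} applied to $p$, there is a probability space $\Omega$ realizing the fixed dependency structure with $P_\Omega=p$ in which, with probability one, at least one bad event is true. I will use $\Omega$ to build a probability space realizing the same dependency structure but with the \emph{larger} probabilities $p'$, still forced to contain a true bad event, which contradicts the first half of Proposition~\ref{a1Ashearer-prop} applied to $p'$.

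The construction pads each event with a fresh independent coin. First note that $p'$ satisfying Shearer forces $p'(B)<1$ for every $B$ (otherwise that event would be almost surely true in every space with probabilities $p'$, contradicting positive probability of avoidance), so $q_B:=\frac{p'(B)-p(B)}{1-p(B)}\in[0,1)$ is well defined. Form $\Omega'=\Omega\times\prod_{B\in\mathcal B}\{0,1\}$, where the extra coordinates $(Y_B)_{B\in\mathcal B}$ are mutually independent, independent of $\Omega$, with $P(Y_B=1)=q_B$; redefine the bad events on $\Omega'$ by $\tilde B:=B\cup\{Y_B=1\}$. A one-line computation gives $P_{\Omega'}(\tilde B)=p(B)+(1-p(B))q_B=p'(B)$. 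Since $\tilde B$ is a function only of the $\Omega$-randomness underlying $B$ together with the fresh coin $Y_B$, and the $Y_B$ are independent of everything else, the family $(\tilde B)_{B\in\mathcal B}$ has the same (in fact, a subgraph of the) dependency graph as $(B)_{B\in\mathcal B}$ — adjoining independent coins creates no new dependencies, and any supergraph of a valid dependency graph is again one — so $\Omega'$ realizes the given dependency structure. Finally $\tilde B\supseteq B$ and the $\Omega$-marginal is unchanged, so with probability one some $\tilde B$ is true; hence $\Omega'$ has the given dependency structure, probabilities $p'$, and zero probability of avoiding all bad events, contradicting Proposition~\ref{a1Ashearer-prop}. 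Therefore $p$ satisfies the Shearer criterion. The one step requiring care — and the main obstacle — is verifying that $(\tilde B)$ genuinely has the prescribed dependency structure in the precise (mutual-independence) sense demanded by Proposition~\ref{a1Ashearer-prop}; this is routine once the ``fresh independent coin'' observation is in place, but it is the point that must be spelled out carefully.

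For completeness I note an alternative, purely algebraic route: it suffices to decrease one probability $p(v)$ at a time, and $Q(I,p)$ is affine in $p(v)$ with slope $-Q_{\mathcal B\setminus N[v]}(I,p)$ (with $N[v]$ the closed neighborhood of $v$), which is non-positive provided the Shearer criterion is inherited by the induced subgraph $\mathcal B\setminus N[v]$; so decreasing $p(v)$ only increases each $Q(I,\cdot)$. Establishing that vertex-deletion inheritance is then the crux — it can be proved by a padding argument entirely analogous to the one above (replacing deleted events by independent coins of the right probability) or by induction on $|\mathcal B|$ — which is precisely why I would prefer the probabilistic argument, where this issue never surfaces.
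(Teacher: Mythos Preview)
The paper does not actually supply a proof of this proposition: it is stated with a citation to Shearer~\cite{shearer} and left unproved. So there is nothing in the paper to compare your argument against line-by-line.

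That said, your probabilistic argument is correct and self-contained. The padding construction does exactly what is needed: the pair $(B_i,Y_i)$ is independent of the collection $(\{B_j:j\notin N[i]\},\{Y_j:j\notin N[i]\})$ because the $Y$'s are jointly independent of everything, and hence $\tilde B_i$ is mutually independent of $\{\tilde B_j:j\notin N[i]\}$, which is precisely the ``given dependency structure'' condition that Proposition~\ref{a1Ashearer-prop} requires. Your observation that $p'(B)<1$ (else the event is almost surely true in any realizing space, contradicting positive avoidance) is the right way to make $q_B$ well-defined, and the containment $\tilde B\supseteq B$ immediately transfers the almost-sure failure from $\Omega$ to $\Omega'$.

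The alternative algebraic sketch you give at the end is also the route taken in Shearer's original paper (and in later treatments such as Scott--Sokal), via the recursion $Q_{\mathcal B}(\emptyset,p)=Q_{\mathcal B\setminus\{v\}}(\emptyset,p)-p(v)\,Q_{\mathcal B\setminus N[v]}(\emptyset,p)$; you correctly identify that the crux there is showing the Shearer region is closed under vertex deletion, which is indeed most cleanly handled by the same padding idea. Either route is fine; the probabilistic one you lead with is shorter given that Proposition~\ref{a1Ashearer-prop} is already available.
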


One useful parameter for us will be the following:
\begin{definition}
For any bad event $B$, define the \emph{measure} of $B$ to be $\mu(B) =  \frac{Q( \{B \}, P_{\Omega})}{ Q( \emptyset, P_{\Omega} )}$.
\end{definition}

In \cite{kolipaka}, Kolipaka \& Szegedy showed that if the Shearer criterion is satisfied, then the Resampling Algorithm terminates with probability one; furthermore, the run-time of the Resampling Algorithms can be bounded in terms of the measures $\mu$.
\begin{proposition}[\cite{kolipaka}]
The expected number of resamplings of any $B \in \mathcal B$ is at most $\mu(B)$.
\end{proposition}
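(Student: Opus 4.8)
The plan is to reprove this bound by the witness-structure method of Moser \& Tardos \cite{moser-tardos}, sharpened so that the Shearer polynomial $Q$ appears in the count. Couple the whole run of the Resampling Algorithm to a single \emph{resampling table} --- an infinite i.i.d.\ supply of fresh values for each variable --- so that the run is determined by the table together with the arbitrary choices of which true event to resample; since by \cite{moser-tardos,kolipaka} the resulting multiset of resamplings, hence $N_B$, is a function of the table alone and is independent of that order, we may run the round-by-round version, which in round $j$ resamples a \emph{maximal} independent set $M_j$ of the bad events currently true, so that $N_B=\#\{j:\ B\in M_j\}$. Tracing causality backwards from a round in which $B\in M_t$, one obtains (after the standard accounting over the resampling table) a bound $\bE[N_B]\le\sum_{t\ge 1}\sum_{(I_1,\dots,I_t)}w(I_1,\dots,I_t)$, where the inner sum ranges over ``stable set sequences'' --- $I_1=\{B\}$, each $I_j$ independent, and every element of $I_{j+1}$ dependent with some element of $I_j$ --- and $w$ is a weight that, because each $M_j$ is a \emph{maximal} independent set among the true events (an inclusion--exclusion over exactly which events are true), carries the alternating signs of $Q$.

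The heart of the proof is to resum this series to $\mu(B)=Q(\{B\},P_\Omega)/Q(\emptyset,P_\Omega)$, rather than settling for the crude per-term estimate, which would only reproduce the Moser--Tardos quantity $\frac{x(B)}{1-x(B)}$ --- in general strictly larger than $\mu(B)$ (already for two dependent events with $P_\Omega\equiv 0.1$ it is $\approx 0.127$ against $\mu(B)=0.125$). I would carry out the resummation round by round: the passage ``active set $S$ in one round $\rightarrow$ active set in the next'' contributes a factor assembled from signed sums of $P_\Omega$ over independent sets, these telescope, and the residue as $t\to\infty$ is exactly $Q(\{B\},P_\Omega)/Q(\emptyset,P_\Omega)$. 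To recognise the limit one uses the contraction identities $Q(\{B\},P_\Omega)=P_\Omega(B)\,Q(\emptyset,P_\Omega)\big|_{\mathcal B\setminus\Gamma^+(B)}$ and $Q(\emptyset,P_\Omega)\big|_{\mathcal B}=Q(\emptyset,P_\Omega)\big|_{\mathcal B\setminus\{B\}}-P_\Omega(B)\,Q(\emptyset,P_\Omega)\big|_{\mathcal B\setminus\Gamma^+(B)}$, with $\Gamma^+(B)=\{A:A\sim B\}$. Convergence is precisely where the Shearer hypothesis enters: $Q(\emptyset,P_\Omega)>0$ and $Q(I,P_\Omega)\ge 0$ for all independent $I$ force the partial sums to be nonnegative and increasing, and the telescoping exhibits the $t$-th partial sum as $\mu(B)$ minus a nonnegative remainder, so the series converges and $\bE[N_B]\le\mu(B)$ follows. (By Proposition~\ref{a1Ashearer-prop2} the same reasoning applies to any $p\le P_\Omega$.)

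The witness bookkeeping and the injectivity of the causal trace are routine adaptations of \cite{moser-tardos}, and the positivity/summability input is immediate from the Shearer criterion. The main obstacle is the resummation: one must verify that the signed, level-by-level count over stable set sequences collapses to $Q(\{B\},P_\Omega)/Q(\emptyset,P_\Omega)$ --- matching the combinatorics of ``a maximal independent set of the true events is resampled each round'' against the inclusion--exclusion that defines the independent-set polynomial --- and that the telescoping is legitimate, i.e.\ that the tail contributions vanish (again by Shearer positivity). A subsidiary point to check is that every causal trace arising from a genuine execution is a stable set sequence, which holds because a true bad event cannot survive a round unless a dependent event is resampled in that round.
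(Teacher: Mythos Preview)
The paper does not prove this proposition directly; it is cited from \cite{kolipaka}. The argument implicit in the surrounding material is the standard one: each resampling of $B$ corresponds (via Proposition~\ref{a1Agcompat} and the prefix $\hat G(v)$) to a distinct $\tau\in\Gamma^R(B)$, so by Proposition~\ref{a1wprop} one has $\bE[N_B]\le\sum_{\tau\in\Gamma(B)}w(\tau)$, and Proposition~\ref{a1Aprop2} bounds this by $\mu(B)$ by injecting single-sink WDs into stable set sequences and deferring the summation identity to \cite{kolipaka}, Theorem~14.

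Your skeleton --- stable set sequences $(I_1,I_2,\dots)$ with $I_1=\{B\}$, then sum their weights to $\mu(B)$ --- matches this route. But there is a genuine confusion in your account of the weights. You claim that maximality of $M_j$, via ``inclusion--exclusion over exactly which events are true,'' makes $w(I_1,\dots,I_t)$ carry the alternating signs of $Q$. It does not. The weight attached to a stable sequence is the positive product $\prod_j\prod_{A\in I_j}P_\Omega(A)$, arising from independence of the resampling-table entries (Proposition~\ref{a1wprop}); maximality enters only to guarantee that the backward trace is \emph{stable} (a true event not resampled in a round must have a resampled neighbor), not to introduce signs. No inclusion--exclusion over the set of true events is ever carried out in this argument, and no signed terms appear in the series being summed.

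The alternating signs live only in the \emph{definition} of $Q$. What is actually needed is the combinatorial identity that the \emph{positive} series over stable sequences rooted at $\{B\}$ sums to $Q(\{B\},P_\Omega)/Q(\emptyset,P_\Omega)$; that is the content of \cite{kolipaka}, Theorem~14, and it can indeed be established by a telescoping argument built on the deletion--contraction recursions you quote. Your resummation paragraph is therefore aimed at the right identity but misidentifies its input: as written, you have not established $\bE[N_B]\le\sum w(I_1,\dots,I_t)$ for the signed $w$ you describe, and the telescoping you outline would be operating on the wrong object. Dropping the signed-weight claim and working with the positive product fixes the plan and brings it in line with the paper's (cited) argument.
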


This leads us to define the \emph{work parameter} for the LLL by $W = \sum_{B \in \mathcal B} \mu(B)$. Roughly speaking, the expected running time of the Resampling Algorithm is $O(W)$; we will later show (in Section~\ref{a1Asec3}) that such a bound holds whp as well.

Although the sequential Resampling Algorithm can often work well when the Shearer criterion is satisfied (almost) exactly, the Parallel Resampling Algorithm typically requires an additional small slack.
\begin{definition}
We say that the Shearer criterion is satisfied with $\epsilon$-slack, if the vector of probabilities $(1+\epsilon) P_{\Omega}$ satisfies the Shearer criterion.
\end{definition}

It it extremely difficult to directly show that the Shearer criterion is satisfied in a particular instance. There are alternative criteria, which are weaker than the full Shearer criterion but much easier to work with computationally. Perhaps the simplest is the asymmetric LLL criterion. The connection between the Shearer criterion and the asymmetric LLL criterion was shown by Kolipaka \& Szegedy in \cite{kolipaka}.
\begin{theorem}[\cite{kolipaka}]
\label{a1Akthm1}
Suppose that a weighting function $x: \mathcal B \rightarrow (0,1)$ satisfies
$$
\forall B \in \mathcal B \qquad P_{\Omega}(B) (1+\epsilon) \leq x(B) \prod_{\substack{A \sim B \\ A \neq B}} (1 - x(A))
$$
Then the Shearer criterion is satisfied with $\epsilon$-slack, and $\mu(B) \leq \frac{x(B)}{1-x(B)}$ for all $B \in \mathcal B$.
\end{theorem}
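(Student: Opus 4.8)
The plan is to reduce both conclusions to a single positivity statement about the independent-set polynomial restricted to induced subgraphs, and then to prove that statement by the classical inductive LLL argument. Write $p = P_{\Omega}$ and $p' = (1+\epsilon)p$; the hypothesis says exactly that $p'$ (and hence $p$ as well, since $p \le p'$ coordinatewise) satisfies the weight inequality $q(B) \le x(B)\prod_{A \sim B,\, A \ne B}(1-x(A))$ for every $B$, so in particular every coordinate of $p'$ lies in $[0,1)$. For a nonnegative vector $q$ indexed by $\mathcal B$ and a set $S \subseteq \mathcal B$, let
$$
Z_q(S) \;=\; \sum_{\substack{J \subseteq S\\ J \text{ independent}}} (-1)^{|J|}\prod_{B \in J} q(B),
$$
i.e.\ $Q(\emptyset,q)$ computed with $\mathcal B$ replaced by $S$, so $Z_q(\mathcal B) = Q(\emptyset, q)$. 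Splitting the sum defining $Q(I,q)$ according to $J \setminus I$ (for independent $I$, write $J = I \cup J'$ with $J' \subseteq \mathcal B \setminus N[I]$ independent, where $N[I] = \bigcup_{B \in I} N[B]$) gives $Q(I,q) = \bigl(\prod_{B \in I} q(B)\bigr) Z_q(\mathcal B \setminus N[I])$, and in particular $Q(\{B\}, q) = q(B)\, Z_q(\mathcal B \setminus N[B])$. Hence it suffices to prove, for any $q$ obeying the weight inequality: \textbf{(i)} $Z_q(S) > 0$ for every $S \subseteq \mathcal B$; and \textbf{(ii)} $q(B)\, Z_q(\mathcal B \setminus N[B]) \le \frac{x(B)}{1-x(B)}\, Z_q(\mathcal B)$ for every $B$. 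Applying (i) with $q = p'$ gives $Q(\emptyset, p') > 0$ and $Q(I, p') \ge 0$ for all independent $I$, i.e.\ $\epsilon$-slack; applying (ii) with $q = p$ gives $\mu(B) = Q(\{B\},p)/Q(\emptyset,p) \le x(B)/(1-x(B))$.

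Both (i) and (ii) follow from the claim $H(S)$: ``$Z_q(S) > 0$, and $Z_q(S) \ge (1-x(B))\, Z_q(S \setminus \{B\})$ for every $B \in S$,'' which I would prove by strong induction on $|S|$. The base case $S = \emptyset$ holds since $Z_q(\emptyset) = 1$. For the inductive step fix $B \in S$, put $T = S \setminus \{B\}$ and $N = \{A \in T : A \sim B\}$, and use the deletion recursion $Z_q(S) = Z_q(T) - q(B)\, Z_q(T \setminus N)$, obtained by splitting the independent subsets of $S$ according to whether they contain $B$ (note $S \setminus N[B] = T \setminus N$ since $B \sim B$). Ordering $N = \{A_1,\dots,A_k\}$ and telescoping,
$$
\frac{Z_q(T)}{Z_q(T \setminus N)} \;=\; \prod_{i=1}^{k} \frac{Z_q\bigl(T \setminus \{A_1,\dots,A_{i-1}\}\bigr)}{Z_q\bigl(T \setminus \{A_1,\dots,A_i\}\bigr)} \;\ge\; \prod_{i=1}^{k}\bigl(1 - x(A_i)\bigr),
$$
where each factor is $\ge 1 - x(A_i) > 0$ and each denominator is positive, both by $H$ applied to the relevant proper subset of $T \subsetneq S$. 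Since $N \subseteq \{A : A \sim B,\ A \ne B\}$ and the omitted factors satisfy $1 - x(A) \le 1$, this yields $Z_q(T \setminus N)/Z_q(T) \le \prod_{A \sim B,\, A \ne B} \frac{1}{1-x(A)}$; multiplying by $q(B)$ and invoking the weight inequality gives $q(B)\, Z_q(T \setminus N) \le x(B)\, Z_q(T)$. Substituting into the deletion recursion gives $Z_q(S) \ge (1-x(B))\, Z_q(T) > 0$, which is $H(S)$. Then (i) is immediate, and taking $S = \mathcal B$ in $H$ together with the bound $q(B)\, Z_q(\mathcal B \setminus N[B]) \le x(B)\, Z_q(\mathcal B \setminus \{B\})$ just derived yields (ii).

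I expect the only real subtlety to be the choice of inductive invariant: the inequality $Z_q(S) \ge (1-x(B))\, Z_q(S \setminus \{B\})$ must be strong enough to bound each factor of the telescoping product over the neighbors of $B$, while being exactly the form that the deletion recursion re-derives for $S$ itself — and it is precisely this invariant that upgrades the ``one-step'' bound $q(B)\, Z_q(\mathcal B \setminus N[B]) \le x(B)\, Z_q(\mathcal B \setminus \{B\})$ into the measure bound with its extra factor $\frac{1}{1-x(B)}$. Beyond that, the bookkeeping points to watch are distinguishing open from closed neighborhoods, and observing that the weight inequality ranges over all neighbors of $B$ in $\mathcal B$ while the recursion only exposes neighbors inside $S$ (harmless, since each extra factor $\frac{1}{1-x(A)} \ge 1$). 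Note that no step uses anything about the underlying probability space: the argument is entirely about the polynomials $Z_q(\cdot)$, which is exactly why it establishes the abstract Shearer criterion.
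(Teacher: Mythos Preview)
The paper does not prove this theorem; it is quoted from \cite{kolipaka} and simply stated without argument, so there is no in-paper proof to compare against. Your proof is correct and is essentially the standard argument (the one in \cite{kolipaka} and, in slightly different dress, in Shearer's original paper and the classical asymmetric-LLL proof): the deletion recursion $Z_q(S) = Z_q(S\setminus\{B\}) - q(B)\,Z_q(S\setminus N[B])$, the inductive invariant $Z_q(S) \ge (1-x(B))\,Z_q(S\setminus\{B\}) > 0$, and the telescoping product over neighbors. Your reduction of $Q(I,q)$ to $\bigl(\prod_{B\in I} q(B)\bigr)\,Z_q(\mathcal B\setminus N[I])$ is the right way to pass from positivity of all $Z_q(S)$ to the full Shearer criterion, and your extraction of the extra $\frac{1}{1-x(B)}$ factor for the $\mu$-bound from the same invariant at $S=\mathcal B$ is exactly how Kolipaka--Szegedy obtain it. The only cosmetic point is that in your sentence ``in particular every coordinate of $p'$ lies in $[0,1)$'' you might note explicitly that this follows because the right-hand side $x(B)\prod(1-x(A))$ is already in $(0,1)$; otherwise the write-up is clean.
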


This was extended to the cluster-expansion LLL criterion of \cite{bissacot} by Harvey \& Vondr\'{a}k in \cite{harvey}:
\begin{theorem}[\cite{harvey}]
\label{a1Akthm2}
For any bad event $B$, let $N(B)$ denote the set of bad events $A$ with $A \sim B$. Suppose that a weighting function $\tilde \mu: \mathcal B \rightarrow [0, \infty)$ satisfies 
$$
\forall B \in \mathcal B \qquad \tilde \mu(B) \geq P_{\Omega}(B) (1+\epsilon) \sum_{\substack{I \subseteq N(B)\\ \text{$I$ independent}}} \prod_{A \in I} \tilde \mu(A)
$$
Then the Shearer criterion is satisfied with $\epsilon$-slack, and $\mu(B) \leq \tilde \mu(B)$ for all $B \in \mathcal B$.
\end{theorem}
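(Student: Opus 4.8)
\emph{Proof strategy.} Write $q = (1+\epsilon)P_{\Omega}$, so that the hypothesis reads $\tilde\mu(B)\ge q(B)\sum_{I\subseteq N(B),\ I\text{ independent}}\prod_{A\in I}\tilde\mu(A)$ for all $B\in\mathcal B$. The two conclusions to establish are: (i) $q$ satisfies the Shearer criterion -- this is exactly ``$\epsilon$-slack'' -- and (ii) $\mu(B)\le\tilde\mu(B)$, where $\mu(B)=Q(\{B\},P_{\Omega})/Q(\emptyset,P_{\Omega})$. I would obtain (ii) by re-running the argument below with $q=P_{\Omega}$ in place of $(1+\epsilon)P_{\Omega}$, since the hypothesis then holds \emph{a fortiori}; alternatively it follows from (i) together with monotonicity of $\mu$ inside the Shearer region (as in \cite{kolipaka}). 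For a probability vector $q$ and an induced subgraph $\mathcal B'\subseteq\mathcal B$, write $Q_{\mathcal B'}(\cdot,q)$ for the independent-set polynomial of $\mathcal B'$ and $\mu_{q,\mathcal B'}(B)=Q_{\mathcal B'}(\{B\},q)/Q_{\mathcal B'}(\emptyset,q)$. The workhorse is the deletion recursion for $Q(\emptyset,\cdot)$: splitting the sum according to whether a fixed $v$ appears in the (independent) index set gives $Q_{\mathcal B}(\emptyset,q)=Q_{\mathcal B\setminus\{v\}}(\emptyset,q)-q(v)\,Q_{\mathcal B\setminus N[v]}(\emptyset,q)$, equivalently $Q_{\mathcal B\setminus\{v\}}(\emptyset,q)=(1+\mu_{q,\mathcal B}(v))\,Q_{\mathcal B}(\emptyset,q)$, and more generally $Q_{\mathcal B}(S,q)=q(v)\,Q_{\mathcal B\setminus N[v]}(S\setminus\{v\},q)$ whenever $v\in S$ and $S$ is independent.

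The plan is to induct on $|\mathcal B|$, proving simultaneously that $Q_{\mathcal B}(\emptyset,q)>0$, that $Q_{\mathcal B}(S,q)\ge 0$ for every independent $S$, and that $\mu_{q,\mathcal B}(B)\le\tilde\mu(B)$ for every $B$. The induction passes to induced subgraphs: if $\tilde\mu$ satisfies the hypothesis for $\mathcal B$, then its restriction satisfies it for any $\mathcal B'\subseteq\mathcal B$, because the neighbourhoods $N(B)$ only shrink, so the right-hand sides only decrease. Using the recursions, nonnegativity of $Q_{\mathcal B}(S,q)$ for nonempty $S$ reduces, via $Q_{\mathcal B}(S,q)=q(v)Q_{\mathcal B\setminus N[v]}(S\setminus\{v\},q)$ and the inductive hypothesis on the strictly smaller graph $\mathcal B\setminus N[v]$, to the case $S=\emptyset$; and after unwinding definitions and invoking the hypothesis for $v$, the bound $\mu_{q,\mathcal B}(v)\le\tilde\mu(v)$ reduces to the single inequality
$$\frac{Q_{\mathcal B\setminus N[v]}(\emptyset,q)}{Q_{\mathcal B}(\emptyset,q)}\ \le\ \sum_{\substack{I\subseteq N(v)\\ I\text{ independent}}}\ \prod_{A\in I}\tilde\mu(A).$$
Positivity of $Q_{\mathcal B}(\emptyset,q)$ itself then follows by a standard continuity argument: scale $q\mapsto tq$ for $t$ running from $0$ to $1$, note everything is positive at $t=0$, and observe that the displayed bound prevents $Q_{\mathcal B}(\emptyset,tq)$ from ever reaching $0$ (which is the only way positivity could be lost).

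The main obstacle is the displayed inequality, i.e.\ controlling the ratio of independent-set polynomials of $\mathcal B$ and of $\mathcal B\setminus N[v]$. The naive route -- peel the vertices $w_1,\dots,w_k$ of $N(v)$ off of $\mathcal B\setminus\{v\}$ one at a time, bounding the $i$-th peeling factor by $1+\mu_{q,G_{i-1}}(w_i)\le 1+\tilde\mu(w_i)$ using the inductive hypothesis on the smaller graph $G_{i-1}=(\mathcal B\setminus\{v\})\setminus\{w_1,\dots,w_{i-1}\}$ -- telescopes to $\prod_{w\in N(v)}(1+\tilde\mu(w))=\sum_{U\subseteq N(v)}\prod_{w\in U}\tilde\mu(w)$, a sum over \emph{all} subsets, whereas the hypothesis only affords the sum over \emph{independent} subsets. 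That gap is precisely what edges inside $N(v)$ would cost, and it is far too lossy whenever $N(v)$ is not itself independent. Closing it requires an accounting that charges only to independent subsets of $N(v)$ -- this is a Penrose/tree-graph type identity for the independent-set polynomial, the same mechanism underlying the Fern\'andez--Procacci strengthening of the Koteck\'y--Preiss criterion behind \cite{bissacot}. Concretely I would first strengthen the inductive hypothesis to also include $Q_{\mathcal B}(S,q)\le Q_{\mathcal B}(\emptyset,q)\prod_{B\in S}\tilde\mu(B)$ for every independent $S$ -- which drops out cleanly from $Q_{\mathcal B}(S,q)=q(v)Q_{\mathcal B\setminus N[v]}(S\setminus\{v\},q)$ together with the (already-established, in the same inductive step) singleton bound $\mu_{q,\mathcal B}(v)\le\tilde\mu(v)$ -- and then apply a Penrose-type resummation of the alternating expansion of $Q_{\mathcal B\setminus N[v]}(\emptyset,q)/Q_{\mathcal B}(\emptyset,q)$ in the vertices of $N(v)$ so that only independent-subset contributions survive with the correct sign. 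Finally, applying the established claim with $q=(1+\epsilon)P_{\Omega}$ yields the Shearer criterion with $\epsilon$-slack (the strict inequality $Q(\emptyset,(1+\epsilon)P_{\Omega})>0$ being exactly where the factor $1+\epsilon$ is spent), whence $P_{\Omega}$ also satisfies Shearer by Proposition~\ref{a1Ashearer-prop2}; and re-running with $q=P_{\Omega}$ gives $\mu(B)\le\tilde\mu(B)$.
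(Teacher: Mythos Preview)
The paper does not supply a proof of this theorem: like Theorem~\ref{a1Akthm1}, it is simply quoted from the literature (here \cite{harvey}), so there is no in-paper argument to compare your proposal against.

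That said, your outline has a concrete gap. Your displayed target
\[
\frac{Q_{\mathcal B\setminus N[v]}(\emptyset,q)}{Q_{\mathcal B}(\emptyset,q)}\ \le\ \sum_{\substack{I\subseteq N(v)\\ I\text{ independent}}}\ \prod_{A\in I}\tilde\mu(A)
\]
(with your convention that $N(v)$ is the \emph{open} neighbourhood) is in general \emph{false}, already for a single edge. Take $\mathcal B=\{v,w\}$ with $v\sim w$, $q(v)=q(w)=1/5$, and $\tilde\mu(v)=\tilde\mu(w)=1/3$; the hypothesis of the theorem holds with equality (the closed-neighbourhood sum is $1+\tfrac13+\tfrac13=\tfrac53$ and $q(v)\cdot\tfrac53=\tfrac13$). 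Here $Q_{\mathcal B\setminus N[v]}(\emptyset,q)=1$ and $Q_{\mathcal B}(\emptyset,q)=1-\tfrac25=\tfrac35$, so the left side is $5/3$, while your right side is $1+\tilde\mu(w)=4/3$. The reduction you wrote down is therefore too strong: what ``unwinding definitions and invoking the hypothesis for $v$'' actually gives is the same inequality with the \emph{closed}-neighbourhood sum on the right (recall the paper's $N(B)$ already contains $B$).

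Even after that correction, the mechanism you propose is too lossy. The inclusion--exclusion identity $Q_{\mathcal B'\setminus S}(\emptyset,q)=\sum_{J\subseteq S\text{ indep}}Q_{\mathcal B'}(J,q)$ applied with $\mathcal B'=\mathcal B\setminus\{v\}$ and $S$ the open neighbourhood, together with your strengthened hypothesis $Q(J,q)\le Q(\emptyset,q)\prod_{A\in J}\tilde\mu(A)$ on the smaller graph, yields only $Q_{\mathcal B\setminus N[v]}(\emptyset,q)\le Q_{\mathcal B\setminus\{v\}}(\emptyset,q)\cdot R$ with $R=1+\tilde\mu(w)$, hence $\mu_{q,\mathcal B}(v)\le q(v)R/(1-q(v)R)$. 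In the same two-vertex example this gives $4/11$, which already exceeds $\tilde\mu(v)=1/3$. The loss comes from passing from the denominator $Q_{\mathcal B\setminus\{v\}}(\emptyset,q)$ to the strictly smaller $Q_{\mathcal B}(\emptyset,q)$, and it is not recovered by any ``resummation'' that merely collapses to independent subsets. The published proofs (Harvey--Vondr\'ak, or the polymer-model argument behind \cite{bissacot}) organize the induction differently---e.g.\ by bounding $\sum_{\tau\in\Gamma(B)}w(\tau)$ via stable-set sequences layer by layer---precisely so that this denominator mismatch never arises.
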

 
For the remainder of this paper, we will assume unless stated otherwise that our probability space $\Omega$ satisfies the Shearer criterion with $\epsilon$-slack.  We will occasionally derive certain results for the symmetric LLL criterion as a corollary of results on the full Shearer criterion.

\section{The witness DAG and related structures}
\label{a1Asec2}
There are two key analytical tools introduced by Moser \& Tardos to analyze their algorithm: the resampling table and witness tree.

The \emph{resampling table} $R$ is a table of values $R(i,t)$, where $i$ ranges over the variables $1, \dots, n$ and $t$ ranges over the positive natural numbers. Each cell $R(i,t)$ is drawn independently from the distribution of variable $X_i$, that is, $R(i,t) = j$ with probability $p_{ij}$, independently of all other cells. The intent of this table is that, instead of choosing new values for the variables in ``on-line'' fashion, we precompute the future values of all the variables. The first entry in the table $R(i,1)$, is the initial value for the variable $X_i$; on the $t^{\text{th}}$ resampling, we set $X_i = R(i,t+1)$.\footnote{Although nominally the resampling table provides a countably infinite stream of values for each variable, in practice we will only need to use approximately $\epsilon^{-1} \log n$ distinct values for each variable.} 

The \emph{witness tree} is a structure which records the history of all variables involved in a given resampling. Moser \& Tardos give a very clear and detailed description of the process for forming witness trees; we provide a simplified description here. Suppose that the Resampling Algorithm resample bad events $B_1, \dots, B_t$ in order (the algorithm has not necessarily terminated by this point). We build a witness-tree $\hat \tau_t$ for the $t^{\text{th}}$ resampling, as follows. We place a node labeled by $B_t$ at the root of the tree. We then go backwards in time for $j = t-1, \dots, 1$. For each $B_j$, if there is a node $v'$ in the tree labeled by $B' \sim B_j$, then we add a new node $v$ labeled by $B_j$ as a child of $v'$; if there are multiple choices of $v'$, we always select the one of greatest depth (breaking ties arbitrarily.) If there is no such node $v'$, then we do not add any nodes to the tree for that value of $j$. 

\subsection{The witness DAG} The witness tree $\hat \tau_t$ only provides an explanation for the single resampling at time $t$; it may discard information about other resamplings. We now consider a related object, the \emph{witness DAG} (abbreviated \emph{WD}) that can record information about multiple resamplings, or all of the resamplings. 

A WD is a directed acyclic graph, whose nodes are labeled by bad events. For nodes $v, v' \in G$, we write $v \prec v'$ if there is an edge from $v$ to $v'$. We impose two additional requirements, which we refer to as the \emph{comparability conditions}. First, if nodes $v, v'$ are labeled by $B, B'$ and $B \sim B'$, then either $v \prec v'$ or $v' \prec v$; second, if $B \not \sim B'$ then there is no edge between $v, v'$.

We let $|G|$ denote the number of vertices in a WD $G$.

It is possible that a WD can contain multiple nodes with the same label. However, because of the comparability conditions, all such nodes are linearly ordered by $\prec$. Thus for any WD $G$ and any $B \in \mathcal B$, the nodes of $G$ labeled $B$ can be unambiguously sorted. Accordingly, we use the notation $(B,k)$ for the $k^{\text{th}}$ node of $G$ labeled by $B$. For any node $v$, we refer to this ordered pair $(B,k)$ as the \emph{extended label} of $v$. Every node in a WD receives a distinct extended label. We emphasize that this is a notational convenience, as an extended label of a node can be recovered from the WD along with its original labels.

Given a full execution of the Resampling Algorithm, there is a particularly important WD which we refer to as the \emph{Full Witness DAG} $\hat G$ (abbreviated \emph{FWD}). We construct this as follows. Suppose that we resample bad events $B_1, \dots, B_t$. Then $\hat G$ has vertices $v_1, \dots, v_t$ which are labeled $B_1, \dots, B_t$. We place an edge from $v_i$ to $v_j$ iff $i<j$ and $B_i \sim B_j$. We emphasize that $\hat G$ is a random variable. The FWD (under different terminology) was analyzed by Kolipaka \& Szegedy in \cite{kolipaka}, and we will use their results in numerous places. However, we will also consider partial WDs, which record information about only a subset of the resamplings. 

As witness trees and single-sink WDs are closely related, we will often use the notation $\tau$ for a single-sink WD. We let $\Gamma$ denote the set of all single-sink WDs, and for any $B \in \mathcal B$ we let $\Gamma(B)$ denote the set of single-sink WDs whose sink node is labeled $B$. 

\subsection{Compatibility conditions for witness DAGs and resampling tables} The Moser-Tardos proof hinged upon a method for converting an execution log into a witness tree, and necessary conditions were given for a witness tree being produced in this fashion in terms of its consistency with the resampling table. We will instead use these conditions as a \emph{definition} of compatibility.

\begin{definition}[Path of a variable]
Let $G$ be a WD. For any $i \in [n]$, let $G[i]$ denote the subgraph of $G$ induced on all vertices $v$ labeled by $B$ with $i \in S_B$. Because of the comparability conditions, $G[i]$ is linearly ordered by $\prec$; thus we refer to $G[i]$ as the \emph{path} of variable $i$.
\end{definition}

\begin{definition}[Configuration of $v$]
Let $G$ be a WD and $R$ a resampling table. Let $v \in G$ be labeled by $B$. For each $i \in S_B$, let $y_{v,i}$ denote the number of vertices $w \in G[i]$ such that $w \prec v$.

We now define the \emph{configuration of $v$} by
$$
X_G^v (i) = R(i, 1 + y_{v, i})
$$
\end{definition}

\begin{definition}[Compatibility of WD $G$ with resampling table $R$]
For a WD $G$ and a resampling table $R$, we say that $G$ is \emph{compatible} with $R$ if, for all nodes $v \in G$ labeled by $B \in \mathcal B$, it is the case that $B$ is true on the configuration $X_G^v$.  This is well-defined because $X_G^v$ assigns values to all the variables in $S_B$.

We define $\Gamma^R$ to be the set of single-sink WDs compatible with $R$, and similarly for $\Gamma^R(B)$.
\end{definition}

The following are key results used by Moser \& Tardos to bound the running time of their resampling algorithm:

\begin{definition}[Weight of a WD]
Let $G$ be any WD, whose nodes are labeled by bad events $B_1, \dots, B_s$.  We define the \emph{weight} of $G$ to be $w(G) = \prod_{k=1}^s P_{\Omega}(B_k)$. 
\end{definition}

\begin{proposition}
\label{a1wprop}
For a random resampling table $R$, any WD $G$ has probability $w(G)$ of being compatible with $R$.
\end{proposition}
\begin{proof}
For any node $v \in G$, note that $X_G^v$ follows the law of $\Omega$, and so the probability that $B$ is true of the configuration $X_G^v$ is $P_{\Omega}(B)$. Next, note that each node $v \in G$ imposes conditions on disjoint sets of entries of $R$, and so these events are independent.
\end{proof}

\begin{proposition}
\label{a1Agcompat}
Suppose we run the Resampling Algorithm, taking values for the variables from the resampling table $R$. Then $\hat G$ is compatible with $R$. 
\end{proposition}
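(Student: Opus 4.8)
The plan is to verify directly that the Full Witness DAG $\hat G$ arising from an execution satisfies the compatibility condition of the preceding definition: for every node $v \in \hat G$ labeled by $B$, the event $B$ is true on the configuration $X_{\hat G}^v$. The key observation is that the definitions of $y_{v,i}$ and $X_{\hat G}^v$ have been carefully set up so that $X_{\hat G}^v(i)$ is exactly the value that the variable $X_i$ actually held in the real execution at the moment the resampling corresponding to $v$ was performed. Once this identification is established, compatibility is immediate: when the algorithm resamples a bad event $B$, that event is true at that moment by the choice made in the \texttt{while} loop, so $B$ is true on the assignment equal to the current values of the variables in $S_B$, which is precisely $X_{\hat G}^v$ restricted to $S_B$.

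First I would fix notation matching the construction of $\hat G$: the execution resamples $B_1, \dots, B_t$ in order, with $v_j$ the node labeled $B_j$, and an edge $v_i \prec v_j$ exactly when $i < j$ and $B_i \sim B_j$. Consider a node $v = v_j$ labeled by $B = B_j$ and a variable $i \in S_B$. By definition $y_{v,i}$ counts the vertices $w \in \hat G[i]$ with $w \prec v$; since $\hat G[i]$ consists of those $v_k$ with $i \in S_{B_k}$, and any two such nodes are adjacent in $\hat G$ (they share variable $i$, hence $B_k \sim B_{k'}$), we have $w = v_k \prec v_j$ iff $k < j$. Thus $y_{v_j, i}$ equals the number of indices $k < j$ with $i \in S_{B_k}$, i.e. the number of times variable $i$ was resampled strictly before step $j$. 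Tracking the real execution against the resampling table: variable $X_i$ starts at $R(i,1)$ and, after its $\ell$-th resampling, equals $R(i, \ell+1)$; so just before step $j$ (i.e. the value used to test $B_j$) it equals $R(i, 1 + (\text{number of prior resamplings of } i)) = R(i, 1 + y_{v_j,i}) = X_{\hat G}^{v_j}(i)$. Since $B_j$ was chosen because it was true on the current assignment, $B = B_j$ is true on $X_{\hat G}^{v_j}$, which is exactly the compatibility condition for the node $v_j$. As $j$ was arbitrary, $\hat G$ is compatible with $R$.

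The main obstacle, and the step deserving the most care, is the bookkeeping identity $y_{v_j,i} = \#\{k < j : i \in S_{B_k}\}$ together with the claim that the algorithm's value of $X_i$ immediately before step $j$ is $R(i, 1 + y_{v_j,i})$. This requires being precise about the off-by-one indexing convention ($R(i,1)$ is the initial value, the $\ell$-th resample reads $R(i,\ell+1)$) and about the fact that the edges of $\hat G[i]$ really do reduce to the time-order $<$ on the indices $k$ with $i \in S_{B_k}$ — which in turn uses the comparability conditions and the specific edge rule defining $\hat G$. Everything else is a direct appeal to the meaning of the \texttt{while} loop: the chosen $B$ is true on the configuration that is, by the identity just established, exactly $X_{\hat G}^v$ on $S_B$. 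No slack, no probabilistic argument, and no LLL criterion is needed here; this is a deterministic structural fact about how $\hat G$ and $X_{\hat G}^v$ are defined relative to the execution and the table $R$.
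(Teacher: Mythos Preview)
Your proposal is correct and follows essentially the same approach as the paper: both identify $y_{v_j,i}$ with the number of resamplings of variable $i$ prior to step $j$ by using the edge rule of $\hat G$ together with the fact that any two nodes in $\hat G[i]$ share the variable $i$ and are therefore comparable, and both then conclude that $X_{\hat G}^{v_j}$ agrees with the actual configuration at the moment $B_j$ was resampled. The only cosmetic difference is that the paper indexes the node via its extended label $(B,k)$ and then recovers the time $t$ of the $k$-th resampling of $B$, whereas you work directly with the time index $j$; the underlying argument is the same.
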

\begin{proof}
Suppose there is a node $v \in \hat G$ with extended label $(B,k)$. Thus, $B$ must be resampled at least $k$ times. Suppose that the $k^{\text{th}}$ resampling occurs at time $t$. Let $Y$ be the configuration at time $t$, just before this resampling. We claim that $Y(i) = X_{\hat G}^v(i)$ for all $i \in S_B$. For, the graph $\hat G$ must contain all the resamplings involving variable $i$. All such nodes would be connected to vertex $v$ (as they overlap in variable $i$), and those that occur before time $t$ are precisely those with an edge to $v$. So $y_{v,i}$ is exactly the number of bad events up to time $t$ that involve variable $i$. Thus, just before the resampling at time $t$, variable $i$ was on its $1 + y_{v,i}$ resampling. So $Y(i) = R(i, 1 + y_{v,i}) = X_{\hat G}^v(i)$, as claimed. 

In order for $B$ to be resampled at time $t$, it must have been the case that $B$ was true, i.e., that $B$ held on configuration $Y$. Since $Y$ agrees with $X_{\hat G}^v$ on $S_B$, necessarily $B$ holds on configuration $X_{\hat G}^v$ as well.  This is true for all $v \in G$ and so $G$ is compatible with $R$.
\end{proof}

\subsection{Prefixes of a WD}
A WD records information about many resamplings. If we are only interested in the history of a subset of its nodes, then we can form a \emph{prefix subgraph} which discards irrelevant information. 

\begin{definition}[Prefix graph]
\label{a1prefix-def}
For any WD $G$ and vertices $v_1, \dots, v_{\ell} \in G$, let $G(v_1, \dots, v_{\ell})$ denote the subgraph of $G$ induced on all vertices which have a path to at least one of $v_1, \dots, v_{\ell}$.

If $H$ is a subgraph of $G$ with $H = G(v_1, \dots, v_{\ell})$ for some $v_1, \dots, v_{\ell} \in G$, then we say that $H$ is a \emph{prefix} of $G$.
\end{definition}

Using Definition~\ref{a1prefix-def}, we can give a more compact definition of the configuration of a node:
\begin{proposition}
For any WD $G$ and $v \in G$, we have $X^v_G(i) = R(i, |G(v)[i]|)$.
\end{proposition}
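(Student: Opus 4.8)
The plan is to unwind the two relevant definitions and check they agree. Recall that for $v \in G$ labeled by $B$ and $i \in S_B$, we defined $y_{v,i}$ to be the number of vertices $w \in G[i]$ with $w \prec v$, and then $X_G^v(i) = R(i, 1 + y_{v,i})$. So the entire task reduces to showing that
$$
1 + y_{v,i} = |G(v)[i]|
$$
for every $v \in G$ and every $i \in S_B$, after which the claimed identity is immediate.

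First I would recall that, by the comparability conditions, $G[i]$ is linearly ordered by $\prec$; the same is therefore true of $G(v)[i]$, which is a subset of $G[i]$. I would then argue that a vertex $w \in G[i]$ lies in $G(v)[i]$ if and only if either $w = v$ or $w \prec v$. The ``if'' direction is clear: $v$ has the trivial path to itself, and if $w \prec v$ then $w$ has a path to $v$, so both are in the prefix $G(v)$; and since $w, v \in G[i]$ they lie in the induced subgraph $G(v)[i]$. For the ``only if'' direction, suppose $w \in G(v)[i]$ with $w \neq v$. Then $w$ has a path to $v$ in $G$. Since $w$ and $v$ are both labeled by events containing variable $i$, they are dependent, so by comparability exactly one of $w \prec v$ or $v \prec w$ holds; a path from $w$ to $v$ together with acyclicity of $G$ rules out $v \prec w$, hence $w \prec v$. (Strictly, a path from $w$ to $v$ forces $w \ne v$ and, by acyclicity, precludes any path from $v$ to $w$; since $w \sim v$ the comparability condition then gives $w \prec v$.)

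Combining the two directions, $G(v)[i]$ consists of exactly $v$ together with the vertices $w \in G[i]$ satisfying $w \prec v$, and these are disjoint, so $|G(v)[i]| = 1 + y_{v,i}$. Substituting into the definition of configuration yields $X_G^v(i) = R(i, 1 + y_{v,i}) = R(i, |G(v)[i]|)$, as desired.

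I do not expect any serious obstacle here; the one point requiring a little care is the ``only if'' direction above, where one must invoke both acyclicity and the comparability condition to convert ``$w$ has a path to $v$'' into ``$w \prec v$'' — in particular to rule out the possibility that $w$ reaches $v$ through intermediate vertices while $v \prec w$ as a direct edge, which the comparability/acyclicity combination forbids.
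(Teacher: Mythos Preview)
Your proposal is correct and follows the same approach as the paper: both reduce to the identity $|G(v)[i]| = 1 + y_{v,i}$ by showing that $G(v)[i]$ consists exactly of $v$ together with the nodes $w \in G[i]$ with $w \prec v$. The paper simply asserts this characterization in one sentence, whereas you supply the details of the ``only if'' direction (using comparability plus acyclicity to upgrade ``has a path to $v$'' to ``$w \prec v$''); this extra care is appropriate and the argument is sound.
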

\begin{proof}
Suppose that $v$ is labeled by $B$. The graph $G(v)[i]$ contains precisely $v$ itself and the other nodes $w \in G[i]$ with $w \prec v$. So $|G(v)[i]| = y_{v,i} + 1$.
\end{proof}

\begin{proposition}
\label{a1xprop1}
Suppose $G$ is compatible with $R$ and $H$ is a prefix of $G$. Then $H$ is compatible with $R$.
\end{proposition}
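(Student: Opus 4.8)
The plan is to reduce compatibility of $H$ to compatibility of $G$ by showing that every node $v \in H$ ``sees the same thing'' in $H$ as it does in $G$: namely, that $X_H^v(i) = X_G^v(i)$ for every $i \in S_B$, where $B$ is the label of $v$. Using the compact formula $X^v_G(i) = R(i, |G(v)[i]|)$ from the preceding proposition (and its analogue for $H$), it suffices to prove the purely graph-theoretic claim that $G(v) = H(v)$ whenever $v$ is a vertex of $H$.

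To prove that claim, write $H = G(v_1, \dots, v_l)$, so that a vertex of $G$ lies in $H$ exactly when it has a directed path in $G$ to one of $v_1, \dots, v_l$. First I would show $V(G(v)) \subseteq V(H)$: if $w \in G(v)$, then $w$ has a path to $v$ in $G$, and since $v \in H$ it has a path to some $v_j$; concatenating yields a path from $w$ to $v_j$, so $w \in H$. Moreover every vertex lying on a $w$-to-$v$ path is itself in $G(v)$ (the tail of the path witnesses this), so such a path stays inside $G(v) \subseteq H$; because $H$ is an induced subgraph of $G$, all of these edges belong to $H$, and hence $w$ has a path to $v$ within $H$, i.e. $w \in H(v)$. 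Conversely, any $w \in H(v)$ has a path to $v$ in $H \subseteq G$, so $w \in G(v)$. Thus $G(v)$ and $H(v)$ have the same vertex set, and since each is the subgraph of $G$ induced on that set (using that $H$, and hence $H(v)$, is induced in $G$), they coincide.

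Finally I would assemble the pieces. For $v \in H$ labeled by $B$ and any $i \in S_B$, the claim gives $G(v)[i] = H(v)[i]$, hence $X_G^v(i) = R(i, |G(v)[i]|) = R(i, |H(v)[i]|) = X_H^v(i)$, so $X_G^v$ and $X_H^v$ agree on all of $S_B$. Since $G$ is compatible with $R$, the event $B$ is true on the configuration $X_G^v$, and therefore also on $X_H^v$. As $v$ ranges over all nodes of $H$, this is exactly the statement that $H$ is compatible with $R$.

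The only point requiring care — rather than a genuine obstacle — is the bookkeeping around paths within subgraphs: one must notice that any path to $v$ lies entirely inside $G(v)$, and that taking an induced subgraph of an induced subgraph is again induced, so that $H(v)$ is genuinely an induced subgraph of $G$. No probabilistic input is used; the statement is a direct combinatorial consequence of the definition of a prefix together with the formula for $X^v$.
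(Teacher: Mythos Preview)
Your proof is correct and follows essentially the same approach as the paper: both arguments establish the key claim $G(v) = H(v)$ for every $v \in H$ by the same path-concatenation reasoning, then invoke the formula $X^v_G(i) = R(i, |G(v)[i]|)$ to conclude that $X_G^v$ and $X_H^v$ agree on $S_B$. Your version is slightly more explicit about the induced-subgraph bookkeeping, but the structure is identical.
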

\begin{proof}
Let $H = G(v_1, \dots, v_{\ell})$.  Consider $w \in H$ labeled by $B$. We claim that $H(w) = G(w)$. For, consider any $u \in H(w)$. So $u$ has a path to $w$ in $H$; it also must have a path to $w$ in $G$. On the other hand, suppose $u \in G(w)$, so $u$ has a path $p$ to $w$ in $G$. As $w$ has a path to one of $v_1, \dots, v_{\ell}$, this implies that every vertex in the path $p$ also has such a path. Thus, the path $p$ is in $H$, and hence $u$ has a path in $H$ to $w$, so $u \in H(w)$.

Next, observe that for any $i \in S_B$ we have
$$
X_G^{w}(i) = R(i, |G(w)[i]|) = R(i,|H(w)[i]|) = X_{H}^w(i)
$$
and by hypothesis, $B$ is true on $X_G^{w}$.
\end{proof}

\subsection{Counting witness trees and WDs} In this section, we bound the summed weights of certain classes of WDs. In light of Proposition~\ref{a1wprop}, this will upper-bound the expected number of resamplings.

\begin{proposition}[\cite{kolipaka}]
\label{a1Aprop2}
For any $B \in \mathcal B$, we have
$$
\sum_{\tau \in \Gamma(B)} w(\tau) \leq \mu(B).
$$ 
\end{proposition}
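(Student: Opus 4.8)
The plan is to prove the bound directly and combinatorially, by a decomposition of single-sink WDs that peels off the sink and induces a recursion mirroring the deletion--contraction recursion for the independent-set polynomial $Q$; the Shearer criterion enters only through positivity of truncated versions of $Q$. (We cannot shortcut this via the earlier bound on the expected number of resamplings of $B$: that bound is itself downstream of a statement like this one, and in any case the Moser--Tardos injection gives $\mathbf{E}[\#\text{resamplings of }B] \le \sum_{\tau \in \Gamma(B)} w(\tau)$, i.e. it controls the quantity we want from below, not above.)

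First I would establish the structural fact underlying everything. Let $\tau \in \Gamma(B)$ have sink $v_*$, labeled $B$, and set $\tau' = \tau \setminus \{v_*\}$. By the comparability conditions $v_*$ receives an edge from every node whose label is $\sim B$ and from no other node, and since $v_*$ is the unique sink of $\tau$, every node of $\tau'$ still has a path to some sink of $\tau'$. A short argument then shows that the sinks of $\tau'$ are exactly the in-neighbours of $v_*$ that have no other out-edge, that these are pairwise non-adjacent, and that no two of them share a label; hence their labels form an independent set $I \subseteq N(B)$ with exactly one sink per element of $I$. Conversely, from any WD $\tau'$ all of whose nodes reach a sink and whose set of sink-labels is such an independent set $I$, one recovers a unique $\tau \in \Gamma(B)$ by adjoining $v_*$ and an edge into $v_*$ from every $\sim B$-labelled node. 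Since $w(\tau) = P_{\Omega}(B)\, w(\tau')$, and accounting for the degenerate WD $\tau = \{v_*\}$ (the case $I = \emptyset$), this rewrites $\sum_{\tau \in \Gamma(B)} w(\tau)$ as $P_{\Omega}(B)$ times a generating sum $1 + \sum_{I} g(I)$, where the sum is over non-empty independent $I \subseteq N(B)$ and $g(I)$ is the total $w$-weight of WDs whose set of sink-labels is exactly $I$.

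Second, I would iterate: removing the entire sink-set of such a WD again leaves a WD whose new sinks form an independent set, now adjacent to $I$. Unwinding this tower gives a recursion for the quantities $g(I)$ which, after an inclusion--exclusion reorganisation grouping WDs according to the independent sets that appear in successive peeling layers, telescopes to a ratio of truncated independent-set polynomials $Z(S) := Q(\emptyset, P_{\Omega}|_{S})$. Comparing the result with $\mu(B) = Q(\{B\},P_{\Omega})/Q(\emptyset,P_{\Omega}) = P_{\Omega}(B)\, Z(\mathcal B \setminus N(B))/Z(\mathcal B)$, which is immediate from the definition of $Q$ by splitting each independent $J \ni B$ as $\{B\} \sqcup J'$, then yields the claim. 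Here the Shearer hypothesis enters through Proposition~\ref{a1Ashearer-prop2}: zeroing out the probabilities of events outside a set $S$ preserves the Shearer criterion, so $Z(S) > 0$ and, more generally, $Q(I, P_{\Omega}|_S) \ge 0$ for every $S$ and every independent $I$, which is exactly what makes the alternating sums collapse in the favourable direction and turns the final identity into the stated inequality.

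The main obstacle is precisely this sign-matching step: the WD weights $w(\tau)$ are manifestly nonnegative while $Q$ is a signed Möbius-type sum, so the correspondence cannot be term-by-term; one must organise either the WDs or the layered independent sets so that the signs of $Q$ emerge by cancellation and then invoke positivity of the truncated polynomials. A secondary, more routine issue is checking that each abstract structure produced in the recursion really is a legal WD (acyclicity after re-attaching a sink, one sink per label, every node reaching a sink), all of which follow from the comparability conditions. As an alternative one could bypass the whole computation by citing \cite{kolipaka}: single-sink WDs are in weight-preserving bijection with the ``stable-set sequences'' analysed there (their leaf-to-root layering is exactly the iterated sink-peeling above), and $\sum_{\tau \in \Gamma(B)} w(\tau) \le \mu(B)$ is their evaluation of the corresponding generating function.
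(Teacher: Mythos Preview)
Your proposal is correct and follows essentially the same approach as the paper: both peel off the sink layer by layer to obtain a sequence of independent sets $I_0=\{B\},I_1,I_2,\dots$ (the paper's $I'_{j+1}$ is precisely the sink-set of the WD after removing layers $0,\dots,j$, i.e.\ your iterated peeling), note that the map from single-sink WDs to such sequences is injective and weight-preserving, and then evaluate the resulting generating sum as $Q(\{B\},P_\Omega)/Q(\emptyset,P_\Omega)$ by citing \cite{kolipaka}. The only difference is presentational: the paper immediately defers the evaluation to \cite{kolipaka} Theorem~14, whereas you sketch how that evaluation would go via truncated polynomials $Z(S)$ and inclusion--exclusion before offering the citation as an alternative --- which is exactly the route the paper takes.
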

\begin{proof}
For any WD $G$ with a single sink node $v$ labeled $B$, we define $I'_j$ for non-negative integers $j$ using the following recursion. $I'_0 = \{v \}$, and $I'_{j+1}$ is the set of vertices in $G$ whose out-neighbors all lie in $I'_0 \cup \dots \cup I'_j$. Let $I_j$ denote the labels of the vertices in $I'_j$; so $I_0 = \{B \}$. 

Now observe that by the comparability conditions each set $I_j$ is an independent set, and for each $B' \in I_{j+1}$ there is some $B'' \sim B', B'' \in I_j$. Also, the mapping from $G$ to $I_0, \dots, I_j$ is injective. We thus may sum over all such $I_1, \dots, I_{\infty}$ to obtain an upper bound on the weight of such WDs. In \cite{kolipaka} Theorem 14, this sum is shown to be $Q( \{B \}, P_{\Omega})/Q(\emptyset, P_{\Omega})$ (although their notation is slightly different.)
\end{proof}

We will now take advantage of the $\epsilon$-slack in our probabilities.

\begin{proposition}
\label{a1Aweight-bound1}
Given any $V \subseteq \mathcal B$, we say that $V$ is \emph{a dependency-clique} if $B \sim B'$ for all $B, B' \in V$. If $V$ is a dependency-clique then for any $\rho \in [0, \epsilon)$ we have
$$
\sum_{B \in V} \frac{Q( \{ B \}, (1+\rho) P_{\Omega})}{Q(\emptyset, (1+\rho) P_{\Omega})} \leq \frac{1+\rho}{\epsilon - \rho}.
$$
\end{proposition}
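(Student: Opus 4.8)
The plan is to exploit the $\epsilon$-slack hypothesis together with the definition of the measure and Shearer's monotonicity (Proposition~\ref{a1Ashearer-prop2}). First I would introduce the shorthand $\nu_\rho(B) = \frac{Q(\{B\}, (1+\rho) P_\Omega)}{Q(\emptyset, (1+\rho) P_\Omega)}$, so that the claim is $\sum_{B \in V} \nu_\rho(B) \le \frac{1+\rho}{\epsilon-\rho}$ for a dependency-clique $V$. The key combinatorial fact I would use is a ``single-occupancy'' identity for the independent-set polynomial: for any vertex set and any clique $V$, the sum $\sum_{B \in V} Q(\{B\}, q)$ can be related to $Q(\emptyset, q) - Q(\emptyset, q')$ where $q'$ agrees with $q$ off $V$ and is slightly larger on $V$ — because no independent set can contain two elements of a clique, the inclusion-exclusion sums telescope cleanly. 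Concretely, for $t \ge 0$ let $q = q^{(0)} = (1+\rho) P_\Omega$ and let $q^{(t)}$ be the vector that equals $(1+\rho+t) P_\Omega$ on $V$ and $(1+\rho) P_\Omega$ elsewhere; then one checks $\frac{d}{dt} Q(\emptyset, q^{(t)}) = -\sum_{B \in V} P_\Omega(B)\, Q(\{B\}, q^{(t)})$, since an independent set $J$ contributing to $Q(\emptyset, \cdot)$ meets $V$ in at most one element.

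Next I would run $t$ from $0$ up to $\epsilon - \rho$. By hypothesis $(1+\epsilon) P_\Omega$ satisfies the Shearer criterion, hence by Proposition~\ref{a1Ashearer-prop2} so does every $q^{(t)}$ for $t \le \epsilon - \rho$ (each coordinate of $q^{(t)}$ is at most the corresponding coordinate of $(1+\epsilon) P_\Omega$), so $Q(\emptyset, q^{(t)}) \ge 0$ and $Q(\{B\}, q^{(t)}) \ge 0$ throughout. Moreover $Q(\emptyset, q^{(t)})$ is nonincreasing in $t$ by the derivative formula. Integrating the derivative identity from $0$ to $\epsilon - \rho$ gives
$$
Q(\emptyset, q^{(0)}) - Q(\emptyset, q^{(\epsilon - \rho)}) = \int_0^{\epsilon-\rho} \sum_{B \in V} P_\Omega(B)\, Q(\{B\}, q^{(t)})\, dt.
$$
Since $Q(\emptyset, q^{(\epsilon-\rho)}) \ge 0$, the left side is at most $Q(\emptyset, q^{(0)})$. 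For the right side I would lower-bound $Q(\{B\}, q^{(t)})$: writing $Q(\{B\}, q) = q(B)\, Q(\emptyset, q \text{ restricted away from } N(B))$ and using monotonicity of $Q(\emptyset, \cdot)$ in the probabilities (Proposition~\ref{a1Ashearer-prop2} again, applied to the ``punctured'' instance) to compare $q^{(t)}$ with $q^{(0)}$, one gets $Q(\{B\}, q^{(t)}) \ge \frac{1+\rho}{1+\rho+t} Q(\{B\}, q^{(0)})$ — the factor $\frac{1+\rho}{1+\rho+t}$ accounting for the ratio of the $q(B)$ prefactors, with the punctured $Q(\emptyset,\cdot)$ only decreasing. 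Hence the right side is at least $\sum_{B\in V} P_\Omega(B) Q(\{B\}, q^{(0)}) \int_0^{\epsilon-\rho} \frac{1+\rho}{1+\rho+t}\, dt$. Dividing through by $Q(\emptyset, q^{(0)})$ and noting $P_\Omega(B) = q^{(0)}(B)/(1+\rho)$ yields $\sum_{B\in V} \nu_\rho(B) \cdot \frac{1}{1+\rho}\int_0^{\epsilon-\rho}\frac{1+\rho}{1+\rho+t}\,dt \le 1$, i.e. $\sum_{B\in V}\nu_\rho(B) \le \bigl(\log\frac{1+\epsilon}{1+\rho}\bigr)^{-1}$, and a routine estimate $\log\frac{1+\epsilon}{1+\rho} \ge \frac{\epsilon-\rho}{1+\rho}$ gives the stated bound $\frac{1+\rho}{\epsilon-\rho}$.

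The main obstacle I anticipate is getting the right monotone lower bound on $Q(\{B\}, q^{(t)})$ relative to $Q(\{B\}, q^{(0)})$: the punctured independent-set polynomial $Q(\emptyset, \cdot)$ over $\mathcal B \setminus (\{B\} \cup N(B))$ is what we need to control, and one must be careful that increasing probabilities on $V$ only decreases it (so the inequality goes the right way) and that the relevant punctured instance still satisfies Shearer's criterion so that the quantities stay nonnegative and the comparisons are valid. A cleaner alternative, if the derivative/integration bookkeeping becomes delicate, is to discretize: compare $q^{(t)}$ and $q^{(t+\delta)}$ for small $\delta$, establish $Q(\emptyset, q^{(t)}) - Q(\emptyset, q^{(t+\delta)}) \ge \delta \sum_{B\in V} P_\Omega(B) Q(\{B\}, q^{(t+\delta)})$ by a direct inclusion-exclusion sign argument (each $J$ meeting $V$ once contributes, each missing $V$ cancels), telescope over a partition of $[0,\epsilon-\rho]$, and take $\delta \to 0$. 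Either way the essential input is Shearer-monotonicity keeping every $Q$ in sight nonnegative along the whole path, which is exactly what $\epsilon$-slack buys us.
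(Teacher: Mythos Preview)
Your overall strategy --- raise the probabilities on the clique $V$ from $(1+\rho)P_\Omega$ up to $(1+\epsilon)P_\Omega$ and exploit that $Q(\emptyset,\cdot)$ stays positive at the endpoint --- is exactly the paper's idea, but the execution contains errors that keep the argument from closing. First, the derivative is miscomputed: since $\partial Q(\emptyset,q)/\partial q(B) = -Q(\{B\},q)/q(B)$, along your path one has
\[
\frac{d}{dt}\,Q(\emptyset, q^{(t)}) \;=\; -\frac{1}{1+\rho+t}\sum_{B\in V} Q(\{B\}, q^{(t)}),
\]
not $-\sum_{B} P_\Omega(B)\,Q(\{B\}, q^{(t)})$; this is why a stray factor $P_\Omega(B)$ (which your ``$P_\Omega(B)=q^{(0)}(B)/(1+\rho)$'' step does not remove) lingers in your final inequality. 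Second, your closing estimate $\log\frac{1+\epsilon}{1+\rho}\ge \frac{\epsilon-\rho}{1+\rho}$ is false (it is $\log(1+x)\le x$), so even a valid bound $\sum_B\nu_\rho(B)\le \bigl(\log\frac{1+\epsilon}{1+\rho}\bigr)^{-1}$ would be \emph{weaker} than the claimed $\frac{1+\rho}{\epsilon-\rho}$, not stronger. Third, the monotonicity worry about the punctured polynomial is misplaced and is what loses you the needed precision: because $V$ is a clique containing $B$, one has $V\subseteq N(B)$, so the polynomial on $\mathcal B\setminus N(B)$ does not depend on $t$ at all, and in fact $Q(\{B\},q^{(t)}) = \frac{1+\rho+t}{1+\rho}\,Q(\{B\},q^{(0)})$ exactly.

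With the corrected derivative and this equality, the integrand is the constant $\frac{1}{1+\rho}\sum_{B\in V} Q(\{B\},q^{(0)})$, so integrating over $[0,\epsilon-\rho]$ yields the exact identity
\[
Q\bigl(\emptyset, q^{(0)}\bigr) - Q\bigl(\emptyset, q^{(\epsilon-\rho)}\bigr) \;=\; \frac{\epsilon-\rho}{1+\rho}\sum_{B\in V} Q\bigl(\{B\}, q^{(0)}\bigr),
\]
and dropping the positive term $Q(\emptyset, q^{(\epsilon-\rho)})$ gives the claim with no logarithm. This identity is precisely what the paper establishes, but by a one-line expansion rather than calculus: set $p=q^{(\epsilon-\rho)}$, split the defining sum for $Q(\emptyset,p)$ according to whether $I\cap V=\emptyset$ or $I\cap V=\{B\}$ (the only possibilities for an independent $I$, since $V$ is a clique), and rearrange. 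The integration machinery is superfluous because $Q(\emptyset,q^{(t)})$ is affine in $t$.
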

\begin{proof}
Consider the probability vector $p$ defined by
$$
p(B) = \begin{cases}
(1+\epsilon) P_{\Omega}(B) & \text{if $B \in V$} \\
(1+\rho) P_{\Omega}(B) & \text{if $B \notin V$} \\
\end{cases}
$$

Since $V$ is a clique, any independent set $I$ has $|I \cap V| \leq 1$. Thus we calculate $Q(\emptyset, p)$ as
{\allowdisplaybreaks
\begin{align*}
Q(\emptyset, p) &= \sum_{\substack{I \subseteq \mathcal B \\ \text{$I$ independent}}} (-1)^{|I|} \prod_{A \in I} p(A) = \sum_{\substack{I \subseteq \mathcal B \\ \text{$I$ independent}}} (-1)^{|I|} \prod_{A \in I} (1 + [A \in V] \frac{\epsilon-\rho}{1+\rho}) P_{\Omega}(A) \\
&= \sum_{\substack{ I \subseteq \mathcal B \\ \text{$I$ independent}}} (-1)^{|I|} (1 + [I \cap V \neq \emptyset] \frac{\epsilon-\rho}{1+\rho}) \prod_{A \in I} (1 + \rho) P_{\Omega}(A)  \\
&= \sum_{\substack{ I \subseteq \mathcal B \\ \text{$I$ independent}}} (-1)^{|I|}  \prod_{A \in I} (1 + \rho) P_{\Omega}(A) + \frac{\epsilon-\rho}{1+\rho} \sum_{B \in V} \sum_{ \substack{I \subseteq \mathcal B \\ \text{$I$ independent} \\ I \cap V = \{B \}}} (-1)^{|I|} \prod_{A \in I} (1 + \rho) P_{\Omega}(A)  \\
&= Q(\emptyset, (1+\rho) P_{\Omega}) -\frac{(\epsilon - \rho)}{1+\rho} \sum_{B \in V}  Q( \{B \}, (1+\rho) P_{\Omega} )
\end{align*}
}

We use here the Iverson notation so that $[I \cap V \neq \emptyset]$ is one if $I \cap V \neq \emptyset$ and zero otherwise.

Note that $p \leq (1+\epsilon) P_{\Omega}$ and so by Propositions~\ref{a1Ashearer-prop} and \ref{a1Ashearer-prop2} we have $Q( \emptyset, p ) > 0$ and $Q( \{B \}, (1+\rho) P_{\Omega} ) \geq 0$ for all $B \in V$. Thus,
{\allowdisplaybreaks
\begin{align*}
\sum_{B \in V} \frac{Q( \{ B \}, (1+\rho) P_{\Omega})}{Q(\emptyset, (1+\rho) P_{\Omega})} &= \frac{\sum_{B \in V} Q( \{B \}, (1+\rho) P_{\Omega})}{  Q(\emptyset, p) + \frac{(\epsilon - \rho)}{1+\rho} \sum_{B \in V}  Q( \{B \}, (1+\rho) P_{\Omega} ) } \\
& \leq \frac{\sum_{B \in V} Q( \{B \}, (1+\rho) P_{\Omega})}{ \frac{(\epsilon - \rho)}{1+\rho} \sum_{B \in V}  Q( \{B \}, (1+\rho) P_{\Omega} ) } = \frac{1+\rho}{\epsilon - \rho}
\end{align*}
}
\end{proof}

\begin{definition}[Adjusted weight]
For any WD $G$, we define the \emph{adjusted weight} with respect to rate factor $\rho$ by
$$
a_{\rho}(G) = w(G) (1+\rho)^{|G|}.
$$

Observe that $w(G) = a_0(G)$.
\end{definition}

\begin{corollary}
\label{a1Aweight-bound2}
Suppose that $V \subseteq \mathcal B$ is a dependency-clique. Then for any $\rho \in [0, \epsilon)$ we have
$$
\sum_{B \in V} a_{\rho}(B) \leq \frac{1+\rho}{\epsilon - \rho}.
$$
\end{corollary}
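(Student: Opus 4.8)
The plan is to reduce the corollary to Propositions~\ref{a1Aweight-bound1} and~\ref{a1Aprop2} by reinterpreting the adjusted weight $a_\rho$ as an \emph{ordinary} weight computed with respect to the rescaled probability vector $(1+\rho)P_{\Omega}$. Here I read $a_\rho(B)$ as $\sum_{\tau \in \Gamma(B)} a_\rho(\tau)$, matching the left-hand side of Proposition~\ref{a1Aprop2}; with $\rho = 0$ this is exactly $\sum_{\tau\in\Gamma(B)} w(\tau)$.

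First I would observe that the set $\Gamma(B)$ of single-sink WDs with sink labeled $B$, and the quantity $|\tau|$, depend only on the dependency structure of $\mathcal B$, not on the probabilities $P_{\Omega}$. Hence, writing $w_p(\tau) = \prod_{v \in \tau} p(\text{label of }v)$ for a probability vector $p$, we have
$$
w_{(1+\rho)P_{\Omega}}(\tau) = (1+\rho)^{|\tau|}\prod_{v\in\tau} P_{\Omega}(\text{label of }v) = (1+\rho)^{|\tau|} w(\tau) = a_\rho(\tau).
$$
So $\sum_{\tau\in\Gamma(B)} a_\rho(\tau)$ is precisely the summed ordinary $(1+\rho)P_{\Omega}$-weight of the single-sink WDs with sink $B$.

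Next, since the Shearer criterion holds with $\epsilon$-slack and $\rho \in [0,\epsilon)$, we have $(1+\rho)P_{\Omega} \le (1+\epsilon)P_{\Omega}$, so $(1+\rho)P_{\Omega}$ satisfies the Shearer criterion by Proposition~\ref{a1Ashearer-prop2}. The proof of Proposition~\ref{a1Aprop2} (via Theorem~14 of~\cite{kolipaka}) is a purely combinatorial bound on the summed weights of single-sink WDs and therefore applies with $P_{\Omega}$ replaced by $(1+\rho)P_{\Omega}$, giving
$$
\sum_{\tau \in \Gamma(B)} a_\rho(\tau) = \sum_{\tau \in \Gamma(B)} w_{(1+\rho)P_{\Omega}}(\tau) \le \frac{Q(\{B\}, (1+\rho)P_{\Omega})}{Q(\emptyset, (1+\rho)P_{\Omega})}.
$$
Summing over $B \in V$ and invoking Proposition~\ref{a1Aweight-bound1} (valid since $V$ is a dependency-clique and $\rho \in [0,\epsilon)$) yields
$$
\sum_{B\in V} a_\rho(B) \le \sum_{B \in V} \frac{Q(\{B\}, (1+\rho)P_{\Omega})}{Q(\emptyset, (1+\rho)P_{\Omega})} \le \frac{1+\rho}{\epsilon-\rho},
$$
which is the claim.

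The only delicate point, and the step I would check most carefully, is the reuse of Proposition~\ref{a1Aprop2} with a rescaled probability vector: one must confirm its proof uses nothing about $P_{\Omega}$ beyond satisfying the Shearer criterion (inherited here from the $\epsilon$-slack hypothesis together with Proposition~\ref{a1Ashearer-prop2}), and that the identity $a_\rho(\tau) = w_{(1+\rho)P_{\Omega}}(\tau)$ is exact because WDs are combinatorial objects. The remainder is bookkeeping.
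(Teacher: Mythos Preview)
Your proof is correct and follows exactly the same route as the paper: apply Proposition~\ref{a1Aprop2} with the rescaled probability vector $(1+\rho)P_{\Omega}$ (which is precisely your observation that $a_\rho(\tau)=w_{(1+\rho)P_\Omega}(\tau)$), then invoke Proposition~\ref{a1Aweight-bound1}. You have simply been more explicit than the paper about why Proposition~\ref{a1Aprop2} may be reused with a different probability vector, and your reading of the notation $a_\rho(B)=\sum_{\tau\in\Gamma(B)}a_\rho(\tau)$ is indeed the intended one.
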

\begin{proof}
Applying Proposition~\ref{a1Aprop2} using the probability vector $p = (1+\rho) P_{\Omega}$ gives
$$
\sum_{B \in V} \sum_{\tau \in \Gamma(B)} a_{\rho}(B) \leq \sum_{B \in V} \frac{Q( \{B \}, (1 + \rho) P_{\Omega})}{Q(\emptyset, (1 + \rho) P_{\Omega})} 
$$
Now apply Proposition~\ref{a1Aweight-bound1}.
\end{proof}

\begin{corollary}
\label{a1Aweight-bound3}
We have the bound  $W \leq n/\epsilon$, where we recall the definition $W = \sum_{B \in \mathcal B} \mu(B)$.
\end{corollary}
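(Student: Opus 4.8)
The plan is to cover $\mathcal{B}$ by $n$ dependency-cliques --- one for each variable --- and then simply sum the clique bound of Proposition~\ref{a1Aweight-bound1} over these cliques.

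First I would record the elementary observation that for each variable $i \in [n]$ the set
$$
V_i = \{ B \in \mathcal{B} : i \in S_B \}
$$
is a dependency-clique: if $B, B' \in V_i$ then $i \in S_B \cap S_{B'}$, so $B \sim B'$ by the definition of the dependency relation. Since $\mu(B) = Q(\{B\}, P_{\Omega})/Q(\emptyset, P_{\Omega})$ by definition, applying Proposition~\ref{a1Aweight-bound1} to the clique $V_i$ with rate factor $\rho = 0$ yields
$$
\sum_{B \in V_i} \mu(B) \le \frac{1+0}{\epsilon - 0} = \frac{1}{\epsilon}.
$$

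Next I would assemble these $n$ inequalities. Every bad event $B$ depends on at least one variable (an event with $S_B = \emptyset$ and $P_{\Omega}(B) > 0$ would be identically true and would violate the Shearer criterion, while one with $P_{\Omega}(B) = 0$ has $\mu(B) = 0$), so $\mathcal{B} \subseteq \bigcup_{i=1}^{n} V_i$. Combining this with the fact that $\mu(B) \ge 0$ for every $B$ --- which holds because $Q(\{B\}, P_{\Omega}) \ge 0$ and $Q(\emptyset, P_{\Omega}) > 0$ under the Shearer criterion --- I would conclude
$$
W = \sum_{B \in \mathcal{B}} \mu(B) \le \sum_{i=1}^{n} \sum_{B \in V_i} \mu(B) \le \frac{n}{\epsilon},
$$
where the first inequality uses that each $B$ contributes nonnegatively to at least one inner sum on the right.

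There is essentially no hard step here: the argument is a union bound over the $n$ variables, and the clique inequality it rests on has already been established as Proposition~\ref{a1Aweight-bound1}. The only points requiring a moment of care are the harmless over-counting of events with $|S_B| > 1$ (absorbed by the nonnegativity of $\mu$) and the degenerate case of events depending on no variables.
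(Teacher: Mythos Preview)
Your proof is correct and follows essentially the same approach as the paper: cover $\mathcal{B}$ by the $n$ dependency-cliques $V_i = \{B : i \in S_B\}$, apply Proposition~\ref{a1Aweight-bound1} with $\rho = 0$ to each clique, and sum. The paper is slightly terser, omitting the discussion of the degenerate case $S_B = \emptyset$ and the explicit mention of nonnegativity, but the argument is the same.
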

\begin{proof}
We write
\begin{align*}
W &= \sum_{B \in \mathcal B} \mu(B) = \sum_{B \in \mathcal B} \frac{Q( \{B \}, P_{\Omega})}{Q(\emptyset, P_{\Omega})} \leq \sum_{i \in [n]} \sum_{B: S_B \ni i} \frac{Q( \{B \}, P_{\Omega})}{Q(\emptyset, P_{\Omega})}
\end{align*}

Now note that for any $i \in [n]$, the set of bad events $B$ with $i \in S_B$ forms a dependency-clique. Thus, applying Proposition~\ref{a1Aweight-bound1} with $\rho = 0$ gives $\sum_{i \in S_B} \frac{Q( \{B \}, P_{\Omega})}{Q(\emptyset, P_{\Omega})} \leq \frac{1}{\epsilon}$.
\end{proof}

\begin{corollary}[\cite{kolipaka}]
\label{sscor}
The total weight of all single-sink WDs is at most $n/\epsilon$.
\end{corollary}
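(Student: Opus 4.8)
The plan is to reduce this immediately to the two results that precede it. A single-sink WD $\tau$ has, by definition, a unique sink node, and that node carries a well-defined label $B \in \mathcal{B}$; hence $\Gamma$ decomposes as a disjoint union $\Gamma = \bigsqcup_{B \in \mathcal{B}} \Gamma(B)$ (a WD is placed in $\Gamma(B)$ precisely when its sink is labeled $B$, and no WD is placed in two classes). So I would start by writing
$$
\sum_{\tau \in \Gamma} w(\tau) \;=\; \sum_{B \in \mathcal{B}} \; \sum_{\tau \in \Gamma(B)} w(\tau).
$$

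Next I would bound each inner sum by $\mu(B)$ using Proposition~\ref{a1Aprop2}, giving $\sum_{\tau \in \Gamma} w(\tau) \le \sum_{B \in \mathcal{B}} \mu(B) = W$. Finally I would invoke Corollary~\ref{a1Aweight-bound3}, which states exactly that $W \le n/\epsilon$, to conclude.

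There is essentially no obstacle here: all the work has already been done in Propositions~\ref{a1Aprop2} and the chain Proposition~\ref{a1Aweight-bound1}--Corollary~\ref{a1Aweight-bound3}. The only point worth a sentence of care is the disjointness of the decomposition $\Gamma = \bigsqcup_B \Gamma(B)$, i.e.\ that "single-sink" genuinely means a unique sink so that the label of the sink is unambiguous; this is immediate from the definition of $\Gamma(B)$ and of single-sink WD, so no new argument is needed.
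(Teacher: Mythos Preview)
Your proposal is correct and matches the paper's proof exactly: the paper simply states that the result ``follows immediately from Proposition~\ref{a1Aprop2} and Corollary~\ref{a1Aweight-bound3},'' which is precisely the decomposition $\sum_{\tau \in \Gamma} w(\tau) = \sum_{B} \sum_{\tau \in \Gamma(B)} w(\tau) \le \sum_B \mu(B) = W \le n/\epsilon$ that you spell out.
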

\begin{proof}
Follows immediately from Proposition~\ref{a1Aprop2} and Corollary~\ref{a1Aweight-bound3}.
\end{proof}

\begin{proposition}
\label{a1Abound-prop2}
For $r \geq 1 + 1/\epsilon$, the expected number of single-sink WDs compatible with $R$ containing more than $r$ nodes is at most $e n r (1+\epsilon)^{-r} $

\end{proposition}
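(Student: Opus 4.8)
The plan is to reduce the statement to a weighted count of single‑sink WDs and then exploit the $\epsilon$‑slack through the adjusted weights $a_\rho$. By Proposition~\ref{a1wprop} and linearity of expectation, the expected number of single‑sink WDs compatible with $R$ that have more than $r$ nodes equals $\sum_{\tau\in\Gamma,\ |\tau|>r} w(\tau)$. Fix any rate factor $\rho\in[0,\epsilon)$. Since $|\tau|>r$ and $1+\rho\ge 1$, we have $w(\tau)=a_\rho(\tau)(1+\rho)^{-|\tau|}\le a_\rho(\tau)(1+\rho)^{-r}$, so
\[
\sum_{\tau\in\Gamma,\ |\tau|>r} w(\tau)\ \le\ (1+\rho)^{-r}\sum_{\tau\in\Gamma}a_\rho(\tau).
\]

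Next I would bound $\sum_{\tau\in\Gamma}a_\rho(\tau)$ by $\frac{n(1+\rho)}{\epsilon-\rho}$, arguing exactly as in the proof of Corollary~\ref{a1Aweight-bound3}. Split the sum according to the label $B$ of the sink, so that $\sum_{\tau\in\Gamma}a_\rho(\tau)=\sum_{B\in\mathcal B}\sum_{\tau\in\Gamma(B)}a_\rho(\tau)$. For each $B$ choose some variable $i\in S_B$; the set $\{B':i\in S_{B'}\}$ is a dependency‑clique, so Corollary~\ref{a1Aweight-bound2} applied to this clique bounds the contribution of all such $B'$ by $\frac{1+\rho}{\epsilon-\rho}$. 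Summing this over the $n$ variables only overcounts, giving $\sum_{\tau\in\Gamma}a_\rho(\tau)\le\frac{n(1+\rho)}{\epsilon-\rho}$ and hence
\[
\sum_{\tau\in\Gamma,\ |\tau|>r} w(\tau)\ \le\ \frac{n(1+\rho)^{1-r}}{\epsilon-\rho}.
\]

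The only real choice in the argument is the value of $\rho$, and this is where I would focus the care. Take $\rho=\epsilon-\frac{1+\epsilon}{r}$; the hypothesis $r\ge 1+1/\epsilon$ is exactly what makes $\rho\ge 0$, while $\rho<\epsilon$ is automatic, so this is admissible (with $\rho=0$ in the boundary case $r=1+1/\epsilon$, where $a_0=w$). Then $\epsilon-\rho=\frac{1+\epsilon}{r}$ and $1+\rho=(1+\epsilon)\frac{r-1}{r}$, so substituting and simplifying gives
\[
\frac{n(1+\rho)^{1-r}}{\epsilon-\rho}\ =\ nr(1+\epsilon)^{-r}\Bigl(1+\tfrac{1}{r-1}\Bigr)^{r-1}\ \le\ e\,nr\,(1+\epsilon)^{-r},
\]
using $(1+1/k)^k\le e$ with $k=r-1>0$. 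I do not expect a genuine obstacle beyond bookkeeping; the main point to get right is verifying the admissibility of the chosen $\rho$ and carrying out this last substitution cleanly.
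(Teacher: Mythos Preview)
Your proof is correct and follows essentially the same route as the paper: reduce to a weight sum via Proposition~\ref{a1wprop}, pass to adjusted weights $a_\rho$ to peel off a factor $(1+\rho)^{-r}$, bound $\sum_{\tau\in\Gamma}a_\rho(\tau)$ by $n(1+\rho)/(\epsilon-\rho)$ using the variable-indexed cliques and Corollary~\ref{a1Aweight-bound2}, and then optimize with $\rho=\epsilon-(1+\epsilon)/r$. Your final substitution is in fact a bit more explicit than the paper's, which simply records the intermediate form $nr^r/((r-1)^{r-1}(1+\epsilon)^r)$ before invoking $(1+1/(r-1))^{r-1}\le e$.
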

\begin{proof}
For any $\rho \in [0, \epsilon)$, sum over such WDs to obtain:
{\allowdisplaybreaks
\begin{align*}
\sum_{\substack{ \tau \in \Gamma \\ |\tau| \geq r}} P( \text{$\tau$ compatible with $R$} ) &= \sum_{\substack{ \tau \in \Gamma \\ |\tau| \geq r}} w(\tau) \leq (1+\rho)^{-r} \sum_{\substack{ \tau \in \Gamma  \\ |\tau| \geq r}} w(\tau) (1+\rho)^{|\tau|} = (1+\rho)^{-r} \sum_{\substack{ \tau \in \Gamma \\ |\tau| \geq r}} a_{\rho}(\tau) \\
&\leq (1+\rho)^{-r} \sum_{i \in [n]} \sum_{B: S_B \ni i} \sum_{\tau \in \Gamma(B)} a_{\rho}(\tau) \\
&\leq (1+\rho)^{-r} n \frac{1+\rho}{\epsilon - \rho} \qquad \text{by Corollary~\ref{a1Aweight-bound2}}
\end{align*}
}

Now take $\rho = \epsilon - (1 + \epsilon)/r$. By our condition $r \geq 1 + 1/\epsilon$ we have $\rho \in [0,\epsilon)$ and so Proposition~\ref{a1Aweight-bound1} applies. Hence the expected number of such WDs is thus at most $\frac{n r^r}{(r-1)^{r-1} (1+\epsilon)^r} \leq e n r (1+\epsilon)^{-r}$.
\end{proof}

\begin{corollary}
\label{a1Acor1}
Whp, every element of $\Gamma^R$ contains $O(\frac{\log (n/\epsilon)}{\epsilon})$ nodes. Whp all but $\frac{10 \log n}{\epsilon}$ elements of $\Gamma^R$ contain at most $\frac{10 \log n}{\epsilon}$ nodes.
\end{corollary}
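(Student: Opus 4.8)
The plan is to derive both statements from Proposition~\ref{a1Abound-prop2} by the first-moment method, using a different node-count threshold $r$ in each case. A convenient preliminary estimate is that, since $\epsilon \in (0,1/2)$, one has $\ln(1+\epsilon) \geq \epsilon/2$ (both sides vanish at $\epsilon=0$ and the left-hand side has the larger derivative on $(0,1)$), so that $(1+\epsilon)^{-r} \leq e^{-r\epsilon/2}$. This rewrites the bound $enr(1+\epsilon)^{-r}$ of Proposition~\ref{a1Abound-prop2} as $enr\,e^{-r\epsilon/2}$, in which the decay factor depends only on the product $r\epsilon$; this is what keeps the estimates meaningful even when $\epsilon$ is very small.

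For the first statement, I would take $r = c\,\epsilon^{-1}\log(n/\epsilon)$ for a sufficiently large absolute constant $c$, say $c=10$. One checks first that $r \geq 1+1/\epsilon$, so Proposition~\ref{a1Abound-prop2} applies and the expected number of single-sink WDs compatible with $R$ having more than $r$ nodes is at most $enr\,e^{-r\epsilon/2} = enr\,(n/\epsilon)^{-c/2}$. Writing $N = n/\epsilon \geq n$, this equals $ecN\log N\cdot N^{-c/2} = ec\log N\cdot N^{1-c/2}$, which for $c=10$ is $O(\log N / N^{4}) = O(n^{-3}) = n^{-\Omega(1)}$. By the first-moment method (Markov's inequality applied to the number of such WDs), with probability $1-n^{-\Omega(1)}$ no single-sink WD compatible with $R$ has more than $r$ nodes, i.e.\ every element of $\Gamma^R$ has $O(\epsilon^{-1}\log(n/\epsilon))$ nodes.

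For the second statement, I would instead take $r = 10\,\epsilon^{-1}\log n$ (which again satisfies $r \geq 1+1/\epsilon$), and let $Z$ denote the number of elements of $\Gamma^R$ containing more than $r$ nodes; Proposition~\ref{a1Abound-prop2} gives $\bE[Z] \leq enr(1+\epsilon)^{-r} \leq enr\,e^{-r\epsilon/2}$. Applying Markov's inequality at the threshold $r$ yields $\Pr[Z \geq r] \leq \bE[Z]/r \leq en\,e^{-r\epsilon/2} = en\,e^{-5\log n} = e\,n^{-4} = n^{-\Omega(1)}$. Hence whp $Z < r = \tfrac{10\log n}{\epsilon}$, which is precisely the assertion that all but $\tfrac{10\log n}{\epsilon}$ elements of $\Gamma^R$ contain at most $\tfrac{10\log n}{\epsilon}$ nodes.

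I expect no genuine obstacle, since all the substance is already in Proposition~\ref{a1Abound-prop2}; the only mild care needed is in the small-$\epsilon$ regime, where $N = n/\epsilon$ and the thresholds $r$ both blow up. The point is that the decay $e^{-r\epsilon/2}$ is controlled by $r\epsilon$, which is $\Theta(\log(n/\epsilon))$ (resp.\ $\Theta(\log n)$) regardless of how small $\epsilon$ is, while the polynomial prefactor $enr$ is swallowed by the surplus powers of $N \geq n$ in $(n/\epsilon)^{-c/2}$. Finiteness of $\Gamma^R$ is not needed for the count $Z$ to make sense: $\bE[Z]$ is the convergent sum of the compatibility probabilities bounded in Proposition~\ref{a1Abound-prop2}, and the first-moment argument uses nothing beyond that.
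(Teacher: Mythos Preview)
Your proposal is correct and is exactly the approach the paper takes: the paper's proof is the single sentence ``This follows immediately from Markov's inequality and Proposition~\ref{a1Abound-prop2},'' and you have simply filled in the two instantiations of $r$ and the accompanying arithmetic. The elementary estimate $(1+\epsilon)^{-r}\le e^{-r\epsilon/2}$ is a reasonable way to make the decay explicit, and your handling of the two thresholds (Markov at level $1$ for the first claim, Markov at level $r$ for the second) is the intended reading.
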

\begin{proof}
This follows immediately from Markov's inequality and Proposition~\ref{a1Abound-prop2}.
\end{proof}

\begin{corollary}
\label{a1Acor2}
Whp, all WDs compatible with $R$ have height $O(\frac{\log n}{\epsilon})$.
\end{corollary}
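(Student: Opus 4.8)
The plan is to deduce the statement from Corollary~\ref{a1Acor1}, which already bounds the number of nodes in every single-sink WD compatible with $R$. The bridge is the observation that a long directed path inside a compatible WD can be completed to a \emph{single-sink} compatible WD that still contains the whole path; thus a WD of large height forces a single-sink compatible WD of large size, and the latter is ruled out whp.

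In detail, let $G$ be any WD compatible with $R$, and let $w_0 \prec w_1 \prec \cdots \prec w_h$ be a directed path in $G$ of maximum length, so the height of $G$ is $h$. Consider the prefix $H = G(w_h)$ in the sense of Definition~\ref{a1prefix-def}. I would first check that $H$ is single-sink: every vertex of $H$ other than $w_h$ has, by construction, a directed path to $w_h$ and hence an out-edge inside $H$, while $w_h$ can have no out-edge inside $H$, since an edge $w_h \prec u$ with $u \in H$ would close a directed cycle in the DAG $G$; thus $w_h$ is the unique sink of $H$, so $H \in \Gamma$. Moreover each of $w_0, \dots, w_{h-1}$ has a directed path to $w_h$, so all $h+1$ vertices $w_0, \dots, w_h$ lie in $H$, giving $|H| \geq h+1$. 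By Proposition~\ref{a1xprop1}, $H$ is compatible with $R$, i.e.\ $H \in \Gamma^R$.

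Now I would condition on the event of Corollary~\ref{a1Acor1} that every element of $\Gamma^R$ has $O(\frac{\log(n/\epsilon)}{\epsilon})$ nodes; this event depends only on $R$ and holds whp. On this single event, the bound $h+1 \leq |H| = O(\frac{\log(n/\epsilon)}{\epsilon})$ holds simultaneously for every WD $G$ compatible with $R$, since the construction above applies to each such $G$ separately and needs no union bound over the (possibly enormous) family of compatible WDs. Finally, under the standing assumption $\epsilon \geq n^{-O(1)}$ we have $\log(n/\epsilon) = O(\log n)$, so whp every WD compatible with $R$ has height $O(\frac{\log n}{\epsilon})$.

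I do not expect any genuine obstacle: the only points needing a moment's care are that $G(w_h)$ is truly single-sink and that compatibility is inherited by prefixes, and both are immediate from Definition~\ref{a1prefix-def} and Proposition~\ref{a1xprop1}. (If one preferred to avoid routing through Corollary~\ref{a1Acor1}, one could instead apply Proposition~\ref{a1Abound-prop2} directly with $r$ of order $\epsilon^{-1}\log n$ and observe that any height-$r$ WD yields a single-sink compatible WD with more than $r$ nodes; but going through Corollary~\ref{a1Acor1} is the cleanest route.)
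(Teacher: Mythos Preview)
Your reduction to a single-sink prefix is clean and correct, and it legitimately yields $h+1 \leq |H| = O(\frac{\log(n/\epsilon)}{\epsilon})$ via the first clause of Corollary~\ref{a1Acor1}. The gap is in the last line: you invoke a ``standing assumption $\epsilon \geq n^{-O(1)}$'' to turn $\log(n/\epsilon)$ into $\log n$, but the paper makes no such assumption, so as stated your argument only proves the weaker bound $O(\frac{\log(n/\epsilon)}{\epsilon})$.

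The paper closes this gap by a small twist on your idea. Instead of extracting \emph{one} large single-sink prefix from a height-$T$ WD $G$, it extracts $T$ of them: for each $i=1,\dots,T$ pick a node $v$ of height $i$ and form $G(v) \in \Gamma^R$, which has height (hence size) at least $i$. Thus $\Omega(T)$ distinct elements of $\Gamma^R$ have size $\Omega(T)$. Now the \emph{second} clause of Corollary~\ref{a1Acor1}---at most $\frac{10\log n}{\epsilon}$ elements of $\Gamma^R$ exceed size $\frac{10\log n}{\epsilon}$---forces $T = O(\frac{\log n}{\epsilon})$ directly, with no $\log(1/\epsilon)$ term and no hypothesis on $\epsilon$. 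Your route is simpler and suffices for every algorithmic application in the paper (where $\epsilon$ is indeed inverse-polynomial), but the paper's counting trick is what matches the corollary exactly.
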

\begin{proof}
Suppose that there is a WD $G$ of height $T$ compatible with $R$. Then for $i = 1, \dots, T$ there is a single-sink WD of height $i$ compatible with $R$ (take the graph $G(v)$, where $v$ is a node of height $i$.) This implies that there $\Omega(T)$ members of $\Gamma^R$ of height $\Omega(T)$. By Proposition~\ref{a1Acor1}, this implies $T = O( \frac{\log n}{\epsilon})$.
\end{proof}

Corollary~\ref{a1Acor2} leads to a better bound on the complexity of the Parallel Resampling Algorithm. The following Proposition~\ref{a1Aprop-bound} is remarkable in that the complexity is phrased solely in terms of the number of variables $n$ and the slack $\epsilon$, and is otherwise independent of $\mathcal B$.

\begin{proposition}
\label{a1Aprop-bound}
Suppose that the Shearer criterion is satisfied with $\epsilon$-slack. Then whp the Parallel Resampling Algorithm terminates after $O( \frac{\log n}{\epsilon} )$ rounds. 

Suppose we have a Bad-Event Checker in time $T$ and polynomial processors. Then the Parallel Resampling Algorithm can be executed using $n^{O(1)}/\epsilon$ processors with an expected run-time of $O(\frac{(\log n) (T + \log ^2 n)}{\epsilon})$ (in the EREW model) or $O(\frac{(\log n) (T + \log n)}{\epsilon})$ (in the CRCW model).
\end{proposition}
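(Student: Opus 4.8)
The plan is to relate the number of rounds of the Parallel Resampling Algorithm to the height of the Full Witness DAG $\hat G$ obtained by running the algorithm against the resampling table $R$. First I would show that the algorithm terminates within $O(\mathrm{height}(\hat G))$ rounds. Assign to each node of $\hat G$ the round in which its resampling occurred, and consider any node $v$ labeled $B$ resampled in some round $k \geq 2$. I claim there is a node $u \prec v$ in $\hat G$ resampled in round $k-1$. Indeed, $B$ is true at the start of round $k$; either some variable of $S_B$ was resampled during round $k-1$, in which case some bad event $\sim B$ lies in the maximal independent set $I_{k-1}$, or else the values on $S_B$ are unchanged since the start of round $k-1$, so $B$ was already true then, and by maximality of $I_{k-1}$ either $B$ itself or some neighbor of $B$ lies in $I_{k-1}$. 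In every case a node $u$ with label $\sim B$ is resampled in round $k-1$, and since $u$ precedes $v$ in time we have $u \prec v$ by the construction of $\hat G$. Iterating produces a directed path in $\hat G$ with a node in each of the rounds $1,\dots,k$, so $k = O(\mathrm{height}(\hat G))$. By Proposition~\ref{a1Agcompat}, $\hat G$ is compatible with $R$, so Corollary~\ref{a1Acor2} gives $\mathrm{height}(\hat G) = O(\log n/\epsilon)$ whp, establishing the first claim. On the exponentially unlikely complementary event we may fall back on the sequential algorithm.

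For the complexity bound, the work in each round is: (i) one invocation of the Bad-Event Checker, costing $T$ time and $\mathrm{poly}(n)$ processors; (ii) a single MIS computation on the graph whose vertices are the currently-true bad events and whose edges record the $\sim$ relation; and (iii) $O(\log n)$-time housekeeping to assemble this graph. The key point for (ii) is bounding the number of true bad events at the start of any round. Given a bad event $B$ true at the start of round $k$, form the partial FWD recording rounds $1,\dots,k-1$, append a new sink node $v$ labeled $B$ with an incoming edge from every earlier node whose label is $\sim B$, and take the prefix at $v$; exactly as in the proof of Proposition~\ref{a1Agcompat} the configuration $X^v$ agrees with the current assignment on $S_B$, so $B$ holds on it, and together with Propositions~\ref{a1wprop} and~\ref{a1xprop1} this exhibits a single-sink WD compatible with $R$. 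Distinct true bad events within a fixed round yield distinct such WDs (their sinks carry distinct labels), so the count is at most $|\Gamma^R|$; since $\mathbf{E}\,|\Gamma^R| = \sum_{\tau \in \Gamma} w(\tau) \le n/\epsilon$ by Proposition~\ref{a1wprop} and Corollary~\ref{sscor}, Markov's inequality gives $|\Gamma^R| \le n^{O(1)}/\epsilon$ whp. Hence each MIS call runs on $n^{O(1)}/\epsilon$ vertices using $n^{O(1)}/\epsilon$ processors and, by Luby's algorithm, $O(\log^2 n)$ expected time in EREW or $O(\log n)$ time in CRCW (assuming $\epsilon \ge n^{-O(1)}$, so $\log(1/\epsilon) = O(\log n)$). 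Multiplying the per-round cost $O(T + \log^2 n)$ (resp.\ $O(T + \log n)$) by the $O(\log n/\epsilon)$ rounds from the first part, and using linearity of expectation for the randomized MIS, yields the stated bounds.

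The main obstacle is the first-part combinatorial step: pinning down exactly why a resampling in round $k$ must be ``caused'' by a resampling of a dependent bad event in round $k-1$ that is moreover a predecessor of it in $\hat G$. The case analysis (already-true versus newly-true at the start of round $k-1$, and in-$I_{k-1}$ versus having-a-neighbor-in-$I_{k-1}$) is the crux; once it is in place, the rest follows by substituting the WD-counting results (Corollaries~\ref{a1Acor2} and~\ref{sscor}) already established, together with standard PRAM primitives for sorting, graph construction, and MIS.
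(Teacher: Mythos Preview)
Your proposal is correct and takes essentially the same approach as the paper: relate the number of rounds to the depth of $\hat G$, invoke Corollary~\ref{a1Acor2}, and bound the MIS workload via the expected count of single-sink WDs compatible with $R$. The only notable difference is in the MIS-time accounting: where you bound each round's instance by $|\Gamma^R| \leq n^{O(1)}/\epsilon$ and then need the extra hypothesis $\epsilon \geq n^{-O(1)}$ to get $O(\log^2 n)$ per round, the paper instead sums $\sum_i \log^2 v_i$ over all rounds and applies concavity together with $\mathbf{E}[\sum_i v_i] \leq n/\epsilon$ to obtain $\mathbf{E}\bigl[\sum_i \log^2 v_i\bigr] \leq \epsilon^{-1}\log^3 n$ directly, with no lower-bound assumption on $\epsilon$.
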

\begin{proof}
An induction on $i$ shows that if the Parallel Resampling Algorithm runs for $i$ steps, then $\hat G$ has depth $i$, and it is compatible with $R$. By Corollary~\ref{a1Acor2} whp this implies that $i = O( \frac{\log n}{\epsilon} )$. 

This implies that the total time needed to identify true bad events is $O(i T) \leq O(\frac{T \log n}{\epsilon})$.

We next compute the time required for MIS calculations. We only show the calculation for the EREW model, as the CRCW bound is nearly identical. Suppose that at stage $i$ the number of bad events which are currently true is $v_i$. Then the total time spent calculating MIS, over the full algorithm, is $\sum_{i=1}^t O(\log^2 v_i)$.

Since $\log x$ is a concave-down function of $x$, this sum is at most $O(t \log^2(\sum v_i/t))$. On the other hand, for each bad event which is at true at each stage, one can construct a distinct corresponding single-sink WD compatible with $R$. Hence, $\bE[\sum v_i] \leq \sum_{\tau \in \Gamma} w(\tau) \leq n/\epsilon$. As $t \leq \frac{\log n}{\epsilon}$, we have $\bE[t \log^2(\sum v_i/t)] \leq \epsilon^{-1} \log^3 n$. This shows the bound on the time complexity of the algorithm.

The expected number of bad events which are ever true is at most the weight of all single-sink WDs, which is at most $W \leq n/\epsilon$. By Markov's inequality, whp the total number of bad events which are ever true is bounded by $n^{O(1)} / \epsilon$. The processor bound follows from this observation.
\end{proof}

\section{Mutual consistency of witness DAGs}
\label{a1Asec4}
In Section~\ref{a1Asec2}, we have seen conditions for WDs to be compatible with a \emph{given} resampling table $R$. In this section, we examine when a set of WDs can be mutually consistent, in the sense that they could all be prefixes of some (unspecified) FWD.

\begin{definition}[Consistency of $G, G'$]
Let $G, G'$ be WDs. We say that $G$ is \emph{consistent} with $G'$ is, for all variables $i$, either $G[i]$ is an initial segment of $G'[i]$ or $G'[i]$ is an initial segment of $G[i]$, both of these as labeled graphs. (Carefully note the position of the quantifiers: If $n=2$ and $G[1]$ is an initial segment of $G'[1]$ and $G'[2]$ is an initial segment of $G[2]$, then $G, G'$ are consistent.)

Let $\mathcal G$ be any set of WDs. We say that $\mathcal G$ is \emph{pairwise consistent} if $G, G'$ are consistent with each other for all $G, G' \in \mathcal G$.
\end{definition}

\begin{proposition}
\label{a1Ap2}
Suppose $H_1, H_2$ are prefixes of $G$. Then $H_1$ is consistent with $H_2$.
\end{proposition}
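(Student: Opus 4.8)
The plan is to reduce the statement to a claim about a single variable at a time. By the definition of consistency, it suffices to show that for every variable $i$, one of $H_1[i]$, $H_2[i]$ is an initial segment of the other, as labeled graphs. The heart of the argument is the following observation: if $H = G(v_1,\dots,v_l)$ is any prefix of $G$, then for every variable $i$ the vertex set of $H[i]$ is an initial segment of the linearly ordered path $G[i]$, and moreover $H[i]$ coincides, as a labeled graph, with that initial segment of $G[i]$ (both inherit their labels and their induced edges from $G$).

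To prove this observation, recall that $G[i]$ is linearly ordered by $\prec$, say $w_1 \prec w_2 \prec \cdots \prec w_m$, where in particular $w_j \prec w_{j+1}$ means there is an edge from $w_j$ to $w_{j+1}$ in $G$. I would show that the vertex set of $H[i]$ is closed under $\prec$-predecessors within $G[i]$: if $w_j \in H[i]$ and $j' < j$, then since $w_j \in H$ there is a path from $w_j$ to some $v_s$, and prepending the sub-path $w_{j'} \prec w_{j'+1} \prec \cdots \prec w_j$ yields a path from $w_{j'}$ to $v_s$; hence $w_{j'} \in H$, and since $w_{j'}$ involves variable $i$ we get $w_{j'} \in H[i]$. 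This is essentially the closure-under-predecessors argument already used to prove $H(w) = G(w)$ inside the proof of Proposition~\ref{a1xprop1}. A subset of $\{w_1 \prec \cdots \prec w_m\}$ closed under $\prec$-predecessors is exactly $\{w_1,\dots,w_k\}$ for $k = |H[i]|$, so $H[i]$ is the initial segment of $G[i]$ of that length.

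With the observation in hand the proposition is immediate: for each variable $i$, both $H_1[i]$ and $H_2[i]$ are initial segments of the common path $G[i]$, so, comparing lengths, the shorter of the two is an initial segment of the longer, as labeled graphs. Since this holds for every $i$, $H_1$ is consistent with $H_2$. The only point that genuinely needs care is orientation: a prefix retains the $\prec$-earliest portion of each variable's path (not the latest), so ``initial segment'' must be read with respect to the $\prec$-order on $G[i]$; once that convention is fixed, the remainder is bookkeeping and there is no substantive obstacle.
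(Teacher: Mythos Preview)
Your proof is correct and follows essentially the same approach as the paper: you show that each $H_j[i]$ is an initial segment of $G[i]$ via closure under $\prec$-predecessors, then conclude that two initial segments of the same path are comparable. The paper's proof is terser (it simply notes that $w_1 \prec w_2 \in H_j$ forces $w_1 \in H_j$), but the underlying idea is identical; your version spells out the labeled-graph coincidence more carefully, which is a virtue.
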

\begin{proof}
Observe that for any $w_1 \prec w_2 \in H_j$, we must have $w_1 \in H_j$ as well. It follows that $H_j[i]$ is an initial segment of $G[i]$ for any $i \in [n]$.  As both $H_1[i]$ and $H_2[i]$ are initial segments of $G[i]$, one of them must be an initial segment of the other.
\end{proof}

\begin{definition}[Merge of two consistent WDs]
Let $G, G'$ be consistent WDs. Then we define the \emph{merge} $G \vee G'$ as follows. If either $G$ or $G'$ has a node $v$ with an extended label $(B,k)$, then we create a corresponding node $w \in G \vee G'$ labeled by $B$. We refer to the \emph{corresponding label} of $w$ as $(B,k)$.

Now, let $v_1, v_2 \in G \vee G'$ have corresponding label $(B_1,k_1)$ and $(B_2, k_2)$. We create an edge from $v_1$ to $v_2$ if either $G$ or $G'$ has an edge between vertices with extended label $(B_1, k_1), (B_2, k_2)$ respectively.
\end{definition}

Note that for every vertex $v \in G$ with extended label $(B,k)$, there is a vertex in $G \vee G'$ with corresponding label $(B,k)$; we will abuse notation slightly and use $v$ to refer this vertex in $G \vee G'$ as well.

\begin{proposition}
\label{a1pathprop}
Let $G, G'$ be consistent WDs and let $H = G \vee G'$. If there is a path $v_1, \dots, v_{\ell}$ in $H$ and $v_{\ell} \in G$, then also $v_1, \dots, v_{\ell} \in G$.
\end{proposition}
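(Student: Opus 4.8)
The plan is to prove this by reverse induction along the path, establishing the implication ``$v_{j+1} \in G \Rightarrow v_j \in G$'' for each $j$; since $v_l \in G$ by hypothesis, iterating this down to $j=1$ gives $v_1, \dots, v_l \in G$. So fix $j$ and assume $v_{j+1} \in G$. By the definition of the merge, the edge $v_j \prec v_{j+1}$ of $H = G \vee G'$ comes from an edge between the correspondingly extended-labelled vertices of $G$ or of $G'$. If it comes from $G$, then in particular $v_j \in G$ and we are done immediately, so the whole content of the argument is the case where the edge comes from $G'$. In that case both $v_j$ and $v_{j+1}$ lie in $G'$, and $v_j \prec v_{j+1}$ holds in $G'$.

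Next I would bring in the comparability conditions and consistency. Since $G'$ has an edge between $v_j$ and $v_{j+1}$, their labels $B, B'$ must satisfy $B \sim B'$ (the comparability conditions forbid edges between independent events), so there is a variable $i \in S_B \cap S_{B'}$. Hence both $v_j$ and $v_{j+1}$ belong to $G'[i]$, and $v_j$ precedes $v_{j+1}$ in the linear order of $G'[i]$. On the other hand $v_{j+1}$ is labelled $B'$ with $i \in S_{B'}$ and lies in $G$, so $v_{j+1} \in G[i]$ as well. Now apply consistency of $G$ with $G'$: one of $G[i], G'[i]$ is an initial segment of the other as labelled graphs, so they share a common prefix, and every node lying in the shorter path also lies in the longer one in the same position.

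The conclusion then follows by a two-line case analysis. If $G'[i]$ is an initial segment of $G[i]$, then $v_j \in G'[i]$ already gives $v_j \in G[i] \subseteq G$. If instead $G[i]$ is an initial segment of $G'[i]$, then $v_{j+1} \in G[i]$ means $v_{j+1}$ sits within the common prefix; since $v_j$ precedes $v_{j+1}$ in $G'[i]$, $v_j$ sits strictly earlier in that same prefix, so again $v_j \in G[i] \subseteq G$. Either way $v_j \in G$, completing the induction. The argument is entirely combinatorial; it uses nothing about weights or the Shearer criterion. The one place to be careful is the last step's bookkeeping — namely that ``initial segment as labelled graphs'' genuinely identifies the nodes of the common prefix (together with their extended labels, which are recoverable from the graph), so that ``$v_j$ lies in $G'[i]$'' can be upgraded to ``$v_j$ lies in $G[i]$''; this is where I would expect the only real friction, and it should be dispatched directly from the definition of a prefix of a path and the identification of nodes with extended labels.
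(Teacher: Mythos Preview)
Your proof is correct and takes essentially the same approach as the paper: both reduce to the step ``if $v_{j+1}\in G$ and there is an edge $v_j\prec v_{j+1}$ in $H$, then $v_j\in G$,'' and both discharge it by locating a shared variable $i$ for the two labels and invoking the initial-segment condition on $G[i],G'[i]$. The only cosmetic difference is that the paper packages this as a minimal-counterexample contradiction (take the least $i$ with $v_i,\dots,v_l\in G$ and show $i>1$ is impossible), whereas you phrase it as a direct reverse induction with an explicit two-case analysis; your version is slightly more verbose but arguably cleaner on the bookkeeping you flagged.
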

\begin{proof}
Suppose that this path has corresponding labels $(B_1, k_1), \dots, (B_{\ell}, k_{\ell})$. Let $i \leq \ell$ be minimal such that $v_i, \dots, v_{\ell}$ are all in $G$. (This is well-defined as $v_{\ell} \in G$). If $i = 1$ we are done. 

Otherwise, we have $v_i \in G, v_{i-1} \in G' - G$. Note that $B_{i-1} \sim B_{i}$, so let $j \in S_{B_{i-1}} \cap S_{B_i}$. Note that $v_i \in G[j], v_{i-1} \in G'[j]$. But observe that in $H$ there is an edge from $v_{i-1}$ to $v_i$. As $v_{i-1} \notin G$, this edge must have been present in $G'$. So $G'[j]$ contains the vertices $v_{i-1}, v_i$, in that order, while $G[j]$ contains only the vertex $v_i$. Thus, neither $G[j]$ or $G'[j]$ can be an initial segment of the other. This contradicts the hypothesis.
\end{proof}

\begin{proposition}
\label{a1welldefprop}
Let $G, G'$ be consistent WDs and let $H = G \vee G'$. Then $H$ is a WD and both $G$ and $G'$ are prefixes of it.
\end{proposition}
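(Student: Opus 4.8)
The plan is to check the three defining properties of a WD for $H$ --- that edges join only $\sim$-related labels, that $\sim$-related nodes are $\prec$-comparable, and that $H$ is acyclic --- and then to exhibit $G$, and symmetrically $G'$, as a prefix of $H$. The first property is free: every edge of $H$ is inherited from an edge of $G$ or of $G'$, each of which already obeys the comparability conditions.

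The first real step I would take is to pin down the paths of $H$: for each variable $i$, I claim $H[i]$ equals, as a linearly ordered labeled graph, whichever of $G[i]$, $G'[i]$ contains the other as an initial segment (one of them does, by consistency). The vertex set of $H[i]$ is clearly the union of the vertex sets of $G[i]$ and $G'[i]$, identified via corresponding labels, hence that of the longer one. For the edges, I would use that each $G[i]$ is a \emph{linear} order (this is exactly what the comparability conditions give, as remarked after the definition of a path): any two nodes $u, w \in H[i]$ both lie in the longer path and so are oriented by it, the shorter path --- being an initial segment --- cannot orient that pair the other way, and if $u$ or $w$ is absent from $G$ then $G$ contributes no edge between them. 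A byproduct of this analysis, which I will reuse, is \emph{orientation-agreement}: whenever $G$ and $G'$ both contain the two endpoints of an edge, they orient it the same way.

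Granting this, comparability for $H$ is immediate: given nodes $v, v'$ of $H$ with labels $B \sim B'$, pick $j \in S_B \cap S_{B'}$; then $v, v' \in H[j]$, a linear order, so one precedes the other. For acyclicity I would argue by contradiction: a directed cycle $v_1 \prec v_2 \prec \dots \prec v_\ell \prec v_1$ in $H$ has some vertex in $G$, say $v_1$; reading the cycle as a path $v_2, \dots, v_\ell, v_1$ of $H$ ending in $G$ and invoking Proposition~\ref{a1pathprop} puts every $v_k$ in $G$; then each cycle edge, having both endpoints in $G$, lies in $G$ by orientation-agreement, so $G$ contains the cycle --- contradicting that $G$ is a DAG. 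For the prefix claim I would take $v_1, \dots, v_l$ to be all vertices of $G$ (as nodes of $H$): each has a length-zero path to itself and so lies in $H(v_1, \dots, v_l)$, while conversely Proposition~\ref{a1pathprop} shows any node of $H$ with a path to some $v_k \in G$ already lies in $G$; so $H(v_1, \dots, v_l)$ is induced on exactly the vertices of $G$, and its edges --- the $H$-edges between $G$-vertices --- are precisely the edges of $G$ by orientation-agreement. Hence $G = H(v_1, \dots, v_l)$, and symmetrically for $G'$.

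The one step I expect to need genuine care is the claim about $H[i]$, equivalently orientation-agreement: this is the only place the consistency hypothesis is actually used, and it relies on each $G[i]$ being a full linear order (a transitive tournament), so that ``$G[i]$ is an initial segment of $G'[i]$'' really does determine the relative order of every shared pair. The rest is routine bookkeeping with the merge construction and Proposition~\ref{a1pathprop}.
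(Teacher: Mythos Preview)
Your proposal is correct and follows essentially the same route as the paper: acyclicity via Proposition~\ref{a1pathprop}, comparability via the observation that any two $\sim$-related nodes lie in the longer of $G[i]$, $G'[i]$, and the prefix claim by taking all vertices of $G$ and again invoking Proposition~\ref{a1pathprop}. The one place you are more careful than the paper is in isolating what you call \emph{orientation-agreement}: the paper simply asserts that the cycle ``is present also in $G$'' once its vertices are, and likewise tacitly identifies the edge set of $H(v_1,\dots,v_l)$ with that of $G$, whereas you make explicit that this needs consistency (so that $G$ and $G'$ cannot orient a shared pair oppositely). That is a genuine detail worth spelling out, and your derivation of it from the initial-segment condition on the linearly ordered paths $G[i]$ is the right one.
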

\begin{proof}
Suppose that $H$ contains a cycle $v_1, \dots, v_{\ell}, v_1$, and suppose $v_1 \in G$. Then by Proposition~\ref{a1pathprop} the cycle $v_1, \dots, v_{\ell}, v_1$ is present also in $G$, which is a contradiction.

Next, we show that the comparability conditions hold for $H$. Suppose that $(B_1, k_1)$ and $(B_2, k_2)$ are the corresponding labels of vertices in $H$, and $B_1 \sim B_2$. So let $i \in S_{B_1} \cap S_{B_2}$. Without loss of generality, suppose that $G[i]$ is an initial segment of $G'[i]$. So $(B_1, k_1)$ and $(B_2, k_2)$ appear in $G'[i]$. Because of the comparability conditions for $G'$, there is an edge in $G'$ on these vertices, and hence there is an edge in $H$ as well. On the other hand, if there is an edge in $H$ between vertices $(B_1, k_1)$ and $(B_2, k_2)$, there must be such an edge in $G$ or $G'$ as well; by the comparability conditions this implies $B_1 \sim B_2$.

Finally, we claim that $G = H (v_1, \dots, v_{\ell})$ where $v_1, \dots, v_{\ell}$ are the vertices of $G$. It is clear that $G \subseteq H (v_1, \dots, v_{\ell})$. Now, suppose $w \in H (v_1, \dots, v_{\ell})$. Then there is a path $w, x_1, x_2, \dots, x_{k}, v$  where $x_1, \dots, x_k \in H$ and $v \in G$. By Proposition~\ref{a1pathprop}, this implies that $w \in G$.
\end{proof}

\begin{proposition}
\label{a1corr-prop}
If $v \in G \vee G'$ has corresponding label $(B,k)$, then $v$ also has extended label $(B,k)$.
\end{proposition}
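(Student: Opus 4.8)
The plan is to unwind the two definitions at play. By construction of the merge, a vertex $v \in G \vee G'$ with corresponding label $(B,k)$ comes from a vertex carrying \emph{extended} label $(B,k)$ in $G$ or in $G'$; without loss of generality say it comes from $G$, and (following the standing abuse of notation) call that vertex $v$ as well. Then $v$ is, by definition, the $k^{\text{th}}$ node of $G$ labeled $B$ in the $\prec$-order, and what must be shown is that $v$ is also the $k^{\text{th}}$ node labeled $B$ in the WD $H := G \vee G'$.

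First I would record the elementary fact that, in any WD, the nodes labeled by a fixed $B$ are totally ordered by $\prec$: since $B \sim B$, the comparability conditions force an edge between every pair of such nodes, producing an acyclic tournament, i.e. a transitive total order. Hence ``the $k^{\text{th}}$ node labeled $B$'' is unambiguous, and for two nodes labeled $B$, ``$w$ precedes $v$'' means exactly ``$w \prec v$''. Next I would invoke Proposition~\ref{a1welldefprop} to obtain that $G$ is a prefix of $H$; concretely, $G = H(u_1,\dots,u_l)$ where $u_1,\dots,u_l$ are the vertices of $G$. I would also use two immediate facts from the definition of the merge: every edge of $G$ is an edge of $H$, and the map sending each vertex of $G$ to its corresponding vertex of $H$ is injective and label-preserving.

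The core step is to compare the set $A_G$ of nodes of $G$ labeled $B$ with $w \preceq v$ against the set $A_H$ of nodes of $H$ labeled $B$ with $w \preceq v$, and to show the vertex-correspondence restricts to a bijection $A_G \to A_H$. Since it is injective, label-preserving, and sends edges of $G$ to edges of $H$, it maps $A_G$ into $A_H$. For surjectivity, take $w \in A_H$: either $w = v$, which lies in $G$, or $w \prec v$, in which case there is an edge $w \to v$ in $H$; since $v$ is a vertex of $G$ it has a (trivial) path to some $u_j$, so $w$ has a path to that $u_j$, whence $w \in H(u_1,\dots,u_l) = G$. In either case $w$ is in the image and corresponds to an element of $A_G$. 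Therefore $|A_H| = |A_G| = k$ (in $G$ there are $k-1$ nodes labeled $B$ strictly below $v$, together with $v$ itself), so exactly $k$ nodes of $H$ labeled $B$ satisfy $w \preceq v$; that is, $v$ is the $k^{\text{th}}$ node of $H$ labeled $B$, so its extended label in $H$ is $(B,k)$.

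I do not expect a genuine obstacle. The only points needing mild care are the bookkeeping around the vertex identification between $G$ and $H$ (injectivity, label-preservation, and that it is order-preserving along a fixed-label chain) and the observation that the comparability conditions really do make $\prec$ a total order on the nodes of a given label, which is what licenses counting predecessors. Leaning on the prefix description $G = H(u_1,\dots,u_l)$ from Proposition~\ref{a1welldefprop} keeps the path-chasing down to a single line.
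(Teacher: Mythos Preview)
Your proof is correct. Both your argument and the paper's count the $B$-labeled predecessors of $v$ in $H = G \vee G'$ and show there are exactly $k-1$, but the routes differ. The paper works directly from the merge definition: the edge rule guarantees that any $B$-labeled node with an edge into $v$ has corresponding label $(B,l)$ with $l<k$ (upper bound $k-1$), while the existence of $(B,1),\dots,(B,k-1)$ in whichever of $G,G'$ contained $(B,k)$ supplies at least $k-1$ such predecessors. You instead invoke Proposition~\ref{a1welldefprop} to write $G = H(u_1,\dots,u_l)$ and use the prefix description to set up a bijection between $A_G$ and $A_H$. Your route is a little more structural and leans on a prior result, whereas the paper's is self-contained from the merge definition; on the other hand, your bijection argument makes the comparison between orderings in $G$ and in $H$ explicit, which is conceptually clean. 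Either way the bookkeeping is the same, and your handling of the one nontrivial point---that $w \prec v$ in $H$ with $v \in G$ forces $w \in G$---is exactly what the prefix characterization delivers.
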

\begin{proof}
Because of our rule for forming edges in $G \vee G'$, the only edges that can go to $v$ from other nodes labeled $B$, would have corresponding labels $(B,\ell)$ for $\ell < k$. Thus, there are at most $k-1$ nodes labeled $B$ with an edge to $v$.

On the other hand, there must be nodes with extended label $(B,k)$ in $G$ or $G'$; say without loss of generality the first. Then $G$ must also have nodes with extended labels $(B,1), \dots, (B, k-1)$. These correspond to vertices $w_1, \dots, w_{k-1}$ with corresponding labels $(B,1), \dots, (B,k-1)$, all of which have an edge to $v$. So there are at least $k-1$ nodes labeled $B$ with an edge to $v$.

Thus, $G$ has exactly $k-1$ nodes  labeled $B$ with an edge to $v$ and hence $v$ has extended label $(B,k)$.
\end{proof}

\begin{proposition}
\label{a1Ap0}
The operation $\vee$ is commutative and associative. 
\end{proposition}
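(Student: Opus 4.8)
The plan is to establish commutativity and associativity by exhibiting a common-sense bijection at the level of extended labels and then checking that the edge sets agree. Commutativity is essentially immediate: the construction of $G \vee G'$ is symmetric in $G$ and $G'$ --- a node with corresponding label $(B,k)$ is created whenever \emph{either} graph has a node with extended label $(B,k)$, and an edge $(B_1,k_1) \to (B_2,k_2)$ is created whenever \emph{either} graph has it. Swapping the roles of $G$ and $G'$ produces literally the same node set and edge set, so $G \vee G' = G' \vee G$ as labeled graphs. The only subtlety worth a sentence is that consistency is itself a symmetric relation (the definition is phrased as "either $G[i]$ is an initial segment of $G'[i]$ or vice versa"), so $G \vee G'$ is defined exactly when $G' \vee G$ is.

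For associativity, I would first argue that if $G, G', G''$ are pairwise consistent, then all three merges $(G \vee G') \vee G''$, $G \vee (G' \vee G'')$, etc., are well-defined. This needs: (i) $G \vee G'$ is consistent with $G''$. Here I use Proposition~\ref{a1welldefprop}: $G$ and $G'$ are both prefixes of $H := G \vee G'$, and for each variable $i$ the path $H[i]$ is the longer of $G[i]$ and $G'[i]$ (both being initial segments of $H[i]$, and $H[i]$ containing nothing else since every node of $H$ lies in $G$ or $G'$). Since $G''[i]$ is an initial segment of, or extends, each of $G[i]$ and $G'[i]$, a short case analysis on which of $G[i], G'[i], G''[i]$ is longest shows $G''[i]$ and $H[i]$ are initial-segment-comparable. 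Thus $H$ is consistent with $G''$ and $(G \vee G') \vee G''$ is defined; by symmetry so are the other groupings.

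Then I would show $(G \vee G') \vee G''$ and $G \vee (G' \vee G'')$ have the same nodes and edges. For nodes: using Proposition~\ref{a1corr-prop} twice, the corresponding label of a node equals its extended label at each stage, so a node with extended label $(B,k)$ appears in $(G \vee G') \vee G''$ iff it appears in $G \vee G'$ or in $G''$, iff it appears in $G$, $G'$, or $G''$ --- a condition symmetric in the three graphs. For edges: an edge $(B_1,k_1) \to (B_2,k_2)$ is present in $(G \vee G') \vee G''$ iff it is present in $G \vee G'$ or in $G''$, iff it is present in $G$, $G'$, or $G''$; again symmetric. Hence both iterated merges equal the graph whose nodes are all extended labels occurring in any of the three and whose edges are all edges occurring in any of the three, so they coincide.

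The main obstacle is the consistency bookkeeping in the associativity argument: one must be careful that "$G$ is consistent with $G'$" and "$G \vee G'$ is consistent with $G''$" are exactly the hypotheses needed to even write down $(G \vee G') \vee G''$, and that the interplay of initial-segment relations among the three variable-paths $G[i], G'[i], G''[i]$ always resolves (this is where total orderings of initial segments of a fixed chain do the work). Once the well-definedness is in hand, the equality of the two triple-merges is a routine symmetry check on extended-label node sets and edge sets, leaning on Proposition~\ref{a1corr-prop} to keep the "corresponding label" and "extended label" notions interchangeable across nested merges.
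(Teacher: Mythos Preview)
Your proposal is correct and follows essentially the same approach as the paper: both arguments establish associativity by giving the symmetric characterization of the triple merge (nodes are all extended labels appearing in any of the three graphs, edges are all edges appearing in any of the three), relying on Proposition~\ref{a1corr-prop} to identify corresponding and extended labels. The paper's proof is terser and does not explicitly address the well-definedness issue you treat carefully; in fact, the paper defers the statement that $G_1 \vee G_2$ is consistent with $G_3$ to the immediately following Proposition~\ref{a1Ap1}, whereas you fold that argument into the associativity proof itself.
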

\begin{proof}
Commutativity is obvious from the symmetric way in which $\vee$ was defined. To show associativity, we can give the following symmetric characterization of $H = (G_1 \vee G_2) \vee G_3$:  If $G_1, G_2$ or $G_3$ has a node labeled $(B_1, k_1)$ then so does $H$; there is an edge in $H$ from $(B_1, k_1)$ to $(B_2, k_2)$ if there is such an edge in $G_1, G_2$ or $G_3$.
\end{proof}

\begin{proposition}
\label{a1Ap1}
If $G_1, G_2, G_3$ are pairwise consistent WDs, then $G_1 \vee G_2$ is consistent with $G_3$.
\end{proposition}
\begin{proof}
For any variable $i \in [n]$, note that either $G_1[i]$ is an initial segment of $G_2[i]$ or vice-versa. Also note that $(G_1 \vee G_2)[i]$ is the longer of $G_1[i]$ or $G_2[i]$.

Now we claim that for any variable $i$, either $G_3[i]$ is an initial segment of $(G_1 \vee G_2)[i]$ or vice-versa. Suppose without loss of generality that $G_1[i]$ is an initial segment of $G_2[i]$. Then $(G_1 \vee G_2)[i] = G_1[i]$. By definition of consistency, either $G_1[i]$ is an initial segment of $G_3[i]$ or vice-versa. So $(G_1 \vee G_2)[i]$ is an initial segment of $G_3[i]$ or vice-versa.
\end{proof}

In light of Propositions~\ref{a1Ap0} and \ref{a1Ap1}, we can unambiguously define, for any pairwise consistent set of WDs $\mathcal G = \{G_1, \dots, G_{\ell} \}$, the merge $$
\bigvee \mathcal G = G_1 \vee G_2 \vee G_3 \dots \vee G_{\ell}
$$

We can give another characterization of pairwise consistency, which is more illuminating although less explicit:
\begin{proposition}
\label{a1alt-char}
The WDs $G_1, \dots, G_{\ell}$ are pairwise consistent iff there is some WD $H$ such that $G_1, \dots, G_{\ell}$ are all prefixes of $H$.
\end{proposition}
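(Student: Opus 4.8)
The plan is to prove the two directions separately: ($\Leftarrow$) is a one-line consequence of Proposition~\ref{a1Ap2}, while for ($\Rightarrow$) the witness WD will be the merge $H := \bigvee \mathcal G = G_1 \vee \cdots \vee G_l$, whose relevant properties have essentially been assembled already in Propositions~\ref{a1Ap0}, \ref{a1Ap1}, and~\ref{a1welldefprop}.

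For ($\Leftarrow$), suppose each $G_i$ is a prefix of a single WD $H$. Then for any pair $i,j$, both $G_i$ and $G_j$ are prefixes of $H$, so Proposition~\ref{a1Ap2} gives that $G_i$ and $G_j$ are consistent; hence $\{G_1,\dots,G_l\}$ is pairwise consistent.

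For ($\Rightarrow$), assume $G_1,\dots,G_l$ are pairwise consistent; the cases $l \le 1$ are trivial, so take $l \ge 2$. By Propositions~\ref{a1Ap0} and~\ref{a1Ap1} the merge $H := \bigvee\mathcal G$ is well-defined, and it remains only to show each $G_i$ is a prefix of it. Fixing $i$, I would first show that the partial merge $G' := \bigvee_{j \ne i} G_j$ is consistent with $G_i$: this follows by induction on the number of merged terms using Proposition~\ref{a1Ap1}, where the correct inductive invariant to carry is the strengthened statement that a merge of pairwise-consistent WDs is consistent with any WD that is consistent with all of them (a naive induction tracking only ``consistent with $G_i$'' is insufficient, since folding one more $G_j$ into the partial merge via Proposition~\ref{a1Ap1} also requires the partial merge to be consistent with that $G_j$). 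Granting this, commutativity and associativity (Proposition~\ref{a1Ap0}) let us rewrite $H = G_i \vee G'$, and then Proposition~\ref{a1welldefprop}---which asserts that both operands of a merge of two consistent WDs are prefixes of the merge---shows that $G_i$ is a prefix of $H$. Since $i$ was arbitrary, all of $G_1,\dots,G_l$ are prefixes of $H$.

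The only delicate point is the consistency of $\bigvee_{j\ne i} G_j$ with $G_i$, handled by the strengthened inductive invariant above; note this is essentially the same reasoning already implicit in the discussion preceding the proposition that makes $\bigvee \mathcal G$ well-defined, so no genuinely new idea is needed. Everything else is direct bookkeeping on top of Propositions~\ref{a1Ap0}, \ref{a1Ap1}, \ref{a1welldefprop}, and~\ref{a1Ap2}.
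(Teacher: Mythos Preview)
Your proposal is correct and follows essentially the same approach as the paper: the backward direction via Proposition~\ref{a1Ap2}, and the forward direction by taking $H=\bigvee\mathcal G$ and invoking Proposition~\ref{a1welldefprop}. The paper's proof is simply terser---it writes only ``By Proposition~\ref{a1welldefprop}, each $G_i$ is a prefix of $H$''---leaving implicit the consistency of $G_i$ with $\bigvee_{j\neq i}G_j$ that you spell out, since that was already established in the discussion defining $\bigvee\mathcal G$.
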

\begin{proof}
For the forward direction: let $H = G_1 \vee \dots \vee G_l$. By Proposition~\ref{a1welldefprop}, each $G_i$ is a prefix of $H$. For the backward direction: by Proposition~\ref{a1Aprop2}, any $G_{i_1}, G_{i_2}$ are both prefixes of $H$, hence consistent.
\end{proof}

\begin{proposition}
Let $G_1, G_2$ be consistent WDs and $R$ a resampling table. Then $G_1 \vee G_2$ is compatible with $R$ iff both $G_1$ and $G_2$ are compatible with $R$.
\end{proposition}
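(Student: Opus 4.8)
The plan is to leverage two facts established earlier in this section: that $G_1$ and $G_2$ are both prefixes of $H := G_1 \vee G_2$ (Proposition~\ref{a1welldefprop}), and that compatibility with $R$ is inherited by prefixes (Proposition~\ref{a1xprop1}). The forward direction is then immediate: if $H$ is compatible with $R$, then since $G_1$ and $G_2$ are prefixes of $H$, Proposition~\ref{a1xprop1} gives that each of $G_1, G_2$ is compatible with $R$.

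For the backward direction, suppose both $G_1$ and $G_2$ are compatible with $R$, and let me verify the compatibility condition at an arbitrary node $v \in H$. Writing the corresponding label of $v$ as $(B,k)$, Proposition~\ref{a1corr-prop} tells us this is also the extended label of $v$, and by the definition of the merge $v$ arises from a node carrying extended label $(B,k)$ in $G_1$ or in $G_2$; say without loss of generality $v \in G_1$. Since $G_1$ is compatible with $R$, $B$ is true on $X_{G_1}^v$, so it suffices to show $X_H^v(i) = X_{G_1}^v(i)$ for every $i \in S_B$. Using the compact description of the configuration, this amounts to proving $|H(v)[i]| = |G_1(v)[i]|$. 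The key observation is that $G_1$ is a prefix of $H$, so $G_1[i]$ is an initial segment of $H[i]$; since $v$ itself lies in $G_1[i]$, every vertex of $H[i]$ lying at or before $v$ in the order $\prec$ already lies in $G_1[i]$, so $H(v)[i]$ and $G_1(v)[i]$ have the same vertex set and in particular the same cardinality. Hence $X_H^v$ agrees with $X_{G_1}^v$ on $S_B$, so $B$ is true on $X_H^v$; as $v$ was arbitrary, $H$ is compatible with $R$.

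I expect the only delicate point to be the bookkeeping around the several coexisting notions of ``the node $v$'' — as a vertex of $G_1$, as the corresponding vertex of $H$ under the abuse of notation introduced just after the definition of $\vee$, and via its corresponding/extended label — and the verification that $G_1(v)[i]$, as computed inside $G_1$, genuinely coincides with $H(v)[i]$ as computed inside $H$. This is exactly where Proposition~\ref{a1pathprop} (a path in $H$ terminating at a node of $G_1$ lies entirely within $G_1$) combines with the initial-segment property of prefixes to close the gap, so I do not anticipate a real obstacle; the argument is essentially a repackaging of the prefix and compatibility lemmas already in hand.
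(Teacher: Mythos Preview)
Your proposal is correct and takes essentially the same approach as the paper. The only difference is cosmetic: for the forward direction you invoke Proposition~\ref{a1xprop1} directly (prefixes inherit compatibility), whereas the paper re-derives the key equality $G_1(v) = (G_1\vee G_2)(v)$ from Proposition~\ref{a1pathprop}; for the backward direction both arguments establish $|H(v)[i]| = |G_1(v)[i]|$ via the same initial-segment/path reasoning.
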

\begin{proof}
For the forward direction: let $v \in G_1$ labeled by $B$. By Proposition~\ref{a1pathprop}, we have $G_1(v) = (G_1 \vee G_2)(v)$. Thus for $i \in S_B$ we have $|G_1(v)[i]| = |(G_1 \vee G_2)(v)[i]|$. This implies that $X_{G_1}^v = X_{G_1 \vee G_2}^v$.  By hypothesis, $B$ is true on $X_{G_1 \vee G_2}^{v}$ and hence $X_{G_1}^v$. As this is true for all $v \in G_1$, it follows that $G_1$ is compatible with $R$. Similarly, $G_2$ is compatible with $R$.

For the backward direction: Let $v \in G_1 \vee G_2$. Suppose without loss of generality that $v \in G_1$. As in the forward direction, we have $X_{G_1}^v = X_{G_1 \vee G_2}^v$; by hypothesis $B$ is true on the former so it is true on the latter. Since this holds for all $v \in G_1 \vee G_2$, it follows that $G_1 \vee G_2$ is compatible with $R$.
\end{proof}

\section{A new parallel algorithm for the LLL}
\label{a1Asec5}

In this section, we will develop a parallel algorithm to enumerate the entire set $\Gamma^R$. This will allow us to enumerate (implicitly) all WDs compatible with $R$. In particular, we are able to simulate all possible values for  the FWD $\hat G$, without running the Resampling Algorithm.

In a sense, both the Parallel Resampling Algorithm and our new parallel algorithm are building up $\hat G$. However, the Parallel Resampling Algorithm does this layer by layer, in an inherently sequential way: it does not  determine layer $i+1$ until it has fixed a value for layer $i$, and resolving each layer requires a separate MIS calculation. 

Our new algorithm breaks this sequential bottleneck by exploring, in parallel, all possible values for the computed MIS. Although this might seem like an exponential blowup, in fact we are able to reduce this to a polynomial size by taking advantage of two phenomena: first, we can represent the FWD in terms of single-sink WDs; second, a random resampling table drastically prunes the space of compatible WDs.

\subsection{Collectible witness DAGs} We will enumerate the set $\Gamma^R$ by building up its members node-by-node. In order to do so, we must keep track of a slightly more general type of WDs, namely, those derived by removing the root node from a single-sink WD. Such WDs have multiple sink nodes, which are all at distance two in the dependency graph. The set of such WDs is larger than $\Gamma^R$, but still polynomially bounded.

\begin{definition}[Collectible WD]
Suppose we are given a WD $G$, whose sink nodes are labeled $B_1, \dots, B_s$. We say that $G$ is \emph{collectible to $B$} if $B \sim B_1, \dots, B \sim B_s$.

We say that $G$ is \emph{collectible} if it is collectible to some $B \in \mathcal B$. Note that if $G \in \Gamma(B)$ then $G$ is collectible to $B$. We use \emph{CWD} as an abbreviation for collectible witness DAG.\footnote{This definition is close to the concept of partial witness trees introduced in \cite{det-lll}.}
\end{definition}

\begin{proposition}
Define 
$$
W' = \sum_{B \in \mathcal B} \frac{1}{P_{\Omega}(B)} \sum_{\tau \in \Gamma(B)} w(\tau)
$$

The expected total number of CWDs compatible with $R$ is at most $W'$.
\end{proposition}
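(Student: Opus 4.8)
The plan is to set up a weight-preserving bijection between collectible WDs and single-sink WDs, and then conclude via Proposition~\ref{a1wprop} and linearity of expectation.

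First I would note that, by Proposition~\ref{a1wprop} and linearity of expectation (valid since all terms are nonnegative, even though there are infinitely many collectible WDs), the expected number of CWDs compatible with $R$ is exactly $\sum_{G \text{ collectible}} w(G)$, where each collectible WD $G$ is counted once. So it suffices to show $\sum_{G \text{ collectible}} w(G) \leq W'$.

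The key construction is the following ``adjoin a root'' operation. Given a collectible WD $G$ and any $B \in \mathcal B$ to which $G$ is collectible, let $\tau_{G,B}$ be obtained from $G$ by adding one new node $v$ labeled $B$, together with an edge $w \prec v$ for every node $w$ of $G$ whose label is adjacent to $B$. I would check that $\tau_{G,B}$ is a valid WD: acyclicity is immediate because $v$ is a new node with no out-edges; the comparability conditions are inherited inside $G$, and for pairs involving $v$ they hold by construction (we joined $v$ to precisely the nodes of $G$ whose labels are adjacent to $B$). Since every sink of $G$ has a label adjacent to $B$, every node of $G$ has a path to $v$, so $v$ is the unique sink and $\tau_{G,B} \in \Gamma(B)$; also $w(\tau_{G,B}) = w(G)\, P_{\Omega}(B)$. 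Conversely, given any $\tau \in \Gamma(B)$ with sink $v$ (labeled $B$), deleting $v$ yields a WD $G$; as $v$ was the unique sink and has no out-edges, every node of $G$ whose label is adjacent to $B$ must have had a direct edge to $v$, so $G$ is collectible to $B$ and $\tau = \tau_{G,B}$. Hence, for each fixed $B$, the map $\tau \mapsto \tau - v$ is a bijection between $\Gamma(B)$ and the set of WDs collectible to $B$.

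Finally I would combine these: for each $B$, $\frac{1}{P_{\Omega}(B)} \sum_{\tau \in \Gamma(B)} w(\tau) = \sum_{G \text{ collectible to } B} w(G)$, so $W' = \sum_{B \in \mathcal B} \sum_{G \text{ collectible to } B} w(G)$. Since every collectible WD is collectible to at least one $B$ and all weights are nonnegative, $\sum_{G \text{ collectible}} w(G) \leq W'$, as desired. I do not expect a real obstacle here; the only points needing care are verifying that the adjoin/delete operations land in the correct WD classes (in particular that $v$ becomes the unique sink and that comparability is preserved), and observing that a single $G$ may be collectible to several $B$, which only causes $W'$ to overcount in our favor.
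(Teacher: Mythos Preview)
Your proposal is correct and follows essentially the same approach as the paper: both adjoin a new sink node labeled $B$ to a WD collectible to $B$, observe that this lands in $\Gamma(B)$ with weight multiplied by $P_{\Omega}(B)$, and then sum over $B$, using that a given collectible $G$ may be collectible to several $B$ to turn the equality into the desired inequality. The only difference is that you verify a full bijection between $\Gamma(B)$ and the WDs collectible to $B$, whereas the paper only needs (and only states) injectivity of $G \mapsto G'$; your extra direction is harmless but not needed.
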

\begin{proof}
Suppose that $G$ is a WD collectible to $B$. Then define $G'$ by adding to $G$ a new sink node labeled by $B$. As all the sink nodes in $G$ are labeled by $B' \sim B$, this $G'$ is a single-sink WD. Also,
$$
P(\text{$G$ compatible with $R$}) = w(G) = \frac{w(G')}{P_{\Omega}(B)}
$$

The total probability that there is some $G$ compatible with $R$ and collectible to $B$, is at most the sum over all such $G$. Each WD $G' \in \Gamma(B)$ appears at most once in the sum and so
\begin{align*}
\sum_{\text{$G$ collectible}} w(G) \leq \sum_{\substack{B \in \mathcal B\\\text{$G$ collectible to $B$}}} w(G) \leq \sum_{\substack{B \in \mathcal B \\ \text{$G$ collectible to $B$}}} \frac{w(G')}{P_{\Omega}(B)} \leq \sum_{B \in \mathcal B} \sum_{\tau \in \Gamma(B)} \frac{w(\tau)}{P_{\Omega}(B)} = W'
\end{align*}
\end{proof}

\begin{corollary}
\label{w1boundcorr}
We have $m \leq W' \leq \sum_{B \in \mathcal B} \frac{\mu(B)}{P_{\Omega}(B)}$.
\end{corollary}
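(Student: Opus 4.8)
Both bounds are immediate consequences of Proposition~\ref{a1Aprop2} together with a trivial observation about the simplest single-sink WDs. The plan is to handle the two inequalities separately.

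For the upper bound, I would simply invoke Proposition~\ref{a1Aprop2}, which states $\sum_{\tau \in \Gamma(B)} w(\tau) \leq \mu(B)$ for every $B \in \mathcal B$. Dividing through by $P_{\Omega}(B)$ and summing over all $B \in \mathcal B$ yields
$$
W' = \sum_{B \in \mathcal B} \frac{1}{P_{\Omega}(B)} \sum_{\tau \in \Gamma(B)} w(\tau) \leq \sum_{B \in \mathcal B} \frac{\mu(B)}{P_{\Omega}(B)},
$$
which is exactly the claimed right-hand inequality.

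For the lower bound, the key observation is that for each $B \in \mathcal B$ there is a single-sink WD $\tau_B \in \Gamma(B)$ consisting of a single node labeled $B$ (this trivially satisfies the comparability conditions and has a unique sink). Its weight is $w(\tau_B) = P_{\Omega}(B)$ by the definition of the weight of a WD. Hence $\sum_{\tau \in \Gamma(B)} w(\tau) \geq P_{\Omega}(B)$, so $\frac{1}{P_{\Omega}(B)} \sum_{\tau \in \Gamma(B)} w(\tau) \geq 1$ for each $B$. Summing over the $m = |\mathcal B|$ bad events gives $W' \geq m$.

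There is no real obstacle here; the only minor point to state carefully is that the single-node graph genuinely qualifies as a single-sink WD with sink labeled $B$, so that it is counted in $\Gamma(B)$, and that $P_{\Omega}(B) > 0$ so the division is well-defined (which holds since otherwise $B$ could be deleted from $\mathcal B$). I expect this corollary to be a one-paragraph proof in the paper.
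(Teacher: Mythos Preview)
Your proposal is correct and matches the paper's proof essentially verbatim: the upper bound is a direct application of Proposition~\ref{a1Aprop2}, and the lower bound comes from the single-node WDs $\tau$ with $|\tau|=1$, each contributing $P_\Omega(B)/P_\Omega(B)=1$ to the sum. The paper states this in one sentence each, just as you anticipated.
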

\begin{proof} 
The upper bound follows from Proposition~\ref{a1Aprop2}. For the lower bound, consider the contribution to the sum $\sum_{B \in \mathcal B} \frac{1}{P_{\Omega}(B)} \sum_{\tau \in \Gamma(B)}$ from the WDs $\tau$ with $|\tau| = 1$.
\end{proof}

The parameter $W'$, which dictates the run-time of our parallel algorithm, has a somewhat complicated behavior. For most applications of the LLL where the bad events are ``balanced,''  we have $W' \approx m$. Here are two examples of this.
\begin{proposition}
\label{a1Aw1corr}
If the symmetric LLL criterion $e p d \leq 1$ is satisfied then $W' \leq m e$.
\end{proposition}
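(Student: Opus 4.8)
The plan is to reduce the claimed bound to a per-event estimate and then feed it into the cluster-expansion criterion. Corollary~\ref{w1boundcorr} already gives $W' \leq \sum_{B \in \mathcal B} \mu(B)/P_{\Omega}(B)$, so it is enough to prove that $\mu(B) \leq e\,P_{\Omega}(B)$ for every $B \in \mathcal B$; summing this over the $m$ bad events then yields $W' \leq em$ at once.

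To establish $\mu(B) \leq e\,P_{\Omega}(B)$, I would apply Theorem~\ref{a1Akthm2} with the trial weighting $\tilde\mu(B) = e\,P_{\Omega}(B)$ and $\epsilon = 0$. The one thing to verify is the defining inequality $\tilde\mu(B) \geq P_{\Omega}(B)\sum_{I \subseteq N(B),\ I\text{ independent}} \prod_{A \in I}\tilde\mu(A)$. Relaxing ``independent'' to ``arbitrary subset'' only enlarges the sum, and then
$$
\sum_{\substack{I \subseteq N(B)\\ I\text{ independent}}} \prod_{A \in I} e\,P_{\Omega}(A) \;\leq\; \prod_{A \in N(B)}\bigl(1 + e\,P_{\Omega}(A)\bigr) \;\leq\; \exp\Bigl(e\sum_{A \in N(B)} P_{\Omega}(A)\Bigr) \;\leq\; e^{\,e p d}\;\leq\; e,
$$
using $1+x \leq e^{x}$, then $P_{\Omega}(A) \leq p$ together with $|N(B)| \leq d$, and finally $epd \leq 1$. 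Hence the right-hand side of the Theorem~\ref{a1Akthm2} condition is at most $P_{\Omega}(B)\cdot e = \tilde\mu(B)$, so the hypothesis holds and the theorem delivers $\mu(B) \leq \tilde\mu(B) = e\,P_{\Omega}(B)$. Combined with Corollary~\ref{w1boundcorr} this finishes the argument.

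The only real choice point — and the step most likely to trip one up — is \emph{which} LLL criterion to invoke. Using the plain asymmetric criterion of Theorem~\ref{a1Akthm1} (say with $x(B) = e\,P_{\Omega}(B)$, which one checks is admissible when $epd\leq 1$) would only give $\mu(B) \leq e\,P_{\Omega}(B)/(1-ep)$, losing a multiplicative factor $1/(1-ep)$ that is not enough for the clean constant $e$. It is precisely the cluster-expansion slack — summing $\prod_{A \in I}\tilde\mu(A)$ over independent subsets of $N(B)$ rather than multiplying factor-by-factor, which lets us bound the sum by $\prod_{A\in N(B)}(1+e\,P_\Omega(A)) \leq e^{epd} \leq e$ — that produces the tight estimate. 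Everything else is routine bookkeeping.
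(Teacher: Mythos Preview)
Your argument is correct. You obtain $\mu(B)\le e\,P_\Omega(B)$ via the cluster-expansion criterion (Theorem~\ref{a1Akthm2}) with $\tilde\mu(B)=e\,P_\Omega(B)$, whereas the paper reaches the same inequality via the asymmetric criterion (Theorem~\ref{a1Akthm1}) with the choice $x(B)=\dfrac{e\,P_\Omega(B)}{1+e\,P_\Omega(B)}$, so that $x(B)/(1-x(B))=e\,P_\Omega(B)$ exactly. Unwinding either hypothesis reduces to the identical inequality $\prod_{A\in N(B)}\bigl(1+e\,P_\Omega(A)\bigr)\le e$, which holds since $(1+ep)^d\le(1+1/d)^d\le e$; neither route does more work than the other.

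Your closing commentary is therefore off the mark: Theorem~\ref{a1Akthm1} \emph{does} deliver the tight constant --- you simply picked a suboptimal weight. Taking $x(B)=e\,P_\Omega(B)$ forces $x/(1-x)=e\,P_\Omega(B)/(1-e\,P_\Omega(B))$ and loses the factor $1/(1-ep)$; taking $x(B)=e\,P_\Omega(B)/(1+e\,P_\Omega(B))$ instead makes $x/(1-x)=e\,P_\Omega(B)$ on the nose. The cluster-expansion slack is not what buys the clean constant here; it merely repackages the same computation.
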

\begin{proof}
The asymmetric LLL criterion is satisfied by setting $x(B) = \frac{e P_{\Omega}(B)}{1 + e P_{\Omega}(B)}$ for all $B \in \mathcal B$. By Theorem~\ref{a1Akthm1}, we have $\mu(B) \leq e P_{\Omega}(B)$ for all $B \in \mathcal B$.
\end{proof}

\begin{proposition}
\label{gen-w1-bound}
Let $p: \mathcal B \rightarrow [0,1]$ be a vector satisfying the two conditions:
\begin{enumerate}
\item $P_{\Omega}(B) \leq p(B)$ for all $B \in \mathcal B$
\item $p$ satisfies the Shearer criterion with $\epsilon$-slack.
\end{enumerate}

Then we have
$$
W' \leq \frac{(n/\epsilon)}{\min_{B \in \mathcal B} p(B)}
$$
\end{proposition}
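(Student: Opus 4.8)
The plan is to bound $W'$ by reusing the telescoping structure already exploited in Corollary~\ref{a1Aweight-bound3}, but carrying the factor $1/P_{\Omega}(B)$ through the estimate. Concretely, I would start from the definition
\[
W' = \sum_{B \in \mathcal B} \frac{1}{P_{\Omega}(B)} \sum_{\tau \in \Gamma(B)} w(\tau) \leq \sum_{B \in \mathcal B} \frac{\mu(B)}{P_{\Omega}(B)},
\]
using Proposition~\ref{a1Aprop2} for the inner sum (this is just Corollary~\ref{w1boundcorr}). The goal is then to show $\sum_{B} \mu(B)/P_{\Omega}(B) \leq (n/\epsilon)/\min_B p(B)$.

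The key step is to re-derive the measure bound with respect to the inflated probability vector $p$ rather than $P_{\Omega}$. Since $P_{\Omega}(B) \le p(B)$ and $p$ satisfies Shearer with $\epsilon$-slack, Proposition~\ref{a1Ashearer-prop2} (monotonicity) gives that all the polynomials $Q(\{B\}, p)$ and $Q(\emptyset, p)$ behave well, and moreover one expects $Q(\{B\}, P_{\Omega})/Q(\emptyset, P_{\Omega}) \le Q(\{B\}, p)/Q(\emptyset, p)$ — i.e. $\mu(B)$ is dominated by the corresponding ``measure'' $\mu_p(B) := Q(\{B\},p)/Q(\emptyset,p)$ computed with $p$. (This monotonicity of single-element measures in the probability vector is implicit in the Kolipaka–Szegedy framework; if it is not immediate I would instead apply Proposition~\ref{a1Aweight-bound1} directly with the vector $p$ in place of $(1+\rho)P_\Omega$, which is exactly the setting of that proposition since $p \le (1+\epsilon)P_\Omega$ fails in general — so this is the place to be careful.) Then, writing each bad event against a variable it contains,
\[
\sum_{B \in \mathcal B} \frac{\mu(B)}{P_{\Omega}(B)} \leq \sum_{B \in \mathcal B} \frac{\mu_p(B)}{p(B)} \cdot \frac{p(B)}{P_\Omega(B)}
\]
is awkward; cleaner is to bound $1/P_\Omega(B) \le 1/\min_B p(B)$ wait — that is false since $P_\Omega(B)$ can be smaller than $p(B)$. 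Instead I would pull out $1/P_\Omega(B) \ge 1/p(B)$ in the wrong direction, so the right move is: bound $\mu(B) \le \mu_p(B)$ and separately $\mu_p(B) = Q(\{B\},p)/Q(\emptyset,p)$, then group by variables and apply the clique bound (Proposition~\ref{a1Aweight-bound1} with the role of $P_\Omega$ played by $p$, $\rho=0$, using that $p$ has $\epsilon$-slack) to get $\sum_{B : i \in S_B} \mu_p(B) \le 1/\epsilon$ for each variable $i$, hence $\sum_B \mu_p(B) \le n/\epsilon$. Finally, since $\mu(B)/P_\Omega(B) \le \mu_p(B)/P_\Omega(B)$ and we need to replace $P_\Omega(B)$ by $\min_B p(B)$, we use $P_\Omega(B) \ge$ nothing useful — so the correct bound must instead read $\mu(B)/P_\Omega(B) \le \mu_p(B) / \big(P_\Omega(B)\big)$ and then observe that in fact we want the weaker statement: replace $1/P_\Omega(B)$ by $1/\min_{B} p(B)$ is only legitimate if $P_\Omega(B) \ge \min_B p(B)$, which need not hold. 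The resolution is that $\Gamma(B)$ contains the single-node tree of weight $P_\Omega(B)$, and more generally $\frac{1}{P_\Omega(B)}\sum_{\tau \in \Gamma(B)} w(\tau)$ is the sum over trees rooted at $B$ with the root weight stripped; this equals $\sum_{I \subseteq N(B) \text{ indep}} \prod \cdots$, and stripping $P_\Omega(B)$ and re-inserting $p(B) \ge P_\Omega(B)$ only increases things, giving $\frac{1}{P_\Omega(B)}\sum_{\tau\in\Gamma(B)} w(\tau) \le \frac{1}{p(B)}\sum_{\tau \in \Gamma_p(B)} w_p(\tau) \le \mu_p(B)/p(B)$, and then $1/p(B) \le 1/\min_B p(B)$ legitimately.

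Putting it together: $W' \le \sum_B \mu_p(B)/\min_{B'} p(B') \le \frac{1}{\min_{B'}p(B')}\sum_B \mu_p(B) \le \frac{n/\epsilon}{\min_{B'}p(B')}$, where the last inequality groups $\mu_p$ by variables and invokes the clique bound. The main obstacle, and the step demanding the most care, is making rigorous the ``stripping and re-inserting the root probability'' manipulation — i.e. showing $\frac{1}{P_\Omega(B)}\sum_{\tau\in\Gamma(B)} w(\tau) \le \frac{1}{p(B)}\sum_{\tau \in \Gamma_p(B)} w_p(\tau)$ — which hinges on the fact that each tree in $\Gamma(B)$ has exactly one root labeled $B$ contributing a single factor $P_\Omega(B)$, so the map $\tau \mapsto$ (same tree, weighted by $p$) is weight-monotone after dividing out the root. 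Once that is in place, everything reduces to the already-proven clique estimate, Proposition~\ref{a1Aweight-bound1}, applied with probability vector $p$ and $\rho = 0$.
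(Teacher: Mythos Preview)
Your proposal, once it settles down after the false starts, is correct and is essentially the paper's own argument. The paper defines $w'(\tau)$ to be the weight of $\tau$ computed with $p$ in place of $P_\Omega$, observes that for $\tau \in \Gamma(B)$ with non-sink nodes labeled $B_1,\dots,B_s$ one has
\[
\frac{w(\tau)}{P_\Omega(B)} \;=\; \prod_{i=1}^s P_\Omega(B_i) \;\le\; \prod_{i=1}^s p(B_i) \;=\; \frac{w'(\tau)}{p(B)},
\]
and then pulls out $1/p(B)\le 1/\min_{B'}p(B')$ and bounds $\sum_{\tau\in\Gamma} w'(\tau)\le n/\epsilon$ via Corollary~\ref{sscor} applied with the probability vector $p$ (which has $\epsilon$-slack by hypothesis). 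This is exactly your ``strip the root factor $P_\Omega(B)$, replace all node weights by $p$, re-insert $p(B)$'' step followed by the clique bound --- and as the display above shows, the step you flag as the ``main obstacle'' is a one-line computation, not a difficulty. One small caution about your wording: a WD in $\Gamma(B)$ can have several nodes labeled $B$, not just the sink; but the argument only strips the \emph{sink's} single factor of $P_\Omega(B)$, and the remaining nodes (including any further copies of $B$) are handled by the termwise inequality $P_\Omega(B_i)\le p(B_i)$, so nothing breaks.
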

\begin{proof}
For any WD $G$ whose nodes are labeled $B_1, \dots, B_s$, define $w'(G)$ to be $p(B_1) \cdots p(B_s)$.

Now consider some $\tau \in \Gamma(B)$ which has $s$ additional nodes labeled $B_1, \dots, B_s$. We have that
$$
\frac{w(G)}{P_{\Omega}(B)} = \prod_{i=1}^s P_{\Omega}(B_i) \leq \prod_{i=1}^s p(B_i) = \frac{w'(G)}{p(B)}.
$$

Thus, by Corollary~\ref{sscor},  
\begin{align*}
W' \leq \sum_{B \in \mathcal B} \frac{1}{p(B)} \sum_{\tau \in \Gamma(B)} w'(\tau) \leq \frac{\sum_{\tau \in \Gamma} w(\tau)}{\min_{B \in \mathcal B} p(B)} \leq \frac{(n/\epsilon)}{\min_{B \in \mathcal B} p(B)}.
\end{align*}

\end{proof}

We note that Proposition~\ref{gen-w1-bound} seems to say that $W'$ becomes large when the probabilities $p$ are small. This seems strange, as bad events with small probability have a negligible effect. (For instance, if a bad event has probability zero, we can simply ignore it.) A more accurate way to read Proposition~\ref{gen-w1-bound} is that $W'$ becomes large only for problem instances which are \emph{close to the Shearer bound} and have small probabilities. Such instances would have some bad events with simultaneously very low probability and very high dependency; in these cases then indeed $W'$ can become exponentially large. 

\subsection{Algorithmically enumerating witness DAGs}
In the Moser-Tardos setting, the witness trees were not actually part of the algorithm but were a theoretical device for analyzing it. Our new algorithm is based on explicit enumeration of the CWDs.   We will show that, for an appropriate choice of parameter $K$, Algorithm~\ref{enum-par-alg1} can be used to to enumerate all $\Gamma^R$.

\begin{algorithm}[H]
\centering
\begin{algorithmic}[1]
\State Randomly sample the resampling table $R$.
\State For each  $B \in \mathcal B$ true in the initial configuration $R(\cdot, 1)$, create a graph with a single vertex labeled $B$. We denote this initial set by $F_1$.
\For{$k = 1, \dots, K$}
\State  For each consistent pair $G_1, G_2 \in F_k$, form $G' = G_1 \vee G_2$. If $G'$ is collectible and $|G'| \leq k+1$, then add it to $F_{k+1}$.
\State For each $G \in F_k$ which is collectible to $B$, create a new WD $G'$ by adding to $G$ a new sink node labeled $B$. If $G'$ is compatible with $R$ then add it to $F_{k+1}$.
\EndFor
\end{algorithmic}
\caption{Enumerating witness DAGs}
\label{enum-par-alg1}
\end{algorithm}

\begin{proposition}
Let $G \in F_k$ for any integer $k \geq 1$. Then $G$ is compatible with $R$.
\end{proposition}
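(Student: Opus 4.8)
The plan is to prove this by induction on $k$. For the base case $k=1$, every $G \in F_1$ is a single vertex labeled by some $B$ that is true in the initial configuration $R(\cdot,1)$; since the lone node $v$ has $y_{v,i} = 0$ for all $i \in S_B$, its configuration $X_G^v$ is precisely $R(\cdot,1)$ restricted to $S_B$, on which $B$ is true by construction, so $G$ is compatible with $R$.

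For the inductive step, assume every element of $F_k$ is compatible with $R$, and consider an arbitrary $G \in F_{k+1}$. By inspection of Algorithm~\ref{enum-par-alg1}, $G$ was placed into $F_{k+1}$ in one of exactly two ways. In the first case (line 4), $G = G_1 \vee G_2$ for some consistent pair $G_1, G_2 \in F_k$. By the inductive hypothesis both $G_1$ and $G_2$ are compatible with $R$, and by the final Proposition of Section~\ref{a1Asec4} (the one characterizing compatibility of merges), $G_1 \vee G_2$ is then compatible with $R$ as well. In the second case (line 5), $G$ is obtained from some $G'' \in F_k$ that is collectible to $B$ by attaching a fresh sink node labeled $B$, and $G$ is added to $F_{k+1}$ only if it passes an explicit check that it \emph{is} compatible with $R$; so compatibility holds trivially in this case.

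I do not anticipate a genuine obstacle here — the statement is essentially a bookkeeping invariant, and all the substantive content (that $\vee$ preserves compatibility, that prefixes preserve compatibility) has already been established in Section~\ref{a1Asec4}. The one point requiring a little care is to confirm that lines 4 and 5 really are the \emph{only} mechanisms by which a graph enters $F_{k+1}$, so that the two-case analysis is exhaustive; this is immediate from the pseudocode. It is also worth noting explicitly that in the first case the side condition $|G'| \leq k+1$ and collectibility play no role in the compatibility argument — they are there to control the size of the sets $F_k$, not their validity — so the merge proposition applies unconditionally. Hence by induction every $G \in F_k$ is compatible with $R$ for all $k \geq 1$.
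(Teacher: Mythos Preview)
Your proof is correct and follows essentially the same approach as the paper: induction on $k$, with the base case handled by observing $X_G^v(i)=R(i,1)$, and the inductive step split into the merge case (line 4) and the sink-addition case (line 5). In fact, your reference to ``the final Proposition of Section~\ref{a1Asec4}'' is the right one; the paper's own proof cites Proposition~\ref{a1Ap1} for the merge case, which is a typo (that proposition concerns consistency, not compatibility with $R$).
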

\begin{proof}
We show this by induction on $k$. When $k = 1$, then $G \in F_1$ is a singleton node $v$ labeled by $B$. Note that $X_G^v(i) = R(i,1)$ for all $i \in S_B$, and so $B$ is true on $X_G^v$. So $G$ is compatible with $R$.

Now for the induction step. First suppose $G$ was formed by $G = G_1 \vee G_2$, for $G_1, G_2 \in F_{k-1}$. By induction hypothesis, $G_1, G_2$ are compatible with $R$. So by Proposition~\ref{a1Ap1}, $G$ is compatible with $R$. Second suppose $G$ was formed in step (5), so by definition it must be compatible with $R$.
\end{proof}

\begin{proposition}
\label{enum-fk-prop}
If $G$ is a CWD with at most $k$ nodes and $G$ is compatible with $R$, then $G \in F_k$.
\end{proposition}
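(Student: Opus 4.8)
The plan is an induction on $|G|$ that mirrors the two production rules of Algorithm~\ref{enum-par-alg1}: a collectible WD with a \emph{single} sink should be produced by step (5) from the CWD obtained by deleting that sink, while a collectible WD with \emph{several} sinks should be produced by step (4) by merging two CWDs covering disjoint subsets of its sinks. A preliminary observation keeps the indexing clean: $F_k \subseteq F_{k+1}$ for every $k$, since every member of $F_k$ is collectible (a singleton is collectible to its own label because $B \sim B$; the other members are explicitly checked to be collectible when created), so applying step (4) to the pair $G_1 = G_2 = G$ gives $G \vee G = G$ with $|G| \le k \le k+1$, placing $G$ in $F_{k+1}$. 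Thus it suffices to prove that every CWD $G$ compatible with $R$ lies in $F_{|G|}$.

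For the base case $|G| = 1$, with lone node $v$ labeled $B$, we have $|G(v)[i]| = 1$ for each $i \in S_B$, so $X_G^v(i) = R(i,1)$, and compatibility forces $B$ to be true in the initial configuration $R(\cdot,1)$; hence $v \in F_1$. For $|G| \ge 2$ I split on the number of sink nodes of $G$. If $G$ has a unique sink $v$, labeled $B$, then deleting $v$ yields the prefix $G^- := G(V(G)\setminus\{v\})$ — it really is this prefix because the sink $v$ has no path to any other node — which is compatible with $R$ by Proposition~\ref{a1xprop1}; by the comparability conditions every node with an edge to $v$ in $G$ is labeled by some $B' \sim B$, so every sink of $G^-$ is adjacent to $B$ and $G^-$ is collectible to $B$. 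The inductive hypothesis gives $G^- \in F_{|G|-1}$, and step (5) applied to $G^-$ re-adds a sink node labeled $B$ whose incoming edges, again by comparability, come exactly from the nodes labeled $\sim B$ — precisely the in-neighborhood of $v$ in $G$ — so the result is exactly $G$, which is admitted because $G$ is compatible with $R$.

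If instead $G$ has sinks $v_1, \dots, v_s$ with $s \ge 2$, take $H_1 = G(v_1, \dots, v_{s-1})$ and $H_2 = G(v_s)$, both prefixes of $G$ and hence compatible with $R$; $H_2$ is single-sink and every sink of $H_1$ lies in $\{v_1,\dots,v_{s-1}\}$, so both are collectible to a common neighbor of the sinks of $G$. Since $v_s \notin H_1$ and $v_1 \notin H_2$, we get $|H_1|, |H_2| \le |G|-1$, so the inductive hypothesis puts both in $F_{|G|-1}$; they are consistent by Proposition~\ref{a1Ap2}. The key identity is $H_1 \vee H_2 = G$: every node of $G$ has a path to some sink $v_j$ and therefore lies in $H_1$ or $H_2$, and every edge of $G$ sits inside whichever of $H_1, H_2$ contains its head (a node below the head is below the same sink), while neither piece acquires extra edges because they are induced subgraphs. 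Hence step (4) produces $G_1 \vee G_2 = G$, which is collectible with $|G| \le (|G|-1)+1$, placing $G$ in $F_{|G|}$. Finally, $F_{|G|} \subseteq F_k$ for all $k \ge |G|$ by the monotonicity above.

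I expect the main obstacle to be the two \emph{reconstruction identities} — that deleting then re-adding a sink returns $G$ exactly, and that $H_1 \vee H_2 = G$ — both of which rest on the fact that the edge set of a WD is completely determined by its vertex labels together with the order $\prec$, which is exactly the content of the comparability conditions; the bookkeeping between extended labels and corresponding labels (Propositions~\ref{a1corr-prop} and \ref{a1welldefprop}) is where one must be careful. The node-count bounds $|H_1|, |H_2| \le |G|-1$ genuinely use $s \ge 2$, which is why the single-sink and multi-sink cases must be handled separately.
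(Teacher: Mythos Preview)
Your proof is correct and follows essentially the same approach as the paper: induction, with the base case of a singleton, the single-sink case handled via step~(5) after deleting the sink, and the multi-sink case handled via step~(4) by merging two smaller prefixes. The only cosmetic differences are that you induct on $|G|$ and then invoke the monotonicity $F_k \subseteq F_{k+1}$ (which the paper also uses, in the sub-case $|G|<k$), and that your multi-sink split is $G(v_1,\dots,v_{s-1})$ versus $G(v_s)$ rather than the paper's $G(v_1)$ versus $G(v_2,\dots,v_s)$.
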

\begin{proof}
We show this by induction on $k$. When $k = 1$, then $G$ is a singleton node $v$ labeled by $B$, and $X_G^v(i) = R(i,1)$. So $B$ is true on the configuration $R(\cdot, 1)$, and so $G \in F_1$.

For the induction step, first note that if $|G| < k$, then by induction hypothesis $G \in F_{k-1}$. So $G$ will be added to $F_k$ in step (4) by taking $G_1 = G_2 = G$.

Next, suppose $G$ has a single sink node $v$ labeled by $B$. Consider the WD $G' = G - v$. Note that $|G'| = k - 1$. Also, all the sink nodes in $G'$ must be labeled by some $B' \sim B$ (as otherwise they would remain sink nodes in $G$). So $G'$ is collectible to $B$. By induction hypothesis, $G' \in F_{k-1}$. Iteration $k-1$ transforms the graph $G' \in F_{k-1}$ into $G$ (by adding a new sink node labeled by $B$), and so $G \in F_k$ as desired.

Finally, suppose that $G$ has sink nodes $v_1, \dots, v_s$ labeled by $B_1, \dots, B_s$, where $s \geq 2$ and $B \sim B_1, \dots, B_s$ for some $B \in \mathcal B$. Let $G' = G(v_1)$ and let $G'' = G(v_2, \dots, v_s)$. Note that $G'$ is missing the vertex $v_s$ and similarly $G''$ is missing the vertex $v_1$. So $|G'| < k, |G''| < k$, and $G', G''$ are collectible to $B$. Thus, $G', G'' \in F_{k-1}$ and so $G = G' \vee G''$ is added to $F_k$ in step (4).
\end{proof}

\begin{proposition}
\label{a1Aphase1-prop}
Suppose that we have a Bad-Event Checker using $O(\log mn)$ time and $\text{poly(m,n)}$ processors. Then there is an EREW PRAM algorithm to enumerate $\Gamma^R$ whp using $\text{poly}(W', \epsilon^{-1}, n)$ processors and  $\tilde O(\epsilon^{-1} (\log n) (\log (W' n)))$ time.
\end{proposition}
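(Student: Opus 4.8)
The plan is to analyze Algorithm~\ref{enum-par-alg1} directly: establish correctness first (already done in Propositions above, so I only need to invoke it), then bound the number of iterations $K$ needed, then bound the number of objects in each $F_k$ so that the per-iteration processor count is polynomial, and finally account for the per-iteration time cost. Correctness: by Proposition~\ref{enum-fk-prop}, every CWD (in particular every single-sink WD, since $\Gamma(B) \subseteq$ CWDs collectible to $B$) with at most $K$ nodes compatible with $R$ lies in $F_K$; conversely every $G \in F_k$ is compatible with $R$. So it suffices to pick $K$ large enough that whp \emph{every} element of $\Gamma^R$ has at most $K$ nodes. By Corollary~\ref{a1Acor1}, whp every element of $\Gamma^R$ has $O(\frac{\log(n/\epsilon)}{\epsilon})$ nodes, so we may take $K = \Theta(\frac{\log(n/\epsilon)}{\epsilon}) = \tilde O(\epsilon^{-1})$, which contributes the $\tilde O(\epsilon^{-1})$ factor in the running time. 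One subtlety: steps (4) and (5) only ever \emph{add} graphs, and step (4) restricts to $|G'| \le k+1$, so after $K$ iterations $F_{K}$ contains exactly those CWDs of size $\le K$ compatible with $R$; I should note that we then extract $\Gamma^R$ from $F_K$ by selecting the single-sink ones (or the algorithm can be run so that $\Gamma^R$ is produced on the fly in step (5)).

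Next I would bound $|F_k|$ for all $k$. Every $G \in F_k$ is a CWD compatible with $R$, i.e. it is collectible to some $B$, and adding a sink labeled $B$ yields a single-sink WD compatible with $R$ (as in the proof that $W'$ bounds the expected number of CWDs). Hence the number of CWDs compatible with $R$ is, in expectation, at most $W'$; by Markov's inequality, whp it is at most $W' \cdot n^{O(1)}$ — more precisely $\le W'/\delta$ for failure probability $\delta$, and choosing $\delta = n^{-c}$ gives $\le W' n^c$. (Here I use that a polynomial failure probability suffices for a "whp" statement; if $W'$ itself is super-polynomial the processor bound $\text{poly}(W', \epsilon^{-1}, n)$ still absorbs the $n^c$ factor.) So whp $\sum_k |F_k| = \text{poly}(W', n)$, and we assign one processor-group per element.

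Then I would account for the time per iteration. Step (4) forms $G_1 \vee G_2$ over all pairs in $F_k$: there are $|F_k|^2 = \text{poly}(W', n)$ pairs, each $G_i$ has $\le k = \tilde O(\epsilon^{-1})$ nodes, so checking consistency (comparing, for each variable $i \in [n]$, whether $G_1[i]$ is an initial segment of $G_2[i]$ or vice versa, as labeled graphs), forming the merge, checking collectibility, and checking $|G'| \le k+1$ are all straightforward operations on objects of size $O(kn) = \tilde O(n/\epsilon)$; using standard EREW housekeeping (sorting, prefix sums, comparison) these take $O(\log(W' n))$ time. Step (5) adds a sink labeled $B$ for each $B \sim$ all sinks of $G$ and invokes the Bad-Event Checker on the configuration $X_{G'}^{v}$ of the new node $v$ — computing $X_{G'}^v(i) = R(i, |G'(v)[i]|)$ requires computing path lengths in $G'$, again $O(\log(W'n))$ time — then the checker runs in $O(\log mn)$ time. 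Since $m \le W'$ by Corollary~\ref{w1boundcorr}, $\log mn = O(\log(W'n))$, so each iteration costs $\tilde O(\log(W' n))$ time. Multiplying by $K = \tilde O(\epsilon^{-1})$ iterations gives the claimed $\tilde O(\epsilon^{-1}(\log n)(\log(W' n)))$ (absorbing polylog-of-$(1/\epsilon)$ and polylog-of-$\log$ factors into $\tilde O$), with $\text{poly}(W', \epsilon^{-1}, n)$ processors.

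The main obstacle I expect is the bookkeeping of showing that all the per-iteration graph operations really run in $\text{polylog}$ EREW time with the stated processor count — in particular, representing each WD canonically (via extended labels and variable-paths $G[i]$) so that consistency-checking and merging are genuinely local and parallelizable, and making sure duplicate graphs produced across different pairs/iterations are deduplicated (again by a canonical encoding plus sorting) so that $|F_k|$ does not blow up beyond the CWD count. Everything else — the iteration bound, the expected-count bound, and the Markov step — follows cleanly from Corollaries~\ref{a1Acor1}, \ref{w1boundcorr} and the CWD-counting proposition.
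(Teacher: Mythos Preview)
Your proposal is correct and follows essentially the same approach as the paper: invoke Proposition~\ref{enum-fk-prop} for correctness, choose $K = \Theta(\epsilon^{-1}\log(n/\epsilon))$ via Corollary~\ref{a1Acor1}, bound $|F_k|$ by the expected CWD count $W'$ plus Markov, and use $W' \geq m$ (Corollary~\ref{w1boundcorr}) to absorb $m$ into the final bounds. One small notational slip: writing $K = \tilde O(\epsilon^{-1})$ drops the $\log n$ factor (the paper's $\tilde O(t)$ only hides $(\log t)^{O(1)}$), but your final arithmetic recovers the correct $\tilde O(\epsilon^{-1}(\log n)(\log(W'n)))$ bound once you multiply through, so this does not affect the argument.
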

\begin{proof}
Proposition~\ref{enum-fk-prop} shows that $F_k$ contains all the CWDs compatible with $R$ using at most $k$ nodes. Furthermore, by Corollary~\ref{a1Acor1}, whp every member of $\Gamma^R$ has at most $O(\epsilon^{-1} \log (n/\epsilon) )$ nodes. Hence, for $K = \Theta(\epsilon^{-1} \log (n/\epsilon))$, we have $F_K \supseteq \Gamma^R$ whp.

The expected total number of CWD compatible with $R$ is at most $W'$; hence whp  the total number of such WDs is at most $W' n^{O(1)}$. The Bad-Event Checker can be used to check whether WDs are compatible with $R$, and routine parallel algorithms can be used to merge WDs. Thus, each iteration of this algorithm can be implemented using $(W' m n / \epsilon)^{O(1)}$ processors and $O(\log (W' m n / \epsilon))$ time. We obtain the stated bounds by using the fact that $W' \geq m$.
\end{proof}

\subsection{Producing the final configuration}
Now that we have generated the complete set $\Gamma^R$, we are ready to finish the algorithm by producing a satisfying assignment.

Define a graph $\mathcal G$, whose nodes correspond to $\Gamma^R$, with an edge between WDs $G, G'$ if they are inconsistent.  Let $\mathcal I$ be a maximal independent set of $\mathcal G$, and let $G = \bigvee \mathcal I$. Finally define the configuration $X^*$, which we refer to as the \emph{final configuration},  by 
$$
X^*(i) = R(i, |G[i]|+1) 
$$
for all $i \in [n]$.
\begin{proposition}
\label{a1Aconj-proof}
The configuration $X^*$ avoids $\mathcal B$.
\end{proposition}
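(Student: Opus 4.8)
The plan is to argue by contradiction with the maximality of $\mathcal I$. Write $G = \bigvee \mathcal I$. First I would record the structural facts about $G$ that are needed: by Proposition~\ref{a1welldefprop} (applied iteratively via Propositions~\ref{a1Ap0} and~\ref{a1Ap1}), $G$ is a WD and each $G_j \in \mathcal I$ is a prefix of it; by the argument in the proof of Proposition~\ref{a1Ap2}, this means $G_j[i]$ is an initial segment of $G[i]$ for every variable $i$; and by iterating the final proposition of Section~\ref{a1Asec4}, $G$ is compatible with $R$, since every member of $\mathcal I$ lies in $\Gamma^R$ and hence is compatible with $R$. In particular $X^*$ is well-defined.

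Now suppose toward a contradiction that some $B \in \mathcal B$ is true on $X^*$; we may assume $S_B \neq \emptyset$, since an identically-true bad event with empty support would violate the Shearer criterion. I would build a single-sink WD $\tau$ with sink labeled $B$ that lies in $\Gamma^R$ and is consistent with every element of $\mathcal I$. Concretely, let $\tilde G$ be $G$ together with a new node $v$ labeled $B$, with an edge $w \prec v$ for exactly those nodes $w$ of $G$ whose label is $\sim B$ (and no other new edges), and set $\tau = \tilde G(v)$, the prefix of $\tilde G$ generated by $v$. One checks $\tilde G$ is a WD: it stays acyclic since $v$ is a sink, and the comparability conditions for $v$ hold by construction, so $\tau$ is a WD, and it is single-sink with sink $v$. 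For compatibility: since $v$ is a sink, $\tilde G(w) = G(w)$ for every $w \in G$, so the configuration of each old node is unchanged and its label remains true; and for $v$ itself, every node of $G[i]$ with $i \in S_B$ has label $\sim B$ and hence an edge to $v$, so $|\tilde G(v)[i]| = |G[i]| + 1$ and $X_{\tilde G}^v(i) = R(i, |G[i]|+1) = X^*(i)$, on which $B$ is true by assumption. Thus $\tilde G$, and hence by Proposition~\ref{a1xprop1} its prefix $\tau$, is compatible with $R$, so $\tau \in \Gamma^R$. For consistency with a given $G_j \in \mathcal I$: fix a variable $i$; if $i \in S_B$ then $\tau[i]$ is $G[i]$ with $v$ appended as a new $\prec$-maximal node, while if $i \notin S_B$ then $\tau[i]$ consists of those nodes of $G[i]$ having a path to $v$ in $\tilde G$, which is an initial segment of $G[i]$ (the $\prec$-predecessor in $G[i]$ of any node with a path to $v$ also has one). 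Either way $\tau[i]$ is comparable with $G[i]$ in the initial-segment order, with $G[i]$ itself an initial segment of $\tau[i]$ in the first case; and since $G_j[i]$ is an initial segment of $G[i]$, it is comparable with $\tau[i]$. Hence $\tau$ is consistent with $G_j$.

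To conclude, $\tau$ is then a vertex of $\mathcal G$ adjacent to no vertex of $\mathcal I$, so maximality of the independent set $\mathcal I$ forces $\tau \in \mathcal I$. But then $\tau$ is a prefix of $G = \bigvee \mathcal I$, so $\tau[i]$ is an initial segment of $G[i]$ for every $i$, contradicting the fact that for $i \in S_B$ the path $\tau[i]$ is strictly longer than $G[i]$. Hence no $B \in \mathcal B$ is true on $X^*$, i.e., $X^*$ avoids $\mathcal B$. I expect the main obstacle to be the middle step: verifying that appending $v$ produces a genuine single-sink WD, that its only new compatibility constraint is precisely ``$B$ is true on $X^*$'', and --- most delicately --- computing $\tau[i]$ precisely enough, as a labeled graph, to establish both the consistency relations and the final strict-inequality contradiction.
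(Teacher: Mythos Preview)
Your proof is correct and follows the same approach as the paper: build $\tilde G$ (the paper calls it $H$) by appending a new sink $v$ labeled $B$ to $G$, take $\tau = \tilde G(v)$ (the paper's $H(v)$), show $\tau \in \Gamma^R$, show $\tau$ is consistent with every member of $\mathcal I$, and derive a contradiction with maximality from the fact that $\tau[i]$ strictly extends $G[i]$ for $i \in S_B$.

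The one noteworthy difference is how consistency with $\mathcal I$ is established. You compute $\tau[i]$ explicitly via a case split on whether $i \in S_B$ and compare it to $G_j[i]$ through $G[i]$. The paper avoids this computation entirely by invoking Proposition~\ref{a1alt-char}: since each $G_j \in \mathcal I$ is a prefix of $G$, and $G$ is a prefix of $\tilde G$, and $\tau = \tilde G(v)$ is a prefix of $\tilde G$, all of $\tau, G_1, G_2, \dots$ are prefixes of the single WD $\tilde G$, hence pairwise consistent. This is shorter and sidesteps the ``most delicate'' part you flagged; you may want to adopt it. Conversely, your explicit description of $\tau[i]$ is what makes the final strict-inequality step transparent, and the paper does essentially the same calculation there (noting $|H(v)[i]| = |G[i]|+1$ for $i \in S_B$).
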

\begin{proof}
Suppose that $B$ is true on $X^*$. Define the WD $H$ by adding to $G$ a new sink node $v$ labeled by $B$. Observe that $G$ is a prefix of $H$. By Proposition~\ref{a1Ap2} the WDs $H, G$ are consistent.

We claim that $H$ is compatible with $R$. By Proposition~\ref{a1xprop1}, $G$ is compatible with $R$ so this is clear for all the vertices of $H$ except for its sink node $v$. For this vertex, observe that for each $i \in S_B$ we have $X_H^v(i) = R(i, |H[i]|) = R(i, |G[i]|+1) = X^*(i)$. By Proposition~\ref{a1xprop1}, this implies that $H(v)$ is compatible with $R$ as well.

So $H(v) \in \Gamma^R$, and consequently $H(v)$ is a node of $\mathcal G$. Observe that $H(v)$ and all the WDs $G' \in \mathcal I$ are prefixes of $H$. By Proposition~\ref{a1alt-char}, $H(v)$ is consistent with all of them. As $\mathcal I$ was chosen to be a maximal independent set, this implies that $H(v) \in \mathcal I$. 

By Proposition~\ref{a1welldefprop}, this implies that $H(v)$ is a prefix of $G$. This implies that $|G[i]| \geq |H(v)[i]|$ for any variable $i$. But for $i \in S_B$ we have $|H(v)[i]| = |H[i]| = |G[i]|+1$, a contradiction.
\end{proof}

We thus obtain our faster parallel algorithm for the LLL:
\begin{theorem}
\label{a1Amainthm}
Suppose that the Shearer criterion is satisfied with $\epsilon$-slack and that we have a Bad-Event Checker using $O(\log mn)$ time and $\text{poly}(m,n)$ processors. Then there is an EREW PRAM algorithm to find a configuration avoiding $\mathcal B$ using $\tilde O(\epsilon^{-1} (\log n) \log(W' n))$ time and $(W' n / \epsilon)^{O(1)}$ processors whp.
\end{theorem}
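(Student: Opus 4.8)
The plan is to bolt together the enumeration procedure of Proposition~\ref{a1Aphase1-prop} with the single-MIS construction surrounding Proposition~\ref{a1Aconj-proof}. The algorithm has three phases: (i) run Algorithm~\ref{enum-par-alg1} to produce $\Gamma^R$; (ii) form the inconsistency graph $\mathcal G$ on vertex set $\Gamma^R$ (with an edge between $G,G'$ iff they are inconsistent) and compute a single maximal independent set $\mathcal I$; (iii) compute $G=\bigvee\mathcal I$ and output the final configuration $X^*(i)=R(i,|G[i]|+1)$. Correctness is then immediate: $\mathcal I$ is a maximal independent set of $\mathcal G$, so Proposition~\ref{a1Aconj-proof} says $X^*$ avoids $\mathcal B$. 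Hence the real work is the resource analysis, and the key point is that phase~(ii) uses only \emph{one} MIS call.

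For phase~(i) I would invoke Proposition~\ref{a1Aphase1-prop} as a black box: taking $K=\Theta(\epsilon^{-1}\log(n/\epsilon))$, Corollary~\ref{a1Acor1} guarantees that whp every element of $\Gamma^R$ has at most $K$ nodes, so whp $\Gamma^R$ equals the set of single-sink members of $F_K$, obtained in $\tilde O(\epsilon^{-1}(\log n)\log(W'n))$ time and $(W'n/\epsilon)^{O(1)}$ processors (and since $W'\ge m$ by Corollary~\ref{w1boundcorr}, this processor count also subsumes the $\text{poly}(m,n)$ processors of the Bad-Event Checker). For phase~(ii): whp $|\Gamma^R|\le W'n^{O(1)}$; deciding whether a pair $G,G'$ is inconsistent just compares the labeled paths $G[i]$ and $G'[i]$ (each of size $O(K)$) over all variables~$i$, which is a $\text{polylog}$-time, $\text{poly}$-processor task per pair over $|\Gamma^R|^2$ pairs; and Luby's deterministic EREW MIS algorithm then runs on $\mathcal G$ in $O(\log^2|\mathcal G|)=O(\log^2(W'n))$ time. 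For phase~(iii): $\mathcal I$ is a pairwise-consistent family, so by Propositions~\ref{a1Ap0} and~\ref{a1Ap1} the merge $\bigvee\mathcal I$ is well-defined regardless of order, and I would evaluate it with a balanced binary tree of pairwise merges of depth $O(\log|\mathcal I|)$, each merge $G_1\vee G_2$ being a $\text{polylog}$-time labeled-graph union (Proposition~\ref{a1corr-prop} ensures the corresponding labels coincide with the true extended labels, so there is no relabeling ambiguity), with every intermediate WD of size at most $|\Gamma^R|\cdot O(K)=(W'n/\epsilon)^{O(1)}$; finally $X^*$ is read off in $O(\log(W'n))$ time.

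Adding the phases gives $(W'n/\epsilon)^{O(1)}$ processors and total time $\tilde O(\epsilon^{-1}(\log n)\log(W'n))+O(\log^2(W'n))$, which collapses to $\tilde O(\epsilon^{-1}(\log n)\log(W'n))$. I expect the only delicate part of the write-up to be this last bit of bookkeeping: confirming that the $O(\log^2(W'n))$ MIS term is absorbed into the stated bound (which holds in the regime where $W'$ is at most polynomially large in $n$, i.e.\ exactly when the algorithm is genuinely poly-time) and carefully unioning the ``whp'' events $\{\Gamma^R\subseteq F_K\}$ and $\{|\Gamma^R|\le W'n^{O(1)}\}$. No new structural or combinatorial argument is needed beyond what Propositions~\ref{a1Aphase1-prop} and~\ref{a1Aconj-proof} already provide --- the work here is purely one of composition and accounting.
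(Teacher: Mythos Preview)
Your approach is essentially identical to the paper's: enumerate $\Gamma^R$ via Proposition~\ref{a1Aphase1-prop}, run a single MIS on the inconsistency graph, then read off $X^*$ via Proposition~\ref{a1Aconj-proof}.

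The one point that needs tightening is your bound on $|\Gamma^R|$ in phase~(ii). You bound it by $W' n^{O(1)}$ and then have to add the caveat that the $O(\log^2(W'n))$ MIS cost is absorbed only when $W'$ is polynomial in $n$. The paper avoids this caveat by using the sharper bound: the expected size of $\Gamma^R$ is at most $\sum_{\tau\in\Gamma} w(\tau)\le W$ (Proposition~\ref{a1Aprop2}), so whp $|\Gamma^R|\le W n^{O(1)}$, and then Corollary~\ref{a1Aweight-bound3} gives $W\le n/\epsilon$. Thus the MIS runs in $O(\log^2(n/\epsilon))$ time, which is absorbed into $\tilde O(\epsilon^{-1}(\log n)\log(W'n))$ unconditionally (since $\log^2(1/\epsilon)=O(1/\epsilon)$). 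With this one substitution your caveat disappears and the proof goes through exactly as stated.
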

\begin{proof}
Use Proposition~\ref{a1Aphase1-prop} to enumerate $\Gamma^R$ using $\tilde O(\epsilon^{-1} (\log  n) (\log (W'  n))$ time and $(W' n / \epsilon)^{O(1)}$ processors. Whp, $|\Gamma^R| \leq W n^{O(1)}$. Using Luby's MIS algorithm, find a maximal independent set of such WDs in time $O(\log^2 (W n))$ and $(W n)^{O(1)}$ processors.  Finally, form the configuration $X^*$ as indicated in Proposition~\ref{a1Aconj-proof} using $O(\log (W n / \epsilon))$ time and $(W n / \epsilon)^{O(1)}$ processors. Using the bound $W \leq n/\epsilon$ gives the stated result.
\end{proof}

\begin{corollary}
Suppose that the symmetric LLL criterion is satisfied with $\epsilon$-slack, i.e., $e p d (1+\epsilon) \leq 1$, and have a Bad-Event Checker using $O(\log mn)$ time and $\text{poly}(m,n)$ processors. Then there is an EREW PRAM algorithm to find a configuration avoiding $\mathcal B$ using $\tilde O(\epsilon^{-1} \log(m n) \log n)$ time and $(mn)^{O(1)}$ processors whp.
\end{corollary}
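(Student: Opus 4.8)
The plan is to derive this corollary directly from Theorem~\ref{a1Amainthm} by specializing the parameters to the symmetric setting. The theorem gives an EREW PRAM algorithm running in $\tilde O(\epsilon^{-1} (\log n) \log(W' n))$ time with $(W' n / \epsilon)^{O(1)}$ processors, so the only real work is to bound $W'$ and verify the hypotheses.

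First I would check that the Shearer criterion is satisfied with $\epsilon$-slack. This is immediate from Theorem~\ref{a1Akthm1}: setting $x(B) = \frac{e P_{\Omega}(B)}{1 + e P_{\Omega}(B)}$ (or any analogous choice), the symmetric criterion $epd(1+\epsilon) \le 1$ implies the $\epsilon$-slack asymmetric criterion, which in turn gives the $\epsilon$-slack Shearer criterion. (Strictly speaking one should be slightly careful about whether the corollary's ``$\epsilon$-slack'' on $epd$ matches the ``$\epsilon$-slack'' on $P_\Omega$; one can absorb the discrepancy into constants, or invoke Proposition~\ref{a1Ashearer-prop2} with a slightly smaller slack parameter $\epsilon' = \Theta(\epsilon)$, which only changes the $\tilde O$ bound by a constant factor.)

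Next I would bound $W'$. By Proposition~\ref{a1Aw1corr}, the symmetric LLL criterion $epd \le 1$ gives $W' \le me$. Hence $W' = O(m)$, so $W' n = O(mn)$ and $(W' n/\epsilon)^{O(1)} = (mn/\epsilon)^{O(1)} = (mn)^{O(1)}$ (absorbing the $\epsilon^{-1}$ into the polynomial since $\epsilon^{-1} \le \text{poly}(n)$ in any nontrivial regime, or simply noting $\epsilon \ge n^{-O(1)}$ whenever the problem is nondegenerate; in the worst case one states the processor bound as $(mn/\epsilon)^{O(1)}$). The time bound becomes $\tilde O(\epsilon^{-1}(\log n)\log(mn))$, which is exactly the claimed $\tilde O(\epsilon^{-1} \log(mn)\log n)$. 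Finally, the hypothesis on the Bad-Event Checker is satisfied: the assumption that one can test whether any single $B$ is true on a given configuration in $O(\log mn)$ time, together with $m = \text{poly}(m,n)$ processors to run these tests in parallel over all bad events, constitutes a Bad-Event Checker using $O(\log mn)$ time and $\text{poly}(m,n)$ processors in the sense required by Theorem~\ref{a1Amainthm}.

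The main (and really only) obstacle is the bookkeeping around the parameter $\epsilon$ in the processor count and in translating between slack on $epd$ versus slack on the probability vector — everything else is a direct substitution into Theorem~\ref{a1Amainthm} via Propositions~\ref{a1Aw1corr} and Theorem~\ref{a1Akthm1}. Since these are constant-factor adjustments to an already-$\tilde O$ bound, they do not affect the stated result.
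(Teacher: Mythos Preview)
Your proposal is correct and follows essentially the same route as the paper: apply Theorem~\ref{a1Amainthm} after bounding $W' \le me$ via Proposition~\ref{a1Aw1corr}. If anything, you are more careful than the paper's own proof, which does not explicitly address the $\epsilon^{-1}$ in the processor count or the translation between slack on $epd$ and slack on the probability vector.
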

\begin{proof}
By Proposition~\ref{a1Aw1corr}, we have $W' \leq m e$. Also, $W = \sum_{B \in \mathcal B} ep \leq O(m/d)$. Since $m \leq n d$, we have $W \leq O(n)$.  Now apply Theorem~\ref{a1Amainthm}.
\end{proof}

\subsection{A heuristic lower bound}
In this section, we give some intuition as to why we believe that the run-time of this algorithm, approximately $O(\epsilon^{-1} \log^2 n)$, is essentially optimal for LLL algorithms \emph{which are based on the resampling paradigm}. We are not able to give a formal proof, because we do not have any fixed model of computation in mind. Also it is not clear whether our new algorithm is based on resampling.

Suppose we are given a problem instance on $n$ variables whose distributions are all Bernoulli-$q$ for some parameter $q \in [0,1]$. The space $\mathcal B$ consists of $\sqrt{n}$ bad events, each of which is a threshold function on $\sqrt{n}$ variables, and all these events are completely disjoint from each other. By adjusting the threshold and the parameter $q$, we can ensure that each bad event has probability $p = 1 - \epsilon$.

The number of resamplings of each event is a geometric random variable, and it is not hard to see that with high probability there will be some bad event $B$ which requires $\Omega(\epsilon^{-1} \log n)$ resamplings before it is false. Also, whenever we perform a resampling of $B$, we must compute whether $B$ is currently true. This requires computing a sum of $\sqrt{n}$ binary variables, which itself requires time $\Omega(\log n)$. Thus, the overall running time of this algorithm must be $\Omega(\epsilon^{-1} \log^2 n)$.

The reason we consider this a \emph{heuristic} lower bound is that, technically, the parallel algorithm we have given is not based on resampling. That is, there is no current ``state" of the variables which is updated as bad events are discovered. Rather, all possible resamplings are precomputed in advance from the table $R$.

\section{A deterministic parallel algorithm}
\label{a1Asec:det}
In this section, we derandomize the algorithm of Section~\ref{a1Asec5}. The resulting deterministic algorithm requires an additional slack compared to the Parallel Resampling Algorithm (which in turn requires additional slack compared to the sequential algorithm). For the symmetric LLL setting, we require that $e p d^{1+\epsilon} \leq 1$ for some (constant) $\epsilon > 0$; this is the same criterion used by \cite{det-lll}.

For deterministic algorithms, it is quite difficult to handle general classes of bad-events or asymmetric LLL criteria (whereas the randomized algorithms allow the bad-events to be almost arbitrary, and converge under almost the same conditions as the Shearer criterion.) In \cite{det-lll} and \cite{harris3}, these issues are discussed in more detail; our algorithm would also be compatible with most such extensions. However in order to focus on the main case we restrict ourselves to the simplest scenario; we only consider the symmetric criterion, and we assume that  the set $\mathcal B$ contains $m$ bad events, which are explicitly represented as \emph{atomic} events, that is, conjunctions of terms of the form $X_i = j$.

 The paradigmatic example of this setting is the $k$-SAT problem.  In this simplified setting, the algorithm of \cite{det-lll} requires $O(\epsilon^{-1} \log^3 (mn))$ time and $(mn)^{O(1/\epsilon)}$ processors on the EREW PRAM. Our main contribution in this section will be to obtain a faster algorithm, using just $O(\epsilon^{-1} \log^2 (mn))$ time.\footnote{While randomized MIS algorithms appear to be faster in the CRCW model as compared to EREW, this does not appear to be true for deterministic algorithms. The fastest known NC algorithms for MIS appear to require $O(\log^2 |V|)$ time (in both models). Thus, for the deterministic algorithms, we do not distinguish between EREW and CRCW PRAM models.}

The proof strategy behind our derandomization is similar to that of \cite{det-lll}: we first show a range lemma (Lemma~\ref{det-lemma1}), allowing us to ignore WDs which have a large number of nodes. This will show that if the resampling table $R$ is drawn from a probability distribution which satisfies an approximate independence condition, then the algorithm of Section~\ref{a1Asec5} will have similar behavior to the scenario in which $R$ is drawn with full independence. Such probability distributions have polynomial support size, so we can obtain an NC algorithm by searching them in parallel.
\begin{definition}
\label{a1Aapprox-def}
We say a probability space $\Omega'$ is $k$-wise, $\delta$-approximately independent, if for all subsets of variables $X_{i_1}, \dots, X_{i_k}$, and all possible valuations $j_1, \dots, j_k$, we have
$$
\Bigl|  P_{\Omega'} (X_{i_1} = j_1 \wedge \dots \wedge X_{i_k} = j_k) - P_{\Omega} (X_{i_1} = j_1 \wedge \dots \wedge X_{i_k} = j_k) \Bigr| \leq \delta
$$
\end{definition}
\begin{theorem}[\cite{approx-indep}]
\label{a1Aapprox-thm}
There are $k$-wise, $\delta$-approximately independent probability spaces which have a support size $\text{poly}(\log n, 2^k, 1/\delta)$.
\end{theorem}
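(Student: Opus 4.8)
The plan is to construct, for any $N,k,\delta$, a distribution on $\{0,1\}^{N}$ that is $k$-wise, $\delta$-approximately independent in the sense of Definition~\ref{a1Aapprox-def} (with all target marginals uniform) and whose support has size $\text{poly}(\log N,2^{k},1/\delta)$. This yields the theorem in the setting most relevant here --- $\Omega$ uniform on a product of small domains, as in the $k$-SAT instances underlying Section~\ref{a1Asec:det} --- and a general discrete $\Omega$ reduces to this by encoding each $X_i$ in a short block of unbiased bits via the inverse-CDF rule: a $k$-subset of the $X_i$'s then becomes a deterministic function of boundedly many of the bits, and the $\ell_\infty$ guarantee transfers after routine parameter bookkeeping. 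So I would focus on the binary construction.

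First I would record the standard link between small Fourier coefficients and approximate independence. If $\mu$ is a distribution on $\{0,1\}^{L}$ with $|\widehat\mu(S)|\le\epsilon$ for all nonempty $S$, then for any coordinates $T$ with $|T|=k$ and any valuation $v$, expanding $\prod_{i\in T}\tfrac{1+(-1)^{v_i+x_i}}{2}$ gives $\Pr_{x\sim\mu}[x_T=v]=2^{-k}\sum_{S\subseteq T}(-1)^{\sum_{i\in S}v_i}\widehat\mu(S)$; the $S=\emptyset$ term equals $2^{-k}$ exactly while the remaining $2^{k}-1$ terms sum to at most $\epsilon$ in absolute value, so an $\epsilon$-biased distribution on a cube is automatically $k$-wise, $\epsilon$-approximately independent for every $k$. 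I will use this fact only for a distribution on a \emph{small} cube inside the main construction.

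The crux is shrinking the dependence on $N$ from $\text{poly}(N)$ --- the support of a generic $\epsilon$-biased space on $\{0,1\}^{N}$, which is only $\text{poly}(n)$ --- down to $\text{poly}(\log N)$, and I would do this by a universe-reduction hash followed by small bias on a tiny cube. Pick $\phi\colon[N]\to[M]$ with $M=O(k^{2}/\delta)$ from an \emph{almost pairwise independent} family, i.e.\ one with $\Pr_\phi[\phi(i)=\phi(i')]\le\delta/\binom{k}{2}$ for all $i\ne i'$; concretely, identifying $i\in[N]$ with a nonzero vector $e_i\in\mathbb{F}_2^{\lceil\log N\rceil}$, let the $t$-th output coordinate of $\phi(i)$ be $\langle s_t,e_i\rangle$ for $t=1,\dots,\lceil\log M\rceil$, where $(s_1,\dots,s_{\lceil\log M\rceil})$ is drawn from a sufficiently biased space on $\mathbb{F}_2^{\lceil\log N\rceil\cdot\lceil\log M\rceil}$; a Vazirani-type argument gives the collision bound, and the seed is only $O(\log\log N+k+\log(1/\delta))$ bits. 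Now set $Y_i:=W_{\phi(i)}$, where $W\in\{0,1\}^{M}$ is drawn from an $\epsilon$-biased space on the small cube $\{0,1\}^{M}$ (seed $O(\log M+\log(1/\epsilon))=O(k+\log(1/\delta))$). For any fixed $k$ indices, a union bound puts them in $k$ distinct buckets except with probability $\le\delta$, and conditioned on that event their joint law is that of a $k$-subset of the coordinates of $W$, which is $\epsilon$-close to uniform in $\ell_\infty$ by the preceding paragraph; taking $\epsilon=\delta$ yields $2\delta$-approximate $k$-wise independence of $(Y_1,\dots,Y_N)$, and a constant rescaling finishes the bound. The overall seed is $O(\log\log N+k+\log(1/\delta))$, so the support has size $\text{poly}(\log N,2^{k},1/\delta)=\text{poly}(\log n,2^{k},1/\delta)$.

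The main obstacle is the universe-reduction hash: fitting an almost pairwise independent map $[N]\to[M]$ into an $O(\log\log N)$-bit seed, instead of the $\Theta(\log N)$ bits that exact pairwise independence would cost, is precisely what converts the naive $\text{poly}(n)$ support into $\text{poly}(\log n)$. This is the single genuinely delicate ingredient (supplied by \cite{approx-indep}); the Fourier estimate and the accounting of approximation parameters are routine.
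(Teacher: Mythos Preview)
The paper does not prove this theorem at all: it is stated as a black-box citation to \cite{approx-indep} and used without further argument. There is therefore nothing to compare against on the paper's side.

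Your sketch is a faithful outline of the Even--Goldreich--Luby--Nisan--Veli\v{c}kovi\'{c} construction that the citation points to: the Fourier bound showing that an $\epsilon$-biased space is automatically $k$-wise $\epsilon$-close in $\ell_\infty$, the universe-reduction hash $[N]\to[M]$ with $M=\text{poly}(k,1/\delta)$ using only $O(\log\log N)$ seed bits for almost-pairwise independence, and the small-bias space on the reduced universe. The inverse-CDF reduction from general discrete product measures to unbiased bits is likewise the standard route in \cite{approx-indep}. One small quibble: the seed-length claim for the almost-pairwise hash deserves one more sentence --- you need the inner $\epsilon$-biased space on $\mathbb{F}_2^{\lceil\log N\rceil\cdot\lceil\log M\rceil}$ to have bias roughly $1/M$, which costs $O(\log\log N + \log M)$ bits as you say, but you should state explicitly that this bias makes each output bit of $\phi$ nearly unbiased and nearly pairwise independent across distinct $i$, so that the collision probability is $O(1/M)$ as needed. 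With that in place the proposal is correct and matches the cited source.
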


Our algorithm will be defined in terms of a key parameter $K$, which will determine later.
\begin{lemma}
\label{det-lemma2}
Suppose that a resampling table $R$ has the following properties:
\begin{enumerate}
\item[(A1)] For all $\tau \in \Gamma^R$ we have $|\tau| \leq K$.
\item[(A2)] There are at most $S$ CWDs compatible with $R$ of size at most $K$.
\end{enumerate}
Then, if we run the algorithm of Section~\ref{a1Asec5} up to $K$ steps, it will terminate with a configuration avoiding $\mathcal B$, using $O(K log (mnS) + \log^2 S)$ time and $\text{poly}(K,S,m,n)$ processors.
\end{lemma}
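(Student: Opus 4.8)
The plan is to run Algorithm~\ref{enum-par-alg1} for exactly $K$ iterations and then execute the final MIS-and-merge step, while tracking the cost of each operation under the structural hypotheses (A1) and (A2). The key observation is that (A2) bounds the total number of objects the algorithm ever manipulates by $S$ (plus $m \le S$ singletons in $F_1$, since every singleton is collectible), so every "parallel over all pairs of CWDs" step involves at most $S^2$ comparisons, and every "parallel over all CWDs and all $B$" step involves at most $Sm$ pairs. Combined with the assumption that a Bad-Event Checker runs in $O(\log mn)$ time, each individual iteration of the for-loop can be implemented in $O(\log(mnS))$ time using $\text{poly}(S,m,n)$ processors (checking consistency, forming merges $G_1 \vee G_2$, testing collectibility, and testing compatibility with $R$ are all routine parallel list-manipulation tasks, as sketched in the commented-out Section~\ref{a1Astoring-sec} material — though for this plan I will simply invoke them as "routine parallel subroutines"). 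Since there are $K$ iterations, the for-loop contributes $O(K \log(mnS))$ time.

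Next I would argue correctness of the enumeration. By Proposition~\ref{enum-fk-prop}, $F_k$ contains every CWD of size at most $k$ that is compatible with $R$; hypothesis (A1) says every $\tau \in \Gamma^R$ has $|\tau| \le K$, so after $K$ iterations $F_K \supseteq \Gamma^R$. Conversely every element produced is compatible with $R$ (the preceding proposition in the excerpt), so restricting $F_K$ to single-sink WDs yields exactly $\Gamma^R$. Then I invoke the machinery of Section~\ref{a1Asec5}: build the conflict graph $\mathcal G$ on $\Gamma^R$ (at most $S$ vertices by (A2)), compute a maximal independent set $\mathcal I$, set $G = \bigvee \mathcal I$, and output $X^*(i) = R(i, |G[i]|+1)$. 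Proposition~\ref{a1Aconj-proof} guarantees $X^*$ avoids $\mathcal B$ — note this proposition's proof only uses that $\mathcal I$ is a maximal independent set of $\mathcal G$ and that $\Gamma^R$ has been correctly enumerated, both of which hold here — so correctness follows with no probabilistic content.

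It remains to account for the MIS and merge costs. Since $|\Gamma^R| \le S$, computing the conflict graph (all pairwise consistency checks) takes $O(\log(mnS))$ time and $\text{poly}(S,m,n)$ processors, and Luby's deterministic MIS algorithm on a graph with at most $S$ vertices runs in $O(\log^2 S)$ time with $\text{poly}(S)$ processors. Forming $\bigvee \mathcal I$ is an associative merge of at most $S$ WDs, doable by a balanced binary tree of $\vee$-operations in $O(\log S \cdot \log(mnS))$ time (using Proposition~\ref{a1Ap0} for associativity and Proposition~\ref{a1welldefprop} to see each intermediate merge is a valid WD), and reading off $X^*$ is immediate. Summing: $O(K\log(mnS))$ for the for-loop plus $O(\log^2 S)$ for the MIS plus lower-order terms, giving the claimed $O(K\log(mnS) + \log^2 S)$ time and $\text{poly}(K,S,m,n)$ processors.

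The main obstacle I anticipate is not any deep step but rather making the "routine parallel subroutines" claim fully rigorous: one must be careful that the WDs are stored in a compressed form (association tables between entries of $R$ and node labels, plus sink-node lists, rather than full adjacency structure), so that consistency-checking, merging, collectibility-testing, and compatibility-testing genuinely run in $O(\log(mnS))$ time rather than requiring a graph traversal whose depth could be as large as $K$. I would handle this by pointing to the storage scheme and noting that each operation reduces to parallel sorting/searching of lists of length $\text{poly}(S,m,n)$, each of whose entries has size $O(n \cdot K)$, so all housekeeping is within the stated bounds.
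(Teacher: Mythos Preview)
Your proposal is correct and follows essentially the same approach as the paper: invoke Proposition~\ref{enum-fk-prop} to show $F_K \supseteq \Gamma^R$ under (A1), invoke Proposition~\ref{a1Aconj-proof} for correctness of $X^*$, and use (A2) to bound the number of CWDs ever handled by $S$ so that each of the $K$ iterations costs $O(\log(mnS))$ and the single MIS costs $O(\log^2 S)$. The paper's own proof is only three sentences and elides exactly the implementation details you flag as the ``main obstacle''; your expanded account is a faithful and more explicit version of it.
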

\begin{proof}
By Proposition~\ref{enum-fk-prop}, Algorithm~\ref{enum-par-alg1} enumerates all $\Gamma^R$. By Proposition~\ref{a1Aconj-proof}, the final configuration avoids $\mathcal B$. The running time can be bounded noting that the total number of CWDs produced at any step in the overall process is bounded by $S$.
\end{proof}

\begin{lemma}
\label{det-lemma1}
If there is a CWD $G$ compatible with $R$ with $|G| \geq K$, then there exists a CWD $G'$ compatible with $R$ of size $K \leq |G'| \leq 2 K$.
\end{lemma}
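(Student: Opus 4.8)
The plan is to produce $G'$ not by arbitrarily truncating $G$ — a prefix of size exactly $K$ exists trivially, but it need not be collectible — but rather as a \emph{minimal} prefix of $G$ that remains collectible, and then to show that minimality caps its size at $2K$. Recall from Proposition~\ref{a1xprop1} that every prefix of $G$ is automatically compatible with $R$, so the only nontrivial property a prefix must retain in order to qualify is collectibility. Since $G$ itself qualifies (it is a CWD with $|G|\ge K$), the family of prefixes of $G$ that are collectible and have at least $K$ nodes is nonempty; let $H^*$ be one of smallest size, say collectible to $B$, with sink nodes $w_1,\dots,w_s$. It then suffices to prove $|H^*|\le 2K$, for then $G'=H^*$ has all the required properties.

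First I would handle the case $s=1$. Writing $v$ for the unique sink of $H^*$, the graph $H^*-v$ is again a prefix of $G$ (hence compatible with $R$ by Proposition~\ref{a1xprop1}), and each of its sink nodes is an in-neighbor of $v$, so is adjacent to $B_v$; thus $H^*-v$ is collectible. By minimality of $H^*$ it must have fewer than $K$ nodes, so $|H^*|\le K\le 2K$. (The degenerate subcase $|H^*|=1$, which forces $K=1$, is harmless.)

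For $s\ge 2$ I would set $H_j = G(w_1,\dots,w_j)$ for $1\le j\le s$. Each $H_j$ is a prefix of $G$ (of the form in Definition~\ref{a1prefix-def}) and is collectible to $B$, since its sink nodes lie in $\{w_1,\dots,w_j\}$, all of which are $\sim B$. Next I would note that for each $i$ the single-sink prefix $G(w_i)$ is collectible and is \emph{properly} contained in $H^*$ — it omits $w_k$ for any $k\ne i$ — so by minimality $|G(w_i)|\le K-1$; in particular $|H_1|<K$, whereas $|H_s|=|H^*|\ge K$. Since $H_j = H_{j-1}\cup G(w_j)$, consecutive sizes increase by at most $K-1$, so the first index $j^*$ with $|H_{j^*}|\ge K$ satisfies $|H_{j^*}|\le 2K-2<2K$; as $H_{j^*}$ is itself a collectible prefix of $G$ with at least $K$ nodes, minimality of $H^*$ gives $|H^*|\le|H_{j^*}|<2K$.

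The only place requiring care is the bookkeeping about prefixes and their sink nodes: that $G(w_1,\dots,w_j)$ is genuinely a prefix of $G$ (which follows from $G(w_1,\dots,w_j)=\bigcup_i G(w_i)$ and the fact that each $G(w_i)$ is a prefix), that the sink nodes of a prefix $G(u_1,\dots,u_l)$ are among $u_1,\dots,u_l$, and that for distinct sinks $w_i,w_k$ of $H^*$ we have $w_i\notin G(w_k)$ — otherwise $H^*$, being downward closed, would contain a directed path out of $w_i$, contradicting that $w_i$ is a sink of $H^*$. These are routine, but they are exactly what make the two size estimates valid; I expect them to be the main (if modest) obstacle, and I would state them as small observations before the two cases.
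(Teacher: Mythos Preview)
Your argument is correct and complete. The bookkeeping points you flag at the end are exactly the ones that need care, and you handle them correctly: downward closure makes $H^*-v$ and each $G(w_1,\dots,w_j)$ genuine prefixes of $G$ (hence compatible with $R$ by Proposition~\ref{a1xprop1}), and distinct sinks of $H^*$ cannot reach one another, which is what makes each $G(w_i)$ a \emph{proper} sub-prefix so that minimality forces $|G(w_i)|<K$.

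Your route, however, differs from the paper's. The paper argues by induction on $|G|$: if $|G|>2K$ and $G$ has several sinks $v_1,\dots,v_s$, it forms the prefixes $H_i=G(v_1,\dots,v_{i-1},v_{i+1},\dots,v_s)$ obtained by \emph{dropping} one sink, observes that each vertex of $G$ lies in at least $s-1$ of the $H_i$, and uses this to find some $H_i$ with $|H_i|\ge |G|/2\ge K$ but $|H_i|<|G|$, then recurses; the single-sink case just strips the sink and recurses. You instead pass immediately to a minimal qualifying prefix $H^*$ and \emph{build up} toward it one sink at a time, using minimality to cap each increment $|G(w_j)|<K$. The two are dual---the paper descends by shedding sinks, you ascend by accumulating them---but your version is arguably cleaner: it avoids the recursion entirely and replaces the averaging step (which, as written in the paper, needs the sharper bound $\max_i|H_i|\ge\frac{s-1}{s}|G|$ rather than $|G|/s$) with a direct incremental estimate.
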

\begin{proof}
We prove this by induction on $|G|$.  If $|G| \leq 2 K$ then this holds immediately so suppose $|G| > 2 K$.

Suppose $G$ has $s > 1$ sink nodes $v_1, \dots, v_s$ and is collectible to $B$. For each $i = 1, \dots, s$ define
$$
H_i = G(v_1, v_2, \dots, v_{i-1}, v_{i+1}, \dots, v_s).
$$
Clearly $G = H_1 \vee H_2 \vee \dots \vee H_s$. Hence there must exist $i$ such that $|G| > |H_i| \geq |G| (1 - 1/s) \geq |G|/2 \geq K$. Also, $H_i$ is collectible to $B$ and is a prefix of $G$, hence is compatible with $R$.  So apply the induction hypothesis to $H_i$.

Finally, suppose that $G$ has a single sink node $v$ labeled $B$. Then $G - v$ is collectible to $B$ and is compatible with $R$. Since $|G| > 2 K$ we have $|G - v| \geq K$ and so we apply the induction hypothesis to $G - v$.
\end{proof}

\begin{proposition}
\label{tree-cnt-prop}
There are at most $m e (e d)^t$ CWDs with $t$ nodes.
\end{proposition}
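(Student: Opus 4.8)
The plan is to reduce the desired count to a count of witness-tree–like structures. First I would observe that any CWD $G$ with $t$ nodes is collectible to some $B \in \mathcal B$, and that adjoining a new sink node labeled $B$ to $G$ (together with the edges into it, as in step (5) of Algorithm~\ref{enum-par-alg1}) produces a single-sink WD $G'$ with $t+1$ nodes whose unique sink is labeled $B$. Fixing one such $B$ for each $G$, the map $G \mapsto (G', B)$ is injective, so the number of CWDs with $t$ nodes is at most $\sum_{B \in \mathcal B} N_{t+1}$, where $N_s$ denotes the number of single-sink WDs with sink labeled $B$ and exactly $s$ nodes. It thus suffices to prove $N_s \le e(ed)^{s-1}$.

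To bound $N_s$ I would reuse the construction in the proof of Proposition~\ref{a1Aprop2}: a single-sink WD $G$ with sink labeled $B$ maps injectively to a sequence of independent sets $I_0 = \{B\}, I_1, I_2, \dots$, where $I_{j+1}$ consists of the labels of vertices whose out-neighbors all lie in layers $0,\dots,j$; this sequence satisfies $\sum_j |I_j| = |G|$, and every $A \in I_{j+1}$ is adjacent to some element of $I_j$. Picking one such neighbor as a parent for each $A$ turns this into a rooted tree on $|G|$ nodes, rooted at $B$, in which the children of a node labeled $A'$ carry distinct labels drawn from the set $N(A')$ of at most $d$ neighbors of $A'$. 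Distinct $I$-sequences yield distinct trees (the depth-$j$ vertices recover $I_j$), so $N_s$ is at most the number of such rooted trees on $s$ nodes.

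Finally I would count these trees. Treating each node as having $d$ ordered slots (one per neighbor), each either empty or filled by a subtree whose root inherits the corresponding label, the generating function $T(x) = \sum_s (\#\text{trees on }s\text{ nodes})\, x^s$ satisfies $T = x(1+T)^d$, so by Lagrange inversion the number of trees on $s$ nodes equals $\frac{1}{s}\binom{ds}{s-1}$. Using $\binom{ds}{s-1} \le \frac{(ds)^{s-1}}{(s-1)!} \le \bigl(\tfrac{eds}{s-1}\bigr)^{s-1}$ and $(1+\tfrac{1}{s-1})^{s-1} \le e$ gives $\frac{1}{s}\binom{ds}{s-1} \le e(ed)^{s-1}$ (the case $s=1$ being trivial, with a single tree), hence $N_s \le e(ed)^{s-1}$, and the number of CWDs with $t$ nodes is at most $\sum_{B \in \mathcal B} N_{t+1} \le m\,e(ed)^t$.

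The step requiring the most care is the tree encoding via Proposition~\ref{a1Aprop2}: one must check that the layers $I'_j$ partition $V(G)$ and that no two vertices in a common layer are $\prec$-comparable (whence, by the comparability conditions, they carry distinct labels, so that $|I_j| = |I'_j|$ and the node count is preserved). The first holds because $G$ is a finite single-sink DAG; the second because if $w \prec w'$ were both in $I'_j$, then $w'$ would be an out-neighbor of $w$ lying outside layers $0,\dots,j-1$, contradicting $w \in I'_j$. The remaining steps are routine; in place of Lagrange inversion one may instead invoke a subcritical Galton–Watson branching process in which each of a node's $d$ slots is present independently with probability $q = (s-1)/(ds)$, whose extinction probability is $1$ and which yields the same estimate.
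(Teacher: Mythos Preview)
Your proof is correct, but it takes a genuinely different route from the paper's.

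The paper's argument is a two-line reweighting trick: write the count as $(ed)^t \sum_{\text{CWD }G,\,|G|=t}(ed)^{-|G|} \le (ed)^t \sum_{\text{CWD }G}(ed)^{-|G|}$, observe that the last sum is exactly $\sum_{\text{CWD }G} w(G)$ computed under the artificial probability vector $p(B)=1/(ed)$, and note that this vector satisfies the symmetric LLL criterion, so Proposition~\ref{a1Aw1corr} gives $\sum_{\text{CWD }G} w(G)\le W'\le me$ there. No tree enumeration, no generating functions---the combinatorics is hidden inside the already-proved bound $W'\le me$.

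Your approach instead unwinds everything: the reduction from CWDs to single-sink WDs via adjoining a sink, the layer decomposition of Proposition~\ref{a1Aprop2} into a labeled rooted tree with branching bounded by $d$, and then a direct Fuss--Catalan count $\frac{1}{s}\binom{ds}{s-1}\le e(ed)^{s-1}$ via $T=x(1+T)^d$. This is more self-contained and makes the dependence on $d$ fully explicit, but it reproves from scratch what the paper already packaged into $W'\le me$. Both arguments ultimately rest on the same combinatorics (the layer decomposition underpins Proposition~\ref{a1Aprop2}, which feeds into $\mu(B)$ and hence $W'$), but the paper exploits the existing machinery while you rebuild it by hand. Your care about $|I_j|=|I'_j|$ and the partition property is well placed; those are exactly the points one needs for the encoding to be injective and size-preserving.
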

\begin{proof}
Let $Z$ denote the number of CWDs with $t$ nodes. Then
$$
Z = \sum_{\substack{\text{CWD $G$} \\ |G| = t}} 1 = (e d)^t \sum_{\substack{\text{CWD $G$} \\ |G| = t}} (e d)^{-|G|} \leq (e d)^t \sum_{\text{CWD $G$}} (e d)^{-|G|}
$$

The term $\sum_{\text{CWD $G$}} (e d)^{-|G|}$ can be interpreted as $\sum_{\text{CWD $G$}} w(G)$ under the probability vector $p(B) = \frac{1}{e d}$. This probability vector satisfies the symmetric LLL criterion, and so Proposition~\ref{a1Aw1corr} gives $\sum_{\text{CWD $G$}} (e d)^{-|G|} \leq m e$.
\end{proof}

\begin{proposition}
\label{a1Arand1}
For $c, c'$ sufficiently large constants, $d \geq 2$, and $K = \frac{c' \log(m/\epsilon)}{\epsilon \log d},$ the following holds. 

Suppose that $\mathcal B$ consists of atomic events on $s$ variables and $e p d^{1+\epsilon} \leq 1$ for $\epsilon \in (0,1)$.  Suppose that $R$ is drawn from a probability distribution which is $(m/\epsilon)^{-c/\epsilon}$-approximately, $\frac{2 c' s \log (m/\epsilon)}{\epsilon \log d}$-wise independent.  Then with positive probability the following events both occur:
\begin{enumerate}
\item[(B1)] Every $\tau \in \Gamma^R$ has $|\tau| \leq 2 K$.
\item[(B2)] There are at most $O(m)$ CWDs compatible with $R$.
\end{enumerate}
\end{proposition}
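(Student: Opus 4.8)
The plan is a two-sided first-moment argument, using the range lemma to reduce everything to CWDs of size at most $2K$. Let $A$ be the event that \emph{some} CWD compatible with $R$ has size $\ge K$; by Lemma~\ref{det-lemma1}, $A$ is equivalent to the existence of a compatible CWD whose size lies in $[K,2K]$. On $\overline A$, every compatible CWD has size $<K$: in particular every $\tau\in\Gamma^R$ (a single-sink WD is collectible to its own sink label, hence is a CWD) has $|\tau|<K\le 2K$, which is exactly (B1); and the number of compatible CWDs equals the number of compatible CWDs of size $<K\le 2K$. So it suffices to prove $P(A)<\tfrac1{10}$ and that the expected number of compatible CWDs of size $\le 2K$ is $O(m)$; Markov then gives $P(A)+\text{(failure of the count bound)}<1$, and on the good event both (B1) and (B2) hold.

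The engine is a per-WD estimate under the approximately independent table $\Omega'$. Because each bad event is atomic on $s$ variables, for a WD $G$ with $t$ nodes the event ``$G$ is compatible with $R$'' is a conjunction of equalities $R(i,t')=j$ on at most $st$ cells of $R$, and by the disjointness of cell-sets observed in the proof of Proposition~\ref{a1wprop} the fully-independent probability of this conjunction is exactly $w(G)\le p^t$. The wise-independence parameter $\frac{2c's\log(m/\epsilon)}{\epsilon\log d}=2sK$ is precisely large enough that, for every CWD of size $\le 2K$, Definition~\ref{a1Aapprox-def} gives $P_{\Omega'}(G\text{ compatible with }R)\le w(G)+\delta\le p^t+\delta$ with $\delta=(m/\epsilon)^{-c/\epsilon}$.

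To bound $P(A)$ I would sum this over the $\le me(ed)^t$ CWDs of size $t$ (Proposition~\ref{tree-cnt-prop}) for $t\in[K,2K]$. The main term $\sum_t me(edp)^t$ is controlled by the criterion $epd^{1+\epsilon}\le1$, which gives $edp\le d^{-\epsilon}$ and hence $(edp)^t\le d^{-\epsilon K}=(m/\epsilon)^{-c'}$ for $t\ge K$ (this identity is independent of the logarithm base); since $d\ge2$ forces $m\ge d\ge2$ and so $m/\epsilon\ge2$, the whole main term is $O\bigl((K+1)\,me\,(m/\epsilon)^{-c'}\bigr)$, which is $<\tfrac1{20}$ once $c'$ is a large enough absolute constant. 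The error term is $\delta\cdot O(me(ed)^{2K})$; since $(ed)^{2K}=(m/\epsilon)^{O(c')/\epsilon}$ this is $(m/\epsilon)^{(O(c')-c)/\epsilon}$ up to polynomial factors, hence $<\tfrac1{20}$ once $c$ is large relative to $c'$. For (B2) I would bound the expected number of compatible CWDs of size $\le 2K$ by $\sum_{|G|\le 2K}(w(G)+\delta)\le W'+\delta\cdot O(me(ed)^{2K})\le me+o(1)\le 2me$, using the bound that under $\Omega$ the expected number of compatible CWDs is at most $W'$ together with $W'\le me$ (Proposition~\ref{a1Aw1corr}, applicable since $epd\le1$); Markov gives $P(\text{count}\ge 20me)<\tfrac1{10}$. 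On the complement of $A\cup\{\text{count}\ge 20me\}$ — which has probability $\ge1-\tfrac1{10}-\tfrac1{10}>0$ — both (B1) and (B2) follow.

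The hard part is calibrating the constants $c$, $c'$: approximate independence injects an additive error $\delta$ \emph{per witness-DAG}, and the number of witness-DAGs that must be controlled is as large as $(ed)^{2K}=(m/\epsilon)^{\Theta(c'/\epsilon)}$, so the first-moment error is tolerable only if $\delta$ is super-polynomially small with $1/\epsilon$ in the exponent — which is exactly the form $\delta=(m/\epsilon)^{-c/\epsilon}$ and forces $c$ to be chosen after, and much larger than, $c'$; meanwhile $c'$ must itself be large enough that the ``main'' geometric sum $\sum me(edp)^t$ is $o(1)$. The rest — counting cells for atomic WDs, disjointness, and the size-$\le2K$ reduction via Lemma~\ref{det-lemma1} — is routine bookkeeping.
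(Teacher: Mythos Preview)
Your proposal is correct and follows essentially the same approach as the paper: reduce via Lemma~\ref{det-lemma1} to CWDs of size in $[K,2K]$, use atomicity to express compatibility of any such CWD as a conjunction on at most $2Ks$ cells so that approximate independence gives $P\le w(G)+\delta$, split the union bound into the ``main'' term $\sum me(edp)^t$ (controlled by $epd^{1+\epsilon}\le1$ and the choice of $K$) and the ``error'' term $\delta\cdot O(me(ed)^{2K})$ (controlled by taking $c$ large relative to $c'$), and finish (B2) via $W'\le me$ plus Markov. Your framing with the explicit event $A$ and the observation that on $\overline A$ the total CWD count equals the size-$\le 2K$ count is arguably cleaner than the paper's slightly informal ``next suppose that all CWDs have size at most $2K$,'' but the substance is identical. (One inessential slip: ``$d\ge2$ forces $m\ge d$'' is not literally true since $d$ is only an upper bound on dependency; but $m/\epsilon>1$ always holds for $\epsilon\in(0,1)$, and the degenerate cases are trivial.)
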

\begin{proof}
By Lemma~\ref{det-lemma1}, a necessary condition for (B1) to fail is for some CWD $G$ of size $K \leq |G| \leq 2 K$ to be compatible with $R$. For any such $G$, let $\mathcal E$ be the event that $G$ is compatible with $R$. The event $\mathcal E$ is a conjunction of events corresponding to the vertices in $G$. Each such vertex event depends on at most $s$ variables, so in total $\mathcal E$ is an atomic event on at most $|G| s \leq 2 K s \leq \frac{2 c' s \log (m/\epsilon)}{\epsilon \log d}$ terms. So, by Definition~\ref{a1Aapprox-def}, $P( \mathcal E) \leq w(G) + (m/\epsilon)^{-c/\epsilon}$.

Summing over all such $G$ gives
\begin{align*}
\sum_{\substack{\text{CWD $G$} \\ K \leq |G| \leq 2 K}} \negthickspace \negthickspace P(\text{$G$ compatible with $R$}) &\leq \negthickspace \negthickspace \sum_{\substack{\text{CWD $G$} \\ K \leq |G| \leq 2 K}} \negthickspace \negthickspace (w(G) + (m/\epsilon)^{-c/\epsilon}) \leq \negthickspace \sum_{\substack{\text{CWD $G$} \\ |G| \geq K}} w(G) +  \sum_{\substack{\text{CWD $G$} \\ |G| \leq 2 K}} (m/\epsilon)^{-c/\epsilon}
\end{align*}
For any integer $t$, Proposition~\ref{tree-cnt-prop} shows there are at most $m e (e d)^t$ total CWDs with $t$ nodes, and hence their total weight is at most $m e (e d p)^t \leq m d^{-t \epsilon}$. So the first summand is at most $m e \sum_{t \geq K} d^{-t \epsilon} \leq O(m d^{-K \epsilon}/\epsilon)$; this is at most $0.1$ for $c'$ sufficiently large.

Now let us fix $c'$, and show that we can take $c$ sufficiently large. By Proposition~\ref{tree-cnt-prop}, the total number of CWDs of size at most $2 K$ is at most $\sum_{t = 1}^{2 K} m e (ed)^t \leq 2m (e d)^{2 K}$. Hence the second summand is at most $2m e (m/\epsilon)^{-c / \epsilon} (e d)^{ \frac{c' \log(m/\epsilon)}{\epsilon \log d}}$; this is at most $0.1$ for $c$ sufficiently large.

Thus, altogether, we see that there is a probability of at most $0.2$ that there exists a CWD (and in particular a single-sink WD) compatible with $R$ with more than $2 K$ nodes.

Next suppose that all CWDs compatible with $R$ have size at most $2 K$. Thus, the expected total number of CWDs compatible with $R$ is given by
\begin{align*}
\sum_{\substack{\text{CWD $G$} \\ |G| \leq 2 K}} P(\text{$G$ compatible with $R$}) \leq \sum_{\substack{\text{CWD $G$} \\ |G| \leq 2 K}} (w(G) + (m/\epsilon)^{-c/\epsilon}) \leq W' + ( \sum_{t = 1}^{2 K} m (e d)^t) (m/\epsilon)^{-c/\epsilon}
\end{align*}
As we have already seen, the second summand is at most $0.1$ for our choice of $c, c'$. By Proposition~\ref{a1Aw1corr}, we have $W' \leq m e$. Thus, this sum is $O(m)$. Finally apply Markov's inequality.
\end{proof}

\begin{theorem}
\label{main-det-thm}
Suppose $e p d^{1+\epsilon} \leq 1$ for $\epsilon \in (0,1)$ and every bad event is an atomic configuration on at most $s$ variables.  Then there is a deterministic EREW PRAM algorithm to find a configuration avoiding $\mathcal B$, using $O(\frac{s \log(mn)/\log d + \log^2 (m n)}{\epsilon})$ time and $(m n)^{O(\frac{s + \log d}{\epsilon \log d})}$ processors.
\end{theorem}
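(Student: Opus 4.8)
The plan is to derandomize the algorithm of Section~\ref{a1Asec5} by running it in parallel for \emph{every} resampling table $R$ in the support of a small approximately-independent distribution, and then outputting any branch whose final configuration avoids $\mathcal B$. The three ingredients are already in hand: Proposition~\ref{a1Arand1} says that a table drawn from a suitable $k$-wise, $\delta$-approximately independent distribution satisfies (B1) and (B2) with positive probability; Theorem~\ref{a1Aapprox-thm} says such distributions have polynomial support; and Lemma~\ref{det-lemma2} says that on any table satisfying the corresponding size and count bounds, the Section~\ref{a1Asec5} algorithm terminates quickly with a correct answer.

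Concretely, first I would set $K = \Theta\bigl(\tfrac{\log(m/\epsilon)}{\epsilon \log d}\bigr)$ as in Proposition~\ref{a1Arand1} and take $\mathcal D$ to be a $k$-wise, $\delta$-approximately independent distribution over the cells of $R$ with $k = \Theta\bigl(\tfrac{s \log(m/\epsilon)}{\epsilon \log d}\bigr)$ and $\delta = (m/\epsilon)^{-\Theta(1/\epsilon)}$. Only $\text{poly}(n)$ cells of $R$ are ever consulted (by (B1), every $\tau \in \Gamma^R$, and hence the merged witness DAG built by the algorithm, has at most $2K$ nodes and so bounded height), so by Theorem~\ref{a1Aapprox-thm} the support of $\mathcal D$ has size $\text{poly}\bigl(\log n, 2^k, 1/\delta\bigr) = (mn)^{O(\frac{s + \log d}{\epsilon \log d})}$. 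The cells have general marginals $p_{ij}$ rather than uniform bits; this is handled in the usual way and changes the parameters only by constant factors. By Proposition~\ref{a1Arand1}, at least one $R$ in this support satisfies both (B1) and (B2).

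Next, in parallel over all $R$ in the support, I would run the algorithm of Section~\ref{a1Asec5} (Algorithm~\ref{enum-par-alg1} followed by the maximal-independent-set step and the formation of $X^*$) for $2K$ iterations, using a deterministic NC routine for the MIS. For any branch whose $R$ satisfies (B1)--(B2) this is precisely the hypothesis of Lemma~\ref{det-lemma2} with its ``$K$'' replaced by $2K$ and $S = O(m)$: by Proposition~\ref{enum-fk-prop} we get $F_{2K} \supseteq \Gamma^R$, so by Proposition~\ref{a1Aconj-proof} the configuration $X^*$ produced on that branch avoids $\mathcal B$, and Lemma~\ref{det-lemma2} bounds that branch's work by $\text{poly}(m,n)$ processors and $O\bigl(K\log(mn) + \log^2 m\bigr)$ time. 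Finally I would check, for every branch, whether its $X^*$ satisfies any bad event --- each such test is an AND of at most $s$ literals --- and output any $X^*$ that avoids $\mathcal B$; the existence of a good branch guarantees that at least one such output exists.

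For the accounting: the processor count is the support size times the $\text{poly}(m,n)$ spent per branch, which is $(mn)^{O(\frac{s + \log d}{\epsilon \log d})}$; and the running time is dominated by (i) the $O(K\log(mn) + \log^2(mn))$ cost of a single branch and (ii) the $O(\log(\text{support size}))$-type cost of materializing the chosen table from its seed, which after substituting $K$ and the support-size bound yields $O\bigl(\tfrac{s\log(mn)/\log d + \log^2(mn)}{\epsilon}\bigr)$. I expect no conceptual obstacle here --- the genuinely delicate step, that approximate independence suffices, is exactly Proposition~\ref{a1Arand1} --- so the remaining work is purely bookkeeping: reconciling the size-$2K$ conclusion of (B1) with the size-$K$ hypothesis of Lemma~\ref{det-lemma2}, tracking where the $s$ and $\log d$ factors enter the seed length and the table-generation cost, and confirming that generating and indexing into the (non-uniform) approximately-independent table stays within the claimed EREW bounds.
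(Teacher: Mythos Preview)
Your proposal is essentially the paper's own proof: enumerate the support of a small approximately-independent distribution (Theorem~\ref{a1Aapprox-thm}), invoke Proposition~\ref{a1Arand1} to guarantee a good table exists, and apply Lemma~\ref{det-lemma2} on that branch. The one point you gloss over that the paper makes explicit is that \emph{bad} tables $R$ in the support may generate far more than $O(m)$ CWDs, so to keep the per-branch cost at $\text{poly}(m,n)$ you must abort any branch as soon as its CWD count exceeds the target bound; without this guard your ``support size times $\text{poly}(m,n)$'' accounting is unjustified. The paper also dispatches a few degenerate cases up front ($d=1$, $\epsilon < 1/(mn)$, variables with more than $m$ values) so that $\log d > 0$ and $m/\epsilon \leq \text{poly}(mn)$, which you will need when translating the parameters of Proposition~\ref{a1Arand1} into the stated bounds.
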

\begin{proof}
We begin with a few simple pre-processing steps. If there is any variable $X_i$ which can take on more than $m$ values, then there must be one such value which appears in no bad events; simply set $X_i$ to that value. So we assume every variable can take at most $m$ values, and so there are at most $m^n$ possible assignments to the variables.

If $\epsilon < 1/(mn)$, then one can exhaustively test the  full set of assignments using $O(m^n) \leq (mn)^{1/\epsilon}$ processors. So we assume $\epsilon > 1/(mn)$. Similarly, if $d = 1$, then every bad event is independent, and we can find a satisfying assignment exhaustively using $O(m^s)$ processors. So we assume $d \geq 2$.

We first construct a probability space $\Omega$ which is $(mn)^{-c/\epsilon}$-approximately, $\frac{2 c' s \log(mn)}{\epsilon \log d}$-wise independent on a ground set of size $n K$. We use this to form a resampling table $R(i,x)$ for $i \in [n]$ and $x \leq K$.  Theorem~\ref{a1Aapprox-thm} ensures that $\Omega$ has a support size of $(m n)^{O(\frac{s + \log d}{\epsilon \log d})}$. We subdivide our processors so that $\omega \in \Omega$ is explored by an independent group of processors; the total cost of this allocation/subdivision step is $O( \log |\Omega|) \leq O(\frac{(s+\log d) \log (m n)}{\epsilon \log d})$. Henceforth, every element of this probability space will be searched independently, with no further inter-communication (except at the final stage of reporting a satisfying solution.)

Next, given a resampling table $R$, simulate the algorithm of Section~\ref{a1Asec5} up to $K = \frac{c' \log (m n)}{\epsilon \log d}$. By Lemma~\ref{a1Arand1}, for large enough constants $c, c'$ there is at least one element $\omega \in \Omega$ for which all single-sink WDs have size $O(K)$ and for which the total number of CWDs compatible with $R$ is $O(m)$. By Lemma~\ref{det-lemma2}, the algorithm of Section~\ref{a1Asec5}, applied to $\omega$, produces a satisfying assignment using $O(K \log(m n) + \log^2(m n)) = O(\frac{\log^2 (m n)}{\epsilon})$ time and using $\text{poly}(m,n)$ processors.
\end{proof}

\begin{corollary}
Suppose that there is a $k$-SAT instance with $m$ clauses in which each variable appears in at most $L \leq \frac{2^{k/(1+\epsilon)}}{e k}$ clauses. Then a satisfying assignment can be deterministically found on a EREW PRAM using $ O(\frac{\log^2 (mn)}{\epsilon})$ time and $(m n)^{O(1/\epsilon)}$ processors.
\end{corollary}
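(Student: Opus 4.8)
The plan is to model the $k$-SAT instance as an LLL instance and invoke Theorem~\ref{main-det-thm}. Draw each Boolean variable uniformly, and for each clause $C$ let the bad event $B_C$ be the atomic event that $C$ is violated, i.e.\ the conjunction of the $k$ literal-negations of $C$. Then every bad event is atomic on exactly $s = k$ variables and $P_{\Omega}(B_C) = 2^{-k} =: p$. If $d_0$ denotes the true dependency degree of this instance, then $d_0 \le kL$, since $B_C$ is dependent only with clauses that share one of the $k$ variables of $C$, and each such variable lies in at most $L$ clauses. The hypothesis $L \le 2^{k/(1+\epsilon)}/(ek)$ is exactly what makes the criterion of Theorem~\ref{main-det-thm} hold: $e p\, d_0^{1+\epsilon} \le e\cdot 2^{-k}\cdot (2^{k/(1+\epsilon)}/e)^{1+\epsilon} = e^{-\epsilon} \le 1$. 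We also record that $m k \le nL$, whence $m \le nL/k \le n 2^{k}$ and so $mn \le (2^k n)^2$.

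The one subtlety is that the true degree $d_0$ could be tiny (as small as $d_0 = 1$, when all clauses are disjoint), in which case plugging $d = d_0$ into Theorem~\ref{main-det-thm} yields a useless processor bound of the form $(mn)^{O(k/\log d_0)}$. Since Theorem~\ref{main-det-thm} only requires $d$ to be \emph{some} valid upper bound on the dependency degree satisfying $e p d^{1+\epsilon}\le 1$, I would instead apply it with the enlarged bound
\[
  d := \max\bigl(\, d_0,\ \lceil 2^{k/(2(1+\epsilon))}\rceil \,\bigr).
\]
One checks: (i) $d \ge d_0$, so $d$ is still a valid degree bound; (ii) $e p\, d^{1+\epsilon} \le 1$ still holds --- when $d = d_0$ this is the computation above, and when $d = \lceil 2^{k/(2(1+\epsilon))}\rceil$ we have $d^{1+\epsilon} = O(2^{k/2})$, so $e p\, d^{1+\epsilon} = O(2^{-k/2}) \le 1$ once $k$ exceeds a fixed constant (and for $k$ below that constant the hypothesis forces $L=0$, i.e.\ the instance is empty and there is nothing to prove); and (iii) $\log d = \Theta(k)$, since $\log_2 d \ge \tfrac{k}{2(1+\epsilon)} \ge k/4$ using $\epsilon<1$, while $d \le 2^k$ gives $\log_2 d \le k$. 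Having $\log d = \Theta(k)$ is precisely what makes both resulting bounds come out right.

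Now invoke Theorem~\ref{main-det-thm} with $s = k$, $p = 2^{-k}$, and this $d$. For the processor count, $\frac{s+\log d}{\epsilon\log d} = \frac{k + \Theta(k)}{\epsilon\,\Theta(k)} = O(1/\epsilon)$, so the bound $(mn)^{O((s+\log d)/(\epsilon\log d))}$ is at most $(mn)^{O(1/\epsilon)} \le (2^k n)^{O(1/\epsilon)}$ by $mn \le (2^k n)^2$. For the running time, $\log(mn) = O(\log n + k)$, so $\frac{s\log(mn)}{\log d} = \frac{k\cdot O(\log n+k)}{\Theta(k)} = O(\log n + k)$ and $\log^2(mn) = O(\log^2 n + k^2)$; since $k,\log n \ge 1$, both terms are $O(k^2\log^2 n)$, giving total time $O\!\bigl(\tfrac{k^2\log^2 n}{\epsilon}\bigr)$. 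Finally, for $k \le O(\log n)$ and constant $\epsilon$ this is $\mathrm{polylog}(n)$ time with $n^{O(1)}$ processors, i.e.\ an NC algorithm.

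I expect the crux to be recognizing that Theorem~\ref{main-det-thm} should be invoked with an artificially inflated degree parameter $d = 2^{\Theta(k)}$ rather than the literal dependency degree of the instance, together with verifying that the hypothesis $L \le 2^{k/(1+\epsilon)}/(ek)$ leaves enough slack in $e p\, d^{1+\epsilon}\le 1$ to permit this inflation. Everything else --- the estimate $m \le n2^k$ and the simplification of the time and processor expressions once $\log d = \Theta(k)$ --- is routine.
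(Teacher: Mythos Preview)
Your proof is correct and follows the same approach as the paper: model $k$-SAT as an LLL instance with $s=k$, $p=2^{-k}$, bound $m \le n 2^k$, and invoke Theorem~\ref{main-det-thm} with a degree parameter of order $2^{\Theta(k)}$ so that $\log d = \Theta(s)$ collapses both the time and processor bounds to the desired form.

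The only difference is in how the inflated $d$ is chosen. The paper simply takes $d = 2^{k/(1+\epsilon)}/e$, i.e.\ the worst-case value of $Lk$ permitted by the hypothesis; this is automatically an upper bound on the true dependency degree, gives $e p d^{1+\epsilon} = e^{-\epsilon} < 1$ exactly, and has $\log d = \Theta(k)$ with no case analysis. Your construction $d = \max\bigl(d_0,\lceil 2^{k/(2(1+\epsilon))}\rceil\bigr)$ achieves the same end but costs you the small-$k$ edge case and the ceiling estimate. Both work; the paper's choice is just cleaner.
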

\begin{proof}
Each bad event (a clause being false) is an atomic configuration on $s = k$ variables. Here $p = 2^{-k}$ and $d = L k = 2^{k/(1+\epsilon)}/e$. Note that $e p d^{1+\epsilon} = e^{-\epsilon} < 1$, and so the criterion of Theorem~\ref{main-det-thm} is satisfied. Since $\log d = \Theta(s)$, the total processor count is $(m n)^{O(1/\epsilon)}$ and the run-time is $O( \frac{\log^2(mn)}{\epsilon} )$.
\end{proof}

Theorem~\ref{main-det-thm} is essentially a ``black-box'' simulation of the randomized algorithm using an appropriate probability space. A more recent algorithm of \cite{harris3} gives a ``white-box'' simulation, which searches the probability space more efficiently; this significantly relaxes the constraint on the types of bad events allowed. For problems where the bad events are simple (e.g. $k$-SAT), the algorithms have essentially identical run-times.

\section{Concentration for the number of resamplings}
\label{a1Asec3}
The expected number of resamplings for the Resampling Algorithm is at most $W$. Suppose we wish to ensure that the number of resamplings is bounded with high probability, not merely in expectation. One simple way to achieve this would be to run $\log n$ instances of the Resampling Algorithm in parallel; this is a generic amplification technique which ensures that whp the total number of resamplings performed will be $O(W \log n)$.

Can we avoid this extraneous factor of $\log n$? In this section, we answer this question in the affirmative by giving a concentration result for the number of resamplings. We show that whp the number of resamplings will not exceed $O(W)$ (assuming that $W$ is sufficiently large).

We note that a straightforward approach to show concentration would be the following: the probability that there are $T$ resamplings is at most the probability that there is a $T$-node WD compatible with $R$; this can be upper-bounded by summing the weights of all such $T$-node WDs. This proof strategy leads to the weaker result that number of resamplings is  bounded by $O(W/\epsilon)$.  

This multiplicative dependence on $\epsilon$ appears in prior concentration bounds. For instance, Kolipaka \& Szegedy \cite{kolipaka} shows that the Resampling Algorithm performs $O(n^2/\epsilon + n/\epsilon \log (1/\epsilon))$ resamplings with constant probability, and Achlioptas \& Iliopoulos \cite{achlioptas} shows that in the symmetric LLL setting the Resampling Algorithm performs $O(n/\epsilon)$ resamplings whp. Such results have a large gap when $\epsilon$ is small.  Our main contribution is to remove this factor of $\epsilon^{-1}$.

\begin{proposition}
\label{a1Asurj}
Given any distinct bad events $B_1, \dots, B_s$, the total weight of all WDs with $s$ sink nodes labeled $B_1, \dots, B_s$, is at most $\prod_{i=1}^s \mu(B_i)$.
\end{proposition}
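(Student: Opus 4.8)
The plan is to exhibit a weight-preserving injection from the set of WDs with sink labels $B_1,\dots,B_s$ into the product $\Gamma(B_1)\times\cdots\times\Gamma(B_s)$, and then invoke Proposition~\ref{a1Aprop2} on each coordinate. Note first that if some $B_i\sim B_j$ there are no such WDs (the comparability conditions would force an edge between two sink nodes), so the bound is trivial; otherwise fix a WD $G$ whose sink nodes are $v_1,\dots,v_s$, labeled $B_1,\dots,B_s$. These are the \emph{only} sinks of $G$, and every node of $G$ has a directed path to one of them. For a node $u\in G$ let $\sigma(u)$ be the least index $i$ with $u$ having a directed path to $v_i$, and let $G_i$ be the subgraph of $G$ induced on $\sigma^{-1}(i)$.

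First I would check that $G_i\in\Gamma(B_i)$. Acyclicity and the comparability conditions are inherited from $G$ since they are preserved by induced subgraphs. We have $v_i\in G_i$, since $v_i$ is a sink and hence reaches only itself, so $\sigma(v_i)=i$. Finally $v_i$ is the unique sink of $G_i$: if $u\in G_i$ with $u\neq v_i$ and $u\to u'$ is the first edge of a path from $u$ to $v_i$, then $u'$ still reaches $v_i$, and $u'$ cannot reach any $v_j$ with $j<i$ (else $u$ would reach $v_j$ via $u\to u'\rightsquigarrow v_j$, contradicting $\sigma(u)=i$), so $\sigma(u')=i$ and $u'\in G_i$; thus every non-sink of $G_i$ has an out-neighbor in $G_i$.

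Next I would establish the two properties of the map $G\mapsto(G_1,\dots,G_s)$. Since $\sigma^{-1}(1),\dots,\sigma^{-1}(s)$ partition the vertex set of $G$, the weight is exactly multiplicative: $w(G)=\prod_{i=1}^s w(G_i)$. For injectivity I would show $G$ is reconstructible from the tuple. The key point is that whenever $i<j$, any edge of $G$ between a node $x\in G_i$ and a node $y\in G_j$ must be directed $x\to y$ (otherwise $y$ would reach $v_i$ through $x$, forcing $\sigma(y)\leq i$). In particular, for each label $B$, every $B$-labeled node of $G_i$ precedes every $B$-labeled node of $G_j$ in the linear order $\prec$ on the $B$-labeled nodes of $G$; hence that linear order is the concatenation, in order of increasing $i$, of the $B$-labeled nodes of the $G_i$, which determines all the extended labels in $G$. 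The edges of $G$ are then recovered: intra-group edges come from the $G_i$ themselves, inter-group edges from the orientation rule above applied to every pair of nodes with dependent labels, and there are no edges between nodes with non-dependent labels.

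Finally I would combine everything: since $G\mapsto(G_1,\dots,G_s)$ is an injection into $\Gamma(B_1)\times\cdots\times\Gamma(B_s)$ with $w(G)=\prod_i w(G_i)$, we get
\begin{align*}
\sum_{\substack{\text{WD }G\\ \text{sinks labeled }B_1,\dots,B_s}} w(G) &\le \sum_{(\tau_1,\dots,\tau_s)\in\Gamma(B_1)\times\cdots\times\Gamma(B_s)} \prod_{i=1}^s w(\tau_i) \\
&= \prod_{i=1}^s\ \sum_{\tau_i\in\Gamma(B_i)} w(\tau_i)\ \le\ \prod_{i=1}^s\mu(B_i),
\end{align*}
the last step being Proposition~\ref{a1Aprop2}. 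I expect the only delicate point to be the injectivity argument: confirming that the global $\prec$-orderings of equally-labeled nodes and the orientations of all inter-group edges are genuinely forced by the $\sigma$-decomposition. The verification that each $G_i$ is a single-sink WD and that weights multiply is routine by comparison.
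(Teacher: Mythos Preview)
Your proof is correct and takes essentially the same approach as the paper. Your decomposition $G_i=\sigma^{-1}(i)$ coincides exactly with the paper's recursive definition $\tau_i=G(v_i)-\tau_1-\cdots-\tau_{i-1}$; the paper simply frames the argument dually, defining the forward map $F(\tau_1,\dots,\tau_s)$ (disjoint union plus all cross-edges oriented by index) and observing that $F$ applied to the decomposition recovers $G$, which gives injectivity of your map for free and makes the edge-reconstruction step you flagged as ``delicate'' immediate.
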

\begin{proof}
We define a function $F$ which which maps $s$-tuples $(\tau_1, \dots, \tau_s) \in \Gamma(B_1) \times \dots \times \Gamma(B_s)$ to WDs $G = F(\tau_1, \dots, \tau_s)$ whose sink nodes are labeled $B_1, \dots, B_s$. The function is defined by first forming the disjoint union of the graphs $\tau_1, \dots, \tau_s$. We then add an edge from a node $B \in \tau_i$ to $B' \in \tau_j$ iff $i < j$ and $B \sim B'$.

Now, consider any WD $G$ whose sink nodes $v_1, \dots, v_s$ are labeled $B_1, \dots, B_s$. For $i = 1, \dots, j$, define $\tau_i$ recursively by
$$
\tau_i = G(v_i) - \tau_1 - \dots - \tau_{i-1}
$$
Each $\tau_i$ contains the sink node $v_i$, so it is non-empty. Also, all the nodes in $\tau_i$ are connected to $v_i$, so $\tau_i$ indeed has a single sink node. Finally, every node of $G$ has a path to one of $v_1, \dots, v_j$, so it must in exactly one $\tau_i$. 

Thus, for each WD $G$ with sink nodes labeled $B_1, \dots, B_s$, there exist $\tau_1, \dots, \tau_s$ in $\Gamma(B_1), \dots, \Gamma(B_s)$ respectively such that $G = F(\tau_1, \dots, \tau_s)$; furthermore the nodes of $G$ are the union of the nodes of $\tau_1, \dots, \tau_s$. In particular, $w(G) = w(\tau_1) \cdots w(\tau_s)$. Proposition~\ref{a1Aprop2} then gives
\begin{align*}
\sum_{\substack{\text{$G$ has sink nodes}\\ B_1, \dots, B_s}} w(G) &\leq \sum_{\tau_1, \dots, \tau_s} w(\tau_1) \dots w(\tau_s) = \prod_{i=1}^s \sum_{\tau \in \Gamma(B_i)} w(\tau) \leq \prod_{i=1}^s \mu(B_i)
\end{align*}
\end{proof}
\begin{theorem}
\label{a1Aconc-thm}
Whp, the Resampling Algorithms performs $O(W + \frac{\log^2 n}{\epsilon})$ resamplings.
\end{theorem}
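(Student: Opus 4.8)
The plan is to bound the number of resamplings $N=|\hat G|$, where $\hat G$ is the Full Witness DAG of the run, and to show $N=O(W+\log^2 n/\epsilon)$ whp. I would first control the number of sink nodes of $\hat G$. If $\hat G$ has at least $s$ sink nodes $v_1,\dots,v_s$, then the prefix $\hat G(v_1,\dots,v_s)$ is a WD having exactly these $s$ nodes as sinks, and it is compatible with $R$ by Proposition~\ref{a1xprop1}. The labels of the sinks of any WD are pairwise non-adjacent and distinct (if two sinks had equal or adjacent labels, the comparability conditions would put them in a $\prec$-relation, giving one of them an outgoing edge), so Proposition~\ref{a1Asurj} bounds the total weight of all WDs with exactly $s$ sinks by $\sum_{\{B_1,\dots,B_s\}\text{ independent}}\prod_i \mu(B_i)\le \tfrac{1}{s!}\bigl(\sum_B\mu(B)\bigr)^s=\tfrac{W^s}{s!}$. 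By Proposition~\ref{a1wprop} (each fixed WD is compatible with $R$ with probability equal to its weight) and a union bound, $P(\hat G\text{ has }\ge s\text{ sinks})\le W^s/s!\le(eW/s)^s$, which is $n^{-\Omega(1)}$ as soon as $s\ge C(W+\log n)$. Hence whp $\hat G$ has $\sigma=O(W+\log n)$ sink nodes.

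Next I would use the decomposition from the proof of Proposition~\ref{a1Asurj}: writing $v_1,\dots,v_\sigma$ for the sinks of $\hat G$, there are pairwise disjoint single-sink WDs $\tau_i\subseteq\hat G(v_i)$ with $N=|\hat G|=\sum_{i=1}^\sigma|\tau_i|$. By Corollary~\ref{a1Acor1}, whp every single-sink WD compatible with $R$ — in particular every $\hat G(v_i)$, hence every $\tau_i$ — has at most $\ell_0=O(\epsilon^{-1}\log(n/\epsilon))$ nodes. Combining the two facts gives only the crude bound $N\le\sigma\ell_0=O(\tfrac{W\log n}{\epsilon}+\tfrac{\log^2 n}{\epsilon})$, which is \emph{not} yet the theorem: the weakness is that most components $\tau_i$ have size $O(1)$, so estimating (number of components)$\times$(largest component) wastes a factor of roughly $\log n/\epsilon$ on the $W$ term. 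A useful reformulation is $N=\sum_{k\ge1}\#\{i:|\tau_i|\ge k\}$, where the number of components of size $\ge k$ is at most the number of single-sink WDs compatible with $R$ with $\ge k$ nodes, whose \emph{expectation} is $\sum_{\tau\in\Gamma,\,|\tau|\ge k}w(\tau)=O(nk(1+\epsilon)^{-k})$ for $k\gtrsim1/\epsilon$ by Proposition~\ref{a1Abound-prop2}; the large-$k$ tail is then harmless.

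The remaining — and main — difficulty is the small-$k$ contribution, i.e.\ showing $\sum_i\min(|\tau_i|,\,O(1/\epsilon))=O(W+\log^2 n/\epsilon)$ rather than the $O(W/\epsilon)$ that a naive first-moment bound produces. The obstruction is that a union bound over size-$T$ compatible WDs really estimates $\bE[\#\{\text{compatible }G:|G|\ge T\}]$, and this expected count can exceed $P(N\ge T)$ by a factor $\Omega(1/\epsilon)$ (for instance when $\hat G$ splits into many independent ``chain'' components, each of which has many compatible prefixes), so it must be replaced by a genuine concentration argument along the trajectory. I would do this with a potential-function martingale: run the Resampling Algorithm step by step and let $\Phi_t$ be the conditional expected number of resamplings still to come given the configuration after step $t$; then $M_t=t+\Phi_t$ is a martingale with $M_0=\bE[N]\le W$ and $\lim_t M_t=N$, and ``$N\ge T$'' forces $M_T\ge T+1$. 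A single resampling of $B$ changes $\Phi_t$ only through witness trees whose deepest nodes touch $S_B$, and the total $\mu$-mass of such trees is controlled by the dependency-clique estimate $\sum_{B':\,i\in S_{B'}}\mu(B')\le1/\epsilon$ of Proposition~\ref{a1Aweight-bound1}, while Corollary~\ref{a1Acor1} caps how far the effect of a resampling can propagate; feeding the resulting bounds on the increments and quadratic variation of $M_t$ into Freedman's inequality, applied up to the stopping time $N$, should give $N=O(W)$ whp once $W$ exceeds a polylogarithmic threshold, with the additive $O(\log^2 n/\epsilon)$ covering the small-$W$ and moderate-deviation regimes. The crux is thus to prove that one resampling cannot change the expected residual running time by much — a loose bound there would reintroduce the spurious $1/\epsilon$ factor on the $W$ term that the theorem is designed to remove.
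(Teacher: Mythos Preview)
Your proposal correctly isolates the central difficulty --- that combining ``$O(W)$ components'' with ``each of size $O(\epsilon^{-1}\log n)$'' loses a factor of $\epsilon^{-1}\log n$ on the $W$ term --- but the resolution you sketch via a potential-function martingale and Freedman's inequality is not a proof: you yourself end with ``the crux is thus to prove that one resampling cannot change the expected residual running time by much,'' and this is left open. The increment bound you gesture at (via the clique estimate $\sum_{B':i\in S_{B'}}\mu(B')\le 1/\epsilon$) is not clearly the relevant quantity for the Doob martingale $M_t=\bE[N\mid\mathcal F_t]$, and even if each increment were $O(1/\epsilon)$, the resulting quadratic variation of order $N/\epsilon^2$ is precisely the regime in which Freedman reintroduces the extra $1/\epsilon$ you are trying to eliminate.

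The paper avoids martingales entirely and stays within the first-moment framework by extracting far more from the $W^s/s!$ bound than you do. The idea you are missing is that a long run produces not merely \emph{one} $s$-sink WD (the prefix at $s$ actual sinks of $\hat G$) but a huge \emph{family} of them: for any decreasing sequence of time indices $i_1>\cdots>i_s$ with $v_{i_j}\notin \hat G(v_{i_1})\cup\cdots\cup\hat G(v_{i_{j-1}})$ for each $j$, the prefix $\hat G(v_{i_1},\dots,v_{i_s})$ is a \emph{distinct} $s$-sink WD compatible with $R$. If the run has length $t$ and (by the second clause of Corollary~\ref{a1Acor1}) all but $h=10\epsilon^{-1}\log n$ of the single-sink prefixes $\hat G(v_i)$ have size at most $h$, an elementary counting lemma gives at least $\binom{t/2-(s-1)h}{s}$ such tuples. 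Markov's inequality against the expected count $W^s/s!$, with the choice $s=t/(4h)$, then yields $P\bigl(t\ge c(W+\epsilon^{-1}\log^2 n)\bigr)\le (4W/t)^{t/(4h)}+n^{-\Omega(1)}=n^{-\Omega(1)}$. So the same moment bound you applied once, to cap the number of sinks of $\hat G$, is used far more efficiently by counting how many $s$-sink witnesses a long run is \emph{forced} to generate.
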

\begin{proof}
Define $\mathcal U_s$ to be the set of WDs which have $s$ sink nodes and which are compatible with the resampling table $R$; here $s$ is a parameter to be specified later. Note that each of the $s$ sink nodes of a WD must receive a distinct label.

First, let us consider the expected size of $\mathcal U_s$.  We use Proposition~\ref{a1Asurj} to compute
{\allowdisplaybreaks
\begin{align*}
\bE[ | \mathcal U_s | ] &= \sum_{\text{$G$ has $s$ sink nodes}} \negthickspace \negthickspace \negthickspace P(\text{$G$ compatible with $R$}) \leq \sum_{\text{$G$ has $s$ sink nodes}} \negthickspace \negthickspace \negthickspace w(G) \\
& \leq \sum_{\substack{B_1, \dots, B_s \\ \text{distinct}}} \mu(B_1) \dots \mu(B_s) \leq \frac{1}{s!} (\sum_{B \in \mathcal B} \mu(B))^s = \frac{W^s}{s!}
\end{align*}
}

Now, suppose that the Resampling Algorithm runs for $t$ time-steps, and consider the event $\mathcal E$ that $t \geq c(W + \frac{\log^2 n}{\epsilon})$ where $c$ is some sufficiently large constant (to be determined). Let $\hat G$ be the FWD of the resulting execution. Each resampling at time $i \in \{1, \dots, t\}$ corresponds to some vertex $v_i$ in $\hat G$, and we define $H_i = G(v_i)$.

Also, define $A$ to be the set of indices $i_1, \dots, i_s$ satisfying the properties that $t \geq i_1 > i_2 > i_3 > \dots > i_{s-1} > i_s \geq 1$ and that $i_j \notin H_{i_1} \cup \dots \cup H_{i_{j-1}}$ for $j = 1, \dots, s$. For each tuple $a = (i_1, \dots, i_s) \in A$, define $\hat G(a)$ as $\hat G(v_{i_1}, \dots, v_{i_s})$. The condition $i_j \notin H_{i_1} \cup \dots \cup H_{i_{j-1}}$ ensures that each $v_{i_j}$ is a sink node in $\hat G(a)$, and thus $\hat G(a)$ contains exactly $s$ sink nodes. By Proposition~\ref{a1xprop1} it is compatible with $R$. 

Further, we claim that $\hat G(a) \neq \hat G(a')$ for $a \neq a'$. To see this, we note that we can recover $i_1, \dots, i_s$ uniquely from the graph $\hat G(a)$. For, suppose that $B$ is the label of a sink node $u$ of $\hat G(a)$; in this case, if $\hat G(a)[u]$ contains $k$ nodes labeled $B$, then $v_{i_j}$ must be the unique node in $\hat G$ with extended label $(B, k)$. This allows us to recover the unordered set $\{i_1, \dots, i_s \}$; the condition that $i_1 > \dots > i_s$ allows us to recover $a$ uniquely.

Define $\mathcal E'$ to be the (rare) event that more than $\frac{10 \log n}{\epsilon}$ members of $\Gamma^R$ have size greater than $\frac{10 \log n}{\epsilon}$. By Proposition~\ref{a1Acor1}, $P(\mathcal E') \leq n^{-\Omega(1)}$. Let $X$ denote the set $\{ i \mid | \hat G(v_i) | \leq h \}$ where $h = \frac{10 \log n}{\epsilon}$. Conditioned on the event $\mathcal E'$, we have $|X| \geq t - \frac{10 \log n}{\epsilon}$; for $c$ sufficiently large this implies $|X| \geq t/2$.

Each $\hat G(a)$ is a distinct element of $\mathcal U_s$. Conditioning on the event $\mathcal E'$, we therefore have:
\begin{align*}
|\mathcal U_s | &\geq |A| \geq |A \cap X \times \dots \times X| = \sum_{ i_1 \in X } \sum_{\substack{i_2 < i_1\\ i_{2} \notin H_{i_1} \\ i_2 \in X}} \sum_{\substack{i_3 < i_2\\ i_3 \notin H_{i_1} \cup H_{i_2} \\ i_3 \in X}} \cdots \sum_{\substack{i_s < i_{s-1}\\ i_s \notin H_{i_1} \cup H_{i_2} \cup \dots \cup H_{i_{s-1}} \\ i_s \in X}} 1 
\end{align*}

By Proposition~\ref{a1Atech-prop1} (which we defer to after this proof), this expression is at least $\binom{|X|-(s-1) h}{s} \geq \frac{(t/2 - s h)^s}{s!}$ since $|H_j| \leq h$ for all $j \in X$.

Hence, we have shown that $\mathcal E$ requires either event $\mathcal E'$ or that $|\mathcal U_s | \geq \frac{(t/2-s h)^s}{s!}$. Since $\bE[ |\mathcal U_s| ] \leq W^s/s!$, Markov's inequality gives $P(\mathcal E) \leq W^s/(t/2-s h)^s + P(\mathcal E')$. 

Setting $s = \frac{t}{4 h}$, we bound this as
$$
\frac{W^s}{(t/2 - s h)^s} = (4 W/t)^s \leq 2^{-s} = n^{-\Omega(1)}
$$
using the facts that $t \geq 8 W$ and $t \geq \Omega( \frac{\log^2 n}{\epsilon} )$.
\end{proof}

\begin{corollary}
The Resampling Algorithm performs $O(n/\epsilon)$ resamplings whp.
\end{corollary}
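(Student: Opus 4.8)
The plan is to derive this directly from Theorem~\ref{a1Aconc-thm} together with the universal bound on the work parameter established in Corollary~\ref{a1Aweight-bound3}. First I would invoke Theorem~\ref{a1Aconc-thm}, which guarantees that whp the Resampling Algorithm performs $O\bigl(W + \frac{\log^2 n}{\epsilon}\bigr)$ resamplings. Next I would substitute the bound $W \le n/\epsilon$ from Corollary~\ref{a1Aweight-bound3}, so that this quantity becomes $O\bigl(\frac{n}{\epsilon} + \frac{\log^2 n}{\epsilon}\bigr)$.

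The only step requiring any comment is checking that the additive $\frac{\log^2 n}{\epsilon}$ term is absorbed into the main term. Since $\log^2 n = O(n)$ for all $n \ge 1$, we have $\frac{\log^2 n}{\epsilon} = O(n/\epsilon)$, and hence the total is $O(n/\epsilon)$, as claimed. I do not expect any genuine obstacle here: the corollary is essentially a restatement of Theorem~\ref{a1Aconc-thm}, specialized via the worst-case inequality $W \le n/\epsilon$, with the $\epsilon^{-1}\log^2 n$ correction rendered negligible by the trivial fact that $\log^2 n \le O(n)$. (In regimes where $W$ is much smaller than $n/\epsilon$, Theorem~\ref{a1Aconc-thm} of course gives a sharper statement; the purpose of this corollary is merely to record the clean, instance-independent worst-case bound.)
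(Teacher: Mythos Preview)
Your proposal is correct and matches the paper's own proof essentially line for line: invoke Theorem~\ref{a1Aconc-thm}, substitute the bound $W \le n/\epsilon$ from Corollary~\ref{a1Aweight-bound3}, and absorb the $\epsilon^{-1}\log^2 n$ term into $O(n/\epsilon)$.
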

\begin{proof}
By Corollary~\ref{a1Aweight-bound3} we have $W \leq n/\epsilon$. Thus, by Theorem~\ref{a1Aconc-thm}, the Resampling Algorithm performs $O(\frac{n}{\epsilon} + \frac{\log^2 n}{\epsilon}) = O(\frac{n}{\epsilon})$ resamplings whp.
\end{proof}

For the symmetric LLL, we can even obtain concentration without $\epsilon$-slack.
\begin{corollary}
If the symmetric  LLL criterion $e p d \leq 1$ is satisfied, then whp the number of resamplings is $O(n + d \log^2 n)$.
\end{corollary}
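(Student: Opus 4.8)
The plan is to manufacture a small amount of artificial $\epsilon$-slack out of a strict inequality and then quote Theorem~\ref{a1Aconc-thm}. The key point is that $e\bigl(1-\tfrac{1}{d+1}\bigr)^d>1$ \emph{strictly} for every finite $d$, which yields slack $\epsilon = \Theta(1/d)$; by contrast the ``tight'' choices $x(B)=eP_\Omega(B)$ or $x(B)=\frac{eP_\Omega(B)}{1+eP_\Omega(B)}$ used in Proposition~\ref{a1Aw1corr} give no slack at all when $epd=1$, so they are not good enough here.

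Concretely, I would apply the asymmetric LLL criterion with the uniform weighting $x(B)=\frac{1}{d+1}$. Since each $B$ is dependent with at most $d-1$ bad events other than itself,
$$
x(B)\prod_{\substack{A\sim B\\ A\neq B}}(1-x(A)) \;\ge\; \frac{1}{d+1}\Bigl(1-\tfrac{1}{d+1}\Bigr)^{d-1},
$$
while $P_\Omega(B)\le p\le \frac{1}{ed}$. Dividing, the ratio of the right-hand side to $P_\Omega(B)$ is at least $e\bigl(1-\tfrac{1}{d+1}\bigr)^{d}$, which exceeds $1$; and (e.g. by expanding $\ln\bigl(1-\tfrac{1}{d+1}\bigr)$) one checks $e\bigl(1-\tfrac{1}{d+1}\bigr)^{d}-1=\Theta(1/d)$. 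Hence the asymmetric LLL criterion holds with $\epsilon$-slack for some $\epsilon=\Theta(1/d)$, so $\epsilon^{-1}=O(d)$; by Theorem~\ref{a1Akthm1} the Shearer criterion is satisfied with this $\epsilon$-slack and $\mu(B)\le\frac{x(B)}{1-x(B)}=\frac{1}{d}$ for every $B$.

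It remains to bound $W$. Bad events with empty scope have $\mu=0$ and may be ignored, so every relevant $B$ has $|S_B|\ge1$; for each variable $i$ the bad events whose scope contains $i$ form a dependency-clique and so number at most $d$ (an event in a clique of size $k$ is dependent with $k$ events counting itself, forcing $k\le d$). Summing, $m\le\sum_B|S_B|=\sum_{i\in[n]}|\{B:i\in S_B\}|\le nd$, whence $W=\sum_B\mu(B)\le m/d\le n$. Finally I invoke Theorem~\ref{a1Aconc-thm}: its ``whp'' guarantee is $1-n^{-\Omega(1)}$ uniformly in $\epsilon$ (its proof takes $s=\Theta(\log n)$ and bounds the failure probability by $2^{-s}+P(\mathcal E')$), so even with $\epsilon=\Theta(1/d)$ we conclude that whp the Resampling Algorithm performs $O\bigl(W+\tfrac{\log^2 n}{\epsilon}\bigr)=O(n+d\log^2 n)$ resamplings. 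The only mildly delicate step is confirming the slack is genuinely $\Theta(1/d)$ rather than merely positive; the rest is bookkeeping with results proved earlier, so I anticipate no real obstacle.
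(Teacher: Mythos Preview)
Your proposal is correct and follows essentially the same approach as the paper: choose a uniform weighting $x(B)$ to extract $\epsilon=\Theta(1/d)$ slack from the asymmetric LLL, deduce $\mu(B)=O(1/d)$ and hence $W\le m/d\le n$ via $m\le nd$, and then invoke Theorem~\ref{a1Aconc-thm}. The only cosmetic difference is that the paper takes $x(B)=\tfrac{1}{d}$ (yielding $\epsilon=e\bigl(\tfrac{d-1}{d}\bigr)^{d-1}-1$) whereas you take $x(B)=\tfrac{1}{d+1}$; both give the same asymptotics.
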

\begin{proof}
Set $x(B)  = \frac{1}{d}$ for all $B \in \mathcal B$. Now a simple calculation shows that this satisfies the asymmetric LLL condition with slack $\epsilon = e (\frac{d-1}{d})^{d-1} - 1 = \Omega(1/d)$.  Thus $\mu(B) \leq x(B)/(1-x(B)) = 1/d$, and so $W \leq m/d$. We also may observe that $m \leq n d$. So, by Theorem~\ref{a1Aconc-thm}, the total number of resamplings is $O(n + d \log^2 n)$ whp.
\end{proof}

To finish this proof, we need to show the following simple combinatorial bound:
\begin{proposition}
\label{a1Atech-prop1}
Suppose that $A$ is a finite set and suppose that for each integer $j$ there is a set $I_j \subseteq A$ with $|I_j| \leq h$. Define
$$
f(A, s, I) = \sum_{i_1 \in A} \sum_{\substack{i_2 < i_1\\ i_2 \in A - I_{i_1}}} \sum_{\substack{i_3 < i_2\\ i_3 \in A - I_{i_1} - I_{i_2}}} \dots \sum_{\substack{i_s < i_{s-1}\\ i_s \in A - I_{i_1} - I_{i_2} - \dots - I_{i_{s-1}}}} 1
$$

Then we have
$$
f(A, s, I) \geq \binom{|A|-(s-1) h}{s}
$$
\end{proposition}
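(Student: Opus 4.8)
The plan is to prove the bound by induction on $s$, peeling off the outermost index $i_1$. Throughout I use the convention $\binom{n}{r}=0$ whenever $n<r$ (in particular for $n<0$), so that the claimed inequality is vacuous whenever $|A|-(s-1)h<s$; hence I may assume $|A|\geq (s-1)h+s$. The base case $s=1$ is immediate, since $f(A,1,I)=\sum_{i_1\in A}1=|A|=\binom{|A|}{1}$.

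For the inductive step I fix $i_1\in A$ and collapse the remaining $s-1$ nested sums. Writing $A_{<i_1}=\{i\in A: i<i_1\}$ and $A'=A_{<i_1}\setminus I_{i_1}$, the key observation is that the contribution of this choice of $i_1$ is exactly $f(A',s-1,I)$: every later index $i_j$ already satisfies $i_j<i_1$, so pre-removing $I_{i_1}$ and restricting to $A_{<i_1}$ reproduces precisely the constraints on $i_2,\dots,i_s$ with one fewer nesting level, and the family $I$ still obeys $|I_j|\leq h$. Verifying this recursive identity cleanly is the one step that needs a little care; everything after it is bookkeeping. Applying the induction hypothesis gives $f(A',s-1,I)\geq\binom{|A'|-(s-2)h}{s-1}$, and since $|A'|\geq |A_{<i_1}|-|I_{i_1}|\geq |A_{<i_1}|-h$ and $\binom{n}{s-1}$ is nondecreasing in $n$, this is at least $\binom{|A_{<i_1}|-(s-1)h}{s-1}$.

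Finally I would enumerate $A=\{a_1>a_2>\cdots>a_N\}$ with $N=|A|$, so that $|A_{<a_k}|=N-k$, and sum over $i_1=a_1,\dots,a_N$ to get $f(A,s,I)\geq\sum_{k=1}^{N}\binom{N-k-(s-1)h}{s-1}$. Setting $m=N-(s-1)h$ and substituting $j=m-k$ turns the right-hand side into $\sum_{j=m-N}^{m-1}\binom{j}{s-1}$; since $m-N=-(s-1)h\leq 0$ and all terms with $j<s-1$ vanish by the convention, this equals $\sum_{j=s-1}^{m-1}\binom{j}{s-1}$, which by the hockey-stick identity is $\binom{m}{s}=\binom{|A|-(s-1)h}{s}$. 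This closes the induction. I do not expect any real obstacle here: the argument is a routine induction, and the only genuinely content-bearing point is the recursive decomposition of $f$ after fixing $i_1$, together with keeping the index ranges straight in the closing hockey-stick sum.
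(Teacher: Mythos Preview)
Your proof is correct and follows essentially the same approach as the paper: induction on $s$, peeling off the outermost index $i_1$ to reduce to $f(A',s-1,I)$ on $A'=A_{<i_1}\setminus I_{i_1}$, bounding $|A'|\geq |A_{<i_1}|-h$, and closing with the hockey-stick identity. The only difference is cosmetic---you enumerate $A$ in decreasing order while the paper enumerates it in increasing order---so the two arguments are effectively identical.
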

\begin{proof}
Set $t = |A|$. We prove this by induction on $s$. When $s = 1$ then $f(A, 1, I) = \sum_{i_1 \in A} 1 = t = \binom{t - (1-1) h}{1}$ as claimed.

We move to the induction step $s > 1$.  Suppose that $A = \{a_1, \dots, a_t \}$, and suppose that we select the value $i_1 = a_j$.  Then the remaining sum over $i_2, \dots, i_s$ is equal to $f(A'_j, s-1, I)$, where $A'_j = \{a_1, \dots, a_{j-1} \} - I_{i_1}$. Summing over all $j = 1, \dots, t$ gives:
{\allowdisplaybreaks
\begin{align*}
f(A, s, I) &= \sum_{j=1}^t f(A'_j, s-1, I) \geq \sum_{j=(s-1) h + s}^t  f(A'_j, s-1, I) \\
&\geq \sum_{j=(s-1) (h+1)}^t  \binom{ |A'_j|-(s-2) h}{s-1} \qquad \text{by inductive hypothesis} \\
&\geq \sum_{j=(s-1) (h+1)}^t  \binom{ (j-1-h)-(s-2) h}{s-1} \qquad \text{as $|A'_j| \geq j - 1 - h$} \\
&= \sum_{j=s-1}^{t-1 - h (s - 1)} \binom{j}{s-1} = \binom{t-(s-1) h}{s}
\end{align*}
}
and the induction is proved.
\end{proof}

\section{Acknowledgments}
Thanks to Aravind Srinivasan and Navin Goyal, for helpful discussions and comments. Thanks to the anonymous reviewers, for many helpful comments and corrections.

\end{document}